\documentclass[reqno,11pt]{amsart}
\usepackage[foot]{amsaddr}
\usepackage{amssymb}
\usepackage{caption}
\usepackage{graphicx}
\usepackage{subcaption}
\usepackage{threeparttable}
\usepackage{mathrsfs}
\usepackage{float}
\usepackage{dsfont}
\usepackage{amsmath}
\usepackage{diagbox}
\usepackage{booktabs}
\usepackage{array}
\usepackage{multirow}
\usepackage{lineno}
\usepackage{epsfig}
\usepackage{makecell}
\usepackage{changepage}
\usepackage[figuresright]{rotating}
\usepackage[colorlinks=true]{hyperref}
\usepackage{setspace}
\hypersetup{urlcolor=red, citecolor=blue}
\topmargin=-0.20in
\textheight=24cm
\textwidth=16.5cm
\oddsidemargin=-0.1in
\evensidemargin=-0.1in
\usepackage{cite}
\usepackage{enumitem}
\vfuzz2pt 
\hfuzz2pt 
\newtheorem{thm}{Theorem}[section]

\newtheorem{lem}[thm]{Lemma}
\newtheorem{prop}[thm]{Proposition}
\theoremstyle{definition}

\newtheorem{rem}{Remark}[section]
\numberwithin{equation}{section}
\newcommand{\xkh}[1]{\left(#1\right)}
\newcommand{\zkh}[1]{\left[#1\right]}
\newcommand{\dkh}[1]{\left\{#1\right\}}
\newcommand{\xiyu}{\leq& }

\newcommand{\deyu}{=& }
\newcommand{\non}{\nonumber\\}

\title[Predator-mediated apparent competition]{Effects and biological consequences of the predator-mediated apparent competition I: ODE models}

\begin{document}

\begin{abstract}
Predator-mediated apparent competition is an indirect negative interaction between two prey species mediated by a shared predator, which can lead to changes in population dynamics, competition outcomes and community structures. This paper is devoted to investigating the effects and biological consequences of the predator-mediated apparent competition based on a two prey species (one is native and the other is invasive) and one predator model with Holling type I and  II functional response functions.  Through the analytical results and case studies alongside numerical simulations, we find that the initial mass of the invasive prey species, capture rates of prey species, and the predator's mortality rate are all important factors determining the success/failure of invasions and the species coexistence/extinction. The global dynamics can be completely classified for the Holling type I functional response function, but can only be partially determined for the Holling type II functional response function. For the Holling type I response function, we find that whether the invasive prey species can successfully invade to promote the predator-mediated apparent competition is entirely determined by the capture rates of prey species. If the Holling type II response function is applied, then the dynamics are more complicated. First, if two prey species have the same ecological characteristics, then the initial mass of the invasive prey species is the key factor determining the success/failure of the invasion and hence the effect of the predator-mediated apparent competition. Whereas if two prey species have different ecological characteristics, say different capture rates, then the success of the invasion no longer depends on the initial mass of the invasive prey species, but on the capture rates. In all cases, if the invasion succeeds, then the predator-mediated apparent competition's effectiveness essentially depends on the predator's mortality rate. Precisely we show that the native prey species will die out  (resp. persist)  if the predator has a low (resp. moderate) mortality rate, while the predator will go extinct if it has a large mortality rate.  Our study reveals that predator-mediated apparent competition is a complicated ecological process, and its effects and biological consequences depend upon many possible factors.
\end{abstract}

\renewcommand{\thefootnote}{\fnsymbol{footnote}}
\author[Y. Lou, W. Tao and Z.-A. Wang]{Yuan Lou$^{\dagger}$, Weirun Tao$^{\ddagger}$, Zhi-An Wang$^{\S}$}
\email{\rm yuanlou@sjtu.edu.cn (Y. Lou), taoweiruncn@163.com (W. Tao), mawza@polyu.edu.hk (Z.-A. Wang)}
\subjclass[2020]{34D05, 34D23, 92-10, 92D25}
\keywords{Apparent competition, invasion, functional response function, global stability, coexistence and extinction}

\footnotetext[2]{School of Mathematical Sciences, Shanghai Jiao Tong University, Shanghai 200240, China}
\footnotetext[3]{School of Mathematics, Southeast University, Nanjing 211189, China}
\footnotetext[4]{Department of Applied Mathematics, The Hong Kong Polytechnic University, Hung Hom, Kowloon, Hong Kong}
	
\maketitle

\section{Introduction}
Predation is a primary determinant of the structure and function of ecological systems for maintaining biological diversity and balance (cf. \cite{HP1997AN0,S2007E0}). This sounds like a paradoxical statement, as predators kill and consume prey, therefore seeming to cause death, not life. Indeed by doing so, predators may keep other species (like damaging pests) in check and ensure that a multitude of species occupying a variety of environmental niches can survive and thrive. For instance, without the regulation of predators, prey populations may reproduce beyond the carrying capacity of their environments, decimating the populations of smaller animals, plants, and coral reefs (see \cite{url1}). As these species decline, additional organisms that rely on their presence will also decline, resulting in a domino effect that can ultimately push populations and habitats beyond the threshold of recovery. Predators can impact the ecosystem in enormously different ways, and hence gaining a comprehensive understanding of predators' role in ecosystems is a daunting task. Nevertheless, theoretical models alongside analysis can play important parts in interpreting observed patterns/phenomena and making qualitative predictions, and in particular could pinpoint which processes, interactions, or parameter values are responsible for observed behaviors. Competition occurs at the same trophic level, while predation happens between different trophic levels. Though competition and predation can be intertwined directly or indirectly, these two ecological processes are often investigated separately in the existing research.

For the modeling of direct interspecific competition, the growth rate of each species population is described by a first-order differential equation
$$
\frac{dN_i}{dt}=F_i\left(N_1, N_2, \ldots, N_i, \ldots\right).
$$
The species $i$ and $j$ are competing if $\frac{\partial F_i}{\partial N_j}, \frac{\partial F_j}{\partial N_i}<0$ at equilibrium (cf. \cite{M20010}). Indirect interactions between two organisms are mediated or transmitted by a third one. In particular, there is a special indirect negative interaction, called ``apparent competition" (cf. \cite{H1977TPB0,HB2017AREES0}), that happens between victim species mediated through the action of one or more species of shared natural enemies (e.g., predators, herbivores, omnivores, parasitoids, and pathogens). The apparent competition is usually denoted by $(-,-)$, which means a reciprocal negative interaction between each pair of victim species in the presence of a shared natural enemy. Moreover, there are also other types of enemy-mediated indirect interactions, including apparent mutualism $(+,+)$, apparent predation $(+,-)$, apparent commensalism $(+,0)$ and apparent amensalism $(-,0)$ (cf. \cite{CMT2014PMS0,CB2000O0,HB2017AREES0}  and references therein).

In the predator-prey system with one predator and one prey, the specialist predator cannot generally take the prey to extinction as the predators usually starve to death before they can find the last prey. However, if fueled by a secondary prey (called an invasive prey), the predator may take the native prey species to a lower level. This process is called  {the} predator-mediated apparent competition introduced by Holt \cite{H1977TPB0} where a species indirectly and negatively affects another species that shares the same predator by influencing predator abundance of biomass. It has long been recognized as a widespread phenomenon observed in many ecological communities (cf. \cite{CB2000O0,DHHM2010AC0}). In the experiment of \cite{KHE1994O0}, releases of economically unimportant Willamette mites alone, or releases of predatory mites alone, failed to reduce populations of the damaging Pacific spider mite. However, when both herbivorous Willamette mites and predatory mites were released together, populations of Pacific mites were reduced. In \cite{SKB2018E0}, apparent competition between krill and copepods mediated by capelin in the Barents Sea (see a schematic representation in Fig. 1) was employed to advocate that a krill invasion could affect copepod biomass negatively and result in the decrease of copepod biomass. This process involves both bottom-up and top-down effects, where the bottom-up effect influences communities from lower to higher trophic levels of the food web, and the top-down effect is vice versa. However, apparent competition may be difficult to detect or measure due to its indirect nature and the potential for concurrent exploitative competition or other community effects \cite{SKB2018E0}.
\begin{figure}[!ht] \centering
    \includegraphics[width=0.6\textwidth]{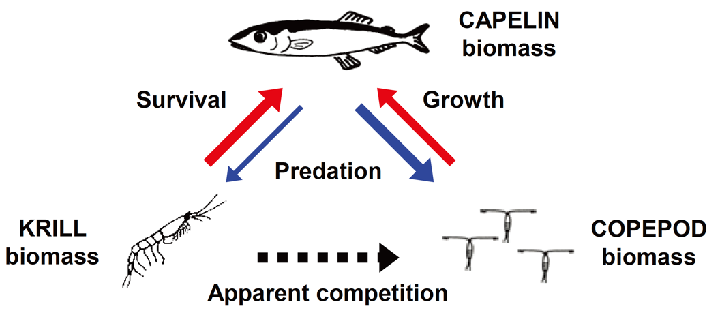}
    \caption{\small Apparent competition between krill and copepods mediated by capelin in the Barents sea. The arrow width is approximately proportional to the strength of the effect size. Bottom-up effects are shown in red, and top-down in blue. (cf. \cite[Fig.1]{SKB2018E0})
    }
\end{figure}

It was pointed out in \cite{HB2017AREES0} that the idea that species can somehow achieve apparent competition by sharing a predator has a venerable history in ecology (cf. \cite{W1957N0} and \cite[pp. 94-95]{L19250}). The mathematical model describing predator-mediated apparent competition was first introduced by Holt \cite{H1977TPB0}, and can be encompassed in the following general model for a single predator species feeding on multiple prey (see also \cite{HB2017AREES0}):
\begin{align}\label{eq1.1}
    \begin{cases}
        \medskip
        \frac{d u_i}{d t}=F_i(\vec{u},w)=u_i\left[g_i\left(u_i\right)-f_i(\vec{u}) w\right],\\
        \frac{d w}{d t}=G(\vec{u},w)=w F(\vec{u}),&\\
    \end{cases}
\end{align}
where $w$ and $u_i$ are densities of the predator and prey species $i$, and the arrow over $u$ denotes a vector of prey abundances, $F_i$ is the total growth rate of prey species $i$ and $G$ is the growth rate of the predator. In the first equation of \eqref{eq1.1}, $g_i(u_i)$ is the inherent per capita growth rate of the prey $i$ in the absence of the predator, $f_i(\vec{u})$ is the functional response function of the predator to prey species $i$ and the quantity $f_i(\vec{u}) w$ is the per capita rate of mortality from predation experienced by prey species $i$. The right-hand side of the second equation of \eqref{eq1.1} states that the per capita growth rate $F(\vec{u})$ of the predator depends on prey availability.  Focusing on {the} predator-mediated apparent competition (i.e. indirect interaction), it is assumed in \eqref{eq1.1} that direct interspecific competition among prey species is negligible.

Though the importance of {the} predator-mediated apparent competition was extensively discussed in various biological literature (see  \cite{CB2000O0,SKB2018E0,KHE1994O0,DHHM2010AC0} and references therein) since the pioneering work \cite{H1977TPB0}, mathematical works are much less compared to those for the predator-prey or competition systems (e.g. see \cite{CC20030,C2014DCDS0,KW2013JMB0,MBN20130,N20110,RC2015DCDS0,STA2003AN0,WZZ2016JMB0} and references therein). Based on \eqref{eq1.1}, in this paper, we consider a specified predator-mediated apparent competition model with two prey species and one shared predator:
\begin{equation}\label{model}
    \begin{cases}
        \medskip u_{t}= u\left(1-u / K_{1}\right)-w f_{1}(u),  \quad &{t>0,}\\
        \medskip v_{t}= v\left(1-v / K_{2}\right)-w f_{2}(v),  \quad &{t>0,}\\
         \medskip w_{t}=w\xkh{\beta_{1} f_{1}(u)+\beta_{2} f_{2}(v)-\theta},  \quad &{t>0,}\\
        {(u,v,w)(0)=(u_0,v_0,w_0)}&
    \end{cases}
\end{equation}
where $u(t), v(t)$ and $w(t)$ represent the densities of native prey species, the invasive prey species, and shared predator species at time $t$, respectively. {The initial data $u_0,v_0,w_0$ are assumed to be positive.} The function $f_i(i=1,2)$ and parameters have the following biological interpretations:
\begin{spacing}{1.1}
    \begin{itemize}
        \item $f_i$, $i=1,2$,  - functional response functions;
        \item $K_i$, $i=1,2$,  - carrying capacities for the prey species;
        \item $\beta_i$, $i=1,2$,  - trophic efficiency (conversion rates);
        \item $\theta$ - mortality rate of the predator.
    \end{itemize}
\end{spacing}

All the parameters shown above are positive. For definiteness, we consider two types of the functional response functions:
\begin{align}
    \medskip
    &f_i(s)=\alpha_i s,\qquad\quad\ i=1,2,\ (\text{Holling type I}),\label{eqh1}\\
    &f_i(s)=\frac {\gamma_i s}{1+\gamma_i h_i s},~\ i=1,2, \ (\text{Holling type II}),\label{eqh2}
\end{align}
where $\alpha_i$ and $\gamma_i$, $i=1,2$, denote the capture rates (i.e. the rates at which prey species are captured), and $h_i>0$, $i=1,2$, represents the handling time.

If the functional response functions are Holling type I (i.e. Lotka-Volterra type) and the direct competition between the two prey species is considered in \eqref{model},  the predator-mediated coexistence and/or extinction have been studied in the literature \cite{V1978AN0,PVM2017SJADS0,MIMURA1986129,H1981MB0,C1978AN0,A1999E0}. However, these works did not consider the case of Holling type II functional response function.  The main determinants of the predator-mediated apparent competition and quantitative effects of the predator-mediated apparent competition were not investigated either. In this paper, we shall consider these questions not studied in the literature.

Since the predator-mediated apparent competition involves an invasion of the secondary prey species (called invasive prey species in the sequel) which is also a food supply to the shared predator, the invasion may not be successful and consequently, the predator-mediated apparent competition will not take effect. Therefore the first aim of this paper is to investigate
\begin{itemize}[leftmargin=20mm]
\item[\textbf{A1}.] Under what conditions, the invasive prey species can successfully invade to promote the predator-mediated apparent competition?
\end{itemize}
If the invasive prey species invade successfully and supply additional food to the predator, then the native prey species will be under more intensive predation pressure, possibly resulting in a population decrease or even extinction.  Hence the second aim of this paper is to address
\begin{itemize}[leftmargin=20mm]
    \item[\textbf{A2}.] Whether the predator-mediate apparent competition could reduce the biomass of the native prey species or even drive the native species to go extinct? If so, what conditions are required, and which processes are the main determinants?
\end{itemize}

By global and local stability analysis alongside numerical simulations, we find that the answers to the above questions depend upon the types of the functional response function. For the Holling type I response function, we find that whether the invasive prey species can successfully invade to launch the predator-mediated apparent competition is entirely determined by the capture rates of native and invasive prey species as well as the predator's mortality rate. Whereas for the Holling type II response function, the dynamics are more complicated. First, if two prey species have the same ecological characteristics (i.e. symmetric apparent competition), then the initial mass of the invasive prey species is the key factor determining the success of the invasion and effectiveness of the predator-mediated apparent competition. However, if two prey species have different ecological characteristics (i.e. asymmetric apparent competition), say different capture rates, then the success of the invasion does not rely on the initial mass of the invasive prey species, but on the capture rates of both prey species. In all cases, given that the invasion succeeds, the effectiveness of the predator-mediated apparent competition essentially depends on the predator’s mortality rate. Precisely, the native prey species will die out (resp. persist) if the predator has a low (resp. moderate) mortality rate, while the predator will go extinct if it has a large mortality rate. Moreover, we find that the global dynamics of \eqref{model} with Holling type I functional response function can be completely determined and there is no non-constant solution (like periodic solution). In contrast, \eqref{model} with Holling type II functional response function can generate complex dynamics including periodic pattern (limit cycle) and bistability. This implies that  Holling type I functional response function is unsuitable to explain the temporal inhomogeneity.


The rest of this paper is organized as follows. In Sect. \ref{sec2}, we state our main mathematical results on the global stability of {the} system \eqref{model} with \eqref{eqh1} and \eqref{eqh2}, and the relevant proofs are given in Sect. \ref{sec3}. In Sect. \ref{sec4}, we focus on the case of Holling type II functional response function and conduct case studies to pinpoint the main factors determining the effects and biological consequence of the predator-mediated apparent competition. In Section \ref{sec5}, we {summarize} our main findings and discuss several open questions.

\section{Global stability results}\label{sec2}
This section outlines our primary mathematical findings. We first introduce some notations used throughout the paper for clarity and brevity and then proceed to state the main results. {Let}
\begin{align*}
    \begin{array}{ll}
        \medskip
        L_i:=\beta_if_i(K_i),\qquad &\lambda_i:=\frac1{\gamma_ih_i},\quad i=1,2,\\
        L :=L_1+L_2,\quad
        &\theta_0:=\max
        \dkh{
        (1-\frac{\alpha_1}{\alpha_2})L_1,
        (1-\frac{\alpha_2}{\alpha_1})L_2
        }.
    \end{array}
\end{align*}
{We denote} the equilibrium of \eqref{model} by $E_s=\xkh{u_s,v_s,w_s}${, which includes extinction equilibria, predator-free} equilibria, {semi-coexistence equilibria and coexistence equilibria} listed in Table \ref{table1}, where the {coexistence equilibrium} $E_*=(u_*,v_*,w_*)$ is obtained by solving \eqref{model} for {$u,v,w>0$}. To differentiate {coexistence equilibria} {for} different functional response functions, we utilize the notation
\begin{align} \nonumber
    E_*=
    \begin{cases}
        \medskip P_*,\qquad &\text{if }\eqref{eqh1}\text{ holds},\\
        Q_*,&\text{if }\eqref{eqh2}\text{ holds}.
    \end{cases}
\end{align}
Moreover, in the case of Holling type I functional response function \eqref{eqh1}, the {coexistence equilibrium} $P_*$ is uniquely given by
\begin{align}\nonumber
    P_*=\xkh{\frac{K_{1} \zkh{(\alpha_2-\alpha_1)L_2+\alpha_1\theta}}{\alpha_1L_1+\alpha_2L_2}, \frac{K_{2} \zkh{(\alpha_1-\alpha_2)L_1+\alpha_2\theta}}{\alpha_1L_1+\alpha_2L_2}, \frac{L-\theta}{\alpha_1L_1+\alpha_2L_2}},
\end{align}
while in the case of Holling type II functional response function \eqref{eqh2}, the {coexistence equilibrium} $Q_*$ may not exist, be unique, or exist but not be unique (see Remark \ref{rem2.1}).

\begin{rem}\label{rem2.1}
    For {the} system \eqref{model} with Holling type II functional response function \eqref{eqh2}, it is difficult to {find} the necessary and sufficient conditions for the existence of $Q_*$ for general system parameters. Note that $0<\theta<L$ is a necessary but not sufficient condition for the existence of $Q_*$. Indeed, the necessity is apparent since it is easy to see that $u_*<K_1$, $v_*<K_2$, and thus
    \begin{align}\nonumber
        \theta=\beta_{1} f_{1}(u_*)+\beta_{2} f_{2}(v_*)<\beta_{1} f_{1}(K_1)+\beta_{2} f_{2}(K_2)=L,
    \end{align}
    where we have used the fact that $f_i(s)$, $i=1,2$, strictly increases with respect to $s>0$. However, if
    \begin{align*}
        \theta=\frac35,\ K_1=2,\ K_2=3,
        \quad\text{and}\quad
        \beta_i=\gamma_i=h_i=1,\ i=1,2,
    \end{align*}
    then {the} system \eqref{model} with \eqref{eqh2} has no any {coexistence equilibrium though} ${0<}\theta<L=\frac{17}{12}$.
\end{rem}

{
\renewcommand{\arraystretch}{1}
\begin{table}[!ht]
    \small
    \caption{Equilibria of {the} system \eqref{model} with \eqref{eqh1} or \eqref{eqh2}.}
    \begin{tabular}
        {c|c|c|c}
    \hline
    \multicolumn{2}{c|}{\makecell{Type of \\equilibria}}
    & Expression of equilibria
    & \makecell{Necessary and\\ sufficient condition}
    \rule{0pt}{22pt}  \\[2ex]
    \hline
    \multicolumn{2}{c|}{\makecell{{Extinction} \\equilibria}}
    &
    $
    E_0=(0,0,0)
    $
    &$\theta>0$
    \rule{0pt}{22pt} \\[2.5ex]
     \hline
     \multicolumn{2}{c|}{\makecell{{Predator-free} \\equilibria}}
    &
    $
    E_u=(K_1,0,0),\ E_v=(0,K_2,0),\ E_{uv}=(K_1,K_2,0)
    $
    &$\theta>0$
    \rule{0pt}{22pt} \\[2.5ex]
     \hline
    {\multirow{4}{*}{
      \rule{0pt}{50pt}\makecell{{Semi-}\\{coexistence}\\equilibria}}}
    &{\multirow{2}{*}{   \rule{0pt}{20pt}\eqref{eqh1}}}
    &     $P_1=\xkh{u_{P_1},0,w_{P_1}}=\left(\frac\theta{\alpha_1\beta_1}, 0,\frac{L_1-\theta}{\alpha_1L_1}\right)$
    &         $0<\theta<L_1$
   \rule{0pt}{17pt}\\[1.5ex]
    & & $  P_2=\xkh{0,v_{P_2},w_{P_2}}=\left(0,\frac\theta{\alpha_2\beta_2},\frac{L_2-\theta}{\alpha_2L_2}\right)$
    &          $0<\theta<L_2$
    \\[1.5ex]
    \cline{2-4}
    &{\multirow{2}{*}{   \rule{0pt}{20pt}\eqref{eqh2}}}
    &  $Q_1=\xkh{u_{Q_1},0,w_{Q_1}}=\xkh{ \frac{\theta  }{(\beta_1 -h_1 \theta)\gamma_1},  0,
    \frac{\beta_1 \xkh{L_1 -\theta}}{ \gamma_1f_1(K_1)(\beta_1 - h_1 \theta )^2}}$
    &         $0<\theta<L_1$
 \rule{0pt}{17pt}\\[1.5ex]
    & & $Q_2=\xkh{0,v_{Q_2},w_{Q_2}}=\xkh{0, \frac{\theta }{ (\beta_2 -  h_2 \theta)\gamma_2},
    \frac{\beta_2  \xkh{L_2-\theta}}{ \gamma_2 f_2(K_2) (\beta_2 -  h_2 \theta)^2}}$
    &          $0<\theta<L_2$
  \\[1.5ex]
    \hline
    {\multirow{2}{*}{
      \rule{0pt}{22pt}\makecell{{Coexistence}\\equilibria}}}
    & \eqref{eqh1}
    & $P_*$
    &   $\theta_0<\theta<L$
        \rule{0pt}{17pt} \\[1ex]
        \cline{2-4}
    & \eqref{eqh2}
    & $Q_*$
    &          Unclear (see Remark \ref{rem2.1})
      \rule{0pt}{17pt} \\[1ex]
    \hline
    \end{tabular}
    \label{table1}
\end{table}
}

Clearly we have $L_1,L_2,L>0$, $0\leq \theta_0<L$ and $\theta_0=0$ if and only if $\alpha_1=\alpha_2$. For the global stability of equilibria of systems \eqref{model}, it is easy to find that the equilibria $E_0$, $E_u $, $E_v $ are saddles for $\theta>0$, and $E_{uv}$ is also a saddle for $\theta\in(0,L)$ (see Lemma \ref{lem4.1}). Therefore, we will focus on analyzing the global stability of the equilibrium $E_{uv}$ for $\theta\geq L$, and the {semi-coexistence}/{coexistence equilibria} for $\theta<L$. Now we can state our main results.

\begin{thm}[Global stability for Holling type I]\label{thm2.1}
Let $f_1(u)$ and $f_2(v)$ be given by \eqref{eqh1}. Then the following global stability results hold for \eqref{model}.
    \begin{itemize}[leftmargin=10mm]
        \item[(i)] If $\alpha_1<\alpha_2$ (resp. $\alpha_1>\alpha_2$) and $\theta\in(0,\theta_0]$, then the {semi-coexistence} equilibrium $P_1$ (resp. $P_2$) is globally asymptotically stable.
        \item[(ii)] If $\theta\in(\theta_0,L )$, then the unique {coexistence equilibrium} $P_*=\xkh{u_*,v_*,w_*}$ of \eqref{model} is globally asymptotically stable.
        \item[(iii)]  If $\theta\geq L$, then the equilibrium $E_{uv}$ is globally asymptotically stable.
    \end{itemize}
\end{thm}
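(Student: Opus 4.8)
The plan is to prove all three parts by the classical Lyapunov--LaSalle method, which is available because with \eqref{eqh1} the predation terms $wf_i$ become bilinear and \eqref{model} reduces to the three-dimensional Lotka--Volterra system
\[
u_t=u\xkh{1-\tfrac{u}{K_1}-\alpha_1 w},\qquad v_t=v\xkh{1-\tfrac{v}{K_2}-\alpha_2 w},\qquad w_t=w\xkh{\beta_1\alpha_1 u+\beta_2\alpha_2 v-\theta}.
\]
Before constructing any Lyapunov function I would record two standard preliminaries. First, the open positive octant and each of its coordinate faces are invariant, so interior solutions remain interior, where the logarithms used below are defined. Second, the system is dissipative: the combination $S:=\beta_1 u+\beta_2 v+w$ has all predation cross terms cancel, giving $S_t=\beta_1 u(1-u/K_1)+\beta_2 v(1-v/K_2)-\theta w$, whence $S_t+\theta S$ is bounded above and $\limsup S<\infty$. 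Consequently every $\omega$-limit set is a compact subset of the closed positive octant and LaSalle's invariance principle applies.

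For the coexistence case (ii) and the predator-free case (iii) the construction uses the Goh--Volterra function $H(x,y):=x-y-y\ln(x/y)\ge 0$. For (iii), where $\theta\ge L$, I would take $V=\beta_1 H(u,K_1)+\beta_2 H(v,K_2)+w$; the decisive point is that the weights $\beta_1,\beta_2,1$ force every predation cross term to cancel against the predator-growth terms, leaving, after using $L=\beta_1\alpha_1 K_1+\beta_2\alpha_2 K_2$,
\[
\dot V=-\frac{\beta_1}{K_1}(u-K_1)^2-\frac{\beta_2}{K_2}(v-K_2)^2+(L-\theta)\,w\le 0.
\]
For (ii), the hypothesis $\theta_0<\theta<L$ is precisely what places $P_*=(u_*,v_*,w_*)$ in the open octant (positivity of $u_*,v_*$ is equivalent to $\theta>\theta_0$ and that of $w_*$ to $\theta<L$), so the full Volterra function $V=\beta_1 H(u,u_*)+\beta_2 H(v,v_*)+H(w,w_*)$ is admissible and, after the same cancellation, yields
\[
\dot V=-\frac{\beta_1}{K_1}(u-u_*)^2-\frac{\beta_2}{K_2}(v-v_*)^2\le 0.
\]

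For case (i), say $\alpha_1<\alpha_2$, so $\theta_0=(1-\alpha_1/\alpha_2)L_1$ and $P_1=(u_{P_1},0,w_{P_1})$ with $\beta_1\alpha_1 u_{P_1}=\theta$ and $w_{P_1}=(L_1-\theta)/(\alpha_1 L_1)$. Since here $v\to 0$, I would replace the $v$-logarithm by a linear term and set $V=\beta_1 H(u,u_{P_1})+\beta_2 v+H(w,w_{P_1})$. The $u$- and $w$-cross terms cancel as before, and the residual $v$-terms collapse to
\[
\dot V=-\frac{\beta_1}{K_1}(u-u_{P_1})^2+\beta_2 v\Bigl(1-\alpha_2 w_{P_1}-\frac{v}{K_2}\Bigr).
\]
The key identity is that $1-\alpha_2 w_{P_1}=1-(\alpha_2/\alpha_1)(1-\theta/L_1)\le 0$ holds \emph{exactly} when $\theta\le(1-\alpha_1/\alpha_2)L_1=\theta_0$; thus on the stated range $\theta\in(0,\theta_0]$ the bracket is $\le -v/K_2\le 0$ and $\dot V\le 0$. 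The symmetric case $\alpha_1>\alpha_2$, yielding $P_2$, follows from the $u\leftrightarrow v$ interchange.

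Finally I would close each case with LaSalle's principle: $\dot V=0$ forces the prey component(s) to their equilibrium values, and invariance of $\{\dot V=0\}$ then pins down the remaining component, so the largest invariant subset is the single equilibrium, which is therefore globally asymptotically stable. The main obstacle—and the only genuinely nonroutine step—is identifying the correct Volterra weights $(\beta_1,\beta_2,1)$ that make the indefinite predation cross terms cancel, together with the exact matching in (i) between the sign of the residual linear-in-$v$ coefficient and the threshold $\theta_0$. The boundary values $\theta=\theta_0$ and $\theta=L$ require slightly more care, since there $\dot V$ is only negative semidefinite, so one must invoke LaSalle rather than strict negativity to exclude nontrivial invariant sets on $\{\dot V=0\}$.
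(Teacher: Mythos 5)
Your proposal is correct and follows essentially the same route as the paper: the same Goh--Volterra Lyapunov functions with weights $(\beta_1,\beta_2,1)$ (the paper's $\Gamma_i=\beta_i$ in the Holling type I case), the same degeneration of the $v$-term to $\beta_2 v$ at the semi-coexistence equilibrium, the same identity $\alpha_2 w_{P_1}\ge 1\iff\theta\le\theta_0$, and the same closure via LaSalle's invariance principle. Your explicit $(L-\theta)w$ remainder in case (iii) and the verification that $\theta_0<\theta<L$ is exactly the positivity condition for $P_*$ match the paper's computations.
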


\begin{thm}[Global stability for Holling type II]\label{thm2.2}
    Let $f_1(u)$ and $f_2(v)$ be given by \eqref{eqh2}. Then the following global stability results hold for \eqref{model}.
        \begin{itemize}[leftmargin=10mm]
            \item[(i)] Let $\theta\in(0,L_1)$. Then the {semi-coexistence} equilibrium $Q_1$ is globally asymptotically stable if
            \begin{align}\label{eq2.1}
                (K_1,K_2)\in\Lambda_1:=\dkh{(K_1,K_2)\ \bigg|\ K_1\leq\lambda_1+u_{Q_1},\ \frac{K_2}{f_2(K_2)}\leq w_{Q_1}},
            \end{align}
            where ``='' in $\frac{K_2}{f_2(K_2)}\leq w_{Q_1}$ holds only in the case of $v_0\leq K_2$.
            \item[(ii)] Let $\theta\in(0,L_2)$. Then the {semi-coexistence} equilibrium $Q_2$ is globally asymptotically stable if
            \begin{align}\label{eq2.2}
                (K_1,K_2)\in\Lambda_2:=\dkh{(K_1,K_2)\ \bigg|\ K_2\leq\lambda_2+v_{Q_2},\ \frac{K_1}{f_1(K_1)}\leq w_{Q_2}},
            \end{align}
            where ``='' in $\frac{K_1}{f_1(K_1)}\leq w_{Q_2}$ holds only in the case of $u_0\leq K_1$.
            \item[(iii)] Let $\theta\in(0,L)$ and coexistence equilibrium $Q_*=\xkh{u_*,v_*,w_*}$ exist. Then $Q_*$ is globally asymptotically stable if
            \begin{align}\label{eq2.3}
                (K_1,K_2)\in\Lambda_*:=\dkh{(K_1,K_2)\ \bigg|\ K_1\leq \lambda_1+u_*,\ K_2\leq \lambda_2+v_*}.
            \end{align}
            \item[(iv)] Let $\theta\geq L$. Then the equilibrium $E_{uv}$ is globally asymptotically stable.
        \end{itemize}
\end{thm}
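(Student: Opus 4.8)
The plan is to establish all four parts with a single Volterra-type Lyapunov functional adapted to the Holling~II kinetics. Writing $\psi_i(s):=s(1-s/K_i)/f_i(s)$, the prey equations become $\dot u=f_1(u)(\psi_1(u)-w)$ and $\dot v=f_2(v)(\psi_2(v)-w)$, and a short computation gives $\psi_i(s)=(1-s/K_i)(1+\gamma_ih_is)/\gamma_i$, a downward parabola with vertex at $s=(K_i-\lambda_i)/2$; this geometric fact is the engine behind the regions $\Lambda_1,\Lambda_2,\Lambda_*$. For the coexistence case (iii) I would use
\[
V=\beta_1\int_{u_*}^{u}\frac{f_1(s)-f_1(u_*)}{f_1(s)}\,ds+\beta_2\int_{v_*}^{v}\frac{f_2(s)-f_2(v_*)}{f_2(s)}\,ds+\Big(w-w_*-w_*\ln\frac{w}{w_*}\Big).
\]
Using the equilibrium identities $w_*=\psi_1(u_*)=\psi_2(v_*)$ and $\beta_1f_1(u_*)+\beta_2f_2(v_*)=\theta$, differentiation along trajectories makes the indefinite terms in $(w-w_*)$ cancel, leaving
\[
\dot V=\beta_1\big(f_1(u)-f_1(u_*)\big)\big(\psi_1(u)-\psi_1(u_*)\big)+\beta_2\big(f_2(v)-f_2(v_*)\big)\big(\psi_2(v)-\psi_2(v_*)\big).
\]
Since each $f_i$ is strictly increasing, $\dot V\le0$ follows once $(s-s_*)(\psi_i(s)-\psi_i(s_*))\le0$ on $(0,\infty)$; as $\psi_i$ is a downward parabola, this holds precisely when the point reflected through the vertex, $(K_i-\lambda_i)-s_*$, is nonpositive, i.e. $K_i\le\lambda_i+s_*$ --- exactly $(K_1,K_2)\in\Lambda_*$. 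LaSalle's invariance principle then closes the argument: on the zero set of $\dot V$ one has $u\equiv u_*$, $v\equiv v_*$, and the $u$-equation forces $w\equiv w_*$.

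For the semi-coexistence parts I replace the integral of the vanishing prey by a linear term; for (i) I take
\[
V=\beta_1\int_{u_{Q_1}}^{u}\frac{f_1(s)-f_1(u_{Q_1})}{f_1(s)}\,ds+\beta_2 v+\Big(w-w_{Q_1}-w_{Q_1}\ln\frac{w}{w_{Q_1}}\Big),
\]
and the same cancellation (now with $\theta=\beta_1f_1(u_{Q_1})$) yields $\dot V=\beta_1(f_1(u)-f_1(u_{Q_1}))(\psi_1(u)-\psi_1(u_{Q_1}))+\beta_2 f_2(v)(\psi_2(v)-w_{Q_1})$. The first term is handled as above under $K_1\le\lambda_1+u_{Q_1}$. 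For the second I would factor $\psi_2(v)=\tfrac{v}{f_2(v)}(1-v/K_2)$ and use that $v/f_2(v)$ is increasing, so $\psi_2(v)\le K_2/f_2(K_2)\le w_{Q_1}$ for $0<v\le K_2$, while $\psi_2(v)<0\le w_{Q_1}$ for $v>K_2$; this is precisely $(K_1,K_2)\in\Lambda_1$, and part (ii) is symmetric. The caveat that equality in $K_2/f_2(K_2)\le w_{Q_1}$ is admissible only when $v_0\le K_2$ is exactly the point where $v/f_2(v)\le K_2/f_2(K_2)$ requires $v\le K_2$; I would therefore first show $\limsup_{t\to\infty}v(t)\le K_2$ from $\dot v\le v(1-v/K_2)$ and run LaSalle on the forward-invariant set $\{v\le K_2\}$.

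Part (iv) is the degenerate case $u_*=K_1$, $v_*=K_2$ with a linear predator term:
\[
V=\beta_1\int_{K_1}^{u}\frac{f_1(s)-f_1(K_1)}{f_1(s)}\,ds+\beta_2\int_{K_2}^{v}\frac{f_2(s)-f_2(K_2)}{f_2(s)}\,ds+w.
\]
Here $\psi_i(K_i)=0$ with $\psi_i>0$ on $(0,K_i)$ and $\psi_i<0$ on $(K_i,\infty)$, so the two monotonicity terms are nonpositive with no extra hypothesis, and the predator contributions collapse to $w(L-\theta)\le0$ because $\theta\ge L$; LaSalle again gives global convergence to $E_{uv}$.

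The main obstacle is the sign analysis of the quadratic ``monotonicity'' products $(f_i(u)-f_i(u_*))(\psi_i(u)-\psi_i(u_*))$ and of $f_2(v)(\psi_2(v)-w_{Q_1})$, and showing that the abstract requirement $\dot V\le0$ is encoded \emph{exactly} by the geometric regions $\Lambda_1,\Lambda_2,\Lambda_*$; the reflection identity $K_i\le\lambda_i+s_*$ is the bridge between the two. The remaining difficulties are genuinely technical: justifying LaSalle on the non-compact boundary faces $\{v=0\}$ and $\{w=0\}$, which requires a priori boundedness and forward invariance of a suitable region, together with the careful treatment of the equality cases noted above.
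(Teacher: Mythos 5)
Your proposal is correct and is essentially the paper's own proof: for Holling type II your integrals $\beta_i\int_{s_*}^{s}\frac{f_i(\sigma)-f_i(s_*)}{f_i(\sigma)}\,d\sigma$ evaluate exactly to $\frac{\beta_i}{1+\gamma_ih_is_*}\left(s-s_*-s_*\ln\frac{s}{s_*}\right)$, i.e.\ the paper's weighted Volterra functional $\mathcal E(t;E_s)$ with weights $\Gamma_i=\beta_if_i(s_*)/(\gamma_is_*)$, and your sign analysis of $(s-s_*)(\psi_i(s)-\psi_i(s_*))$ via the downward parabola reproduces verbatim the paper's factor $\lambda_i+s_*-K_i+s\ge0$ behind the regions $\Lambda_1,\Lambda_2,\Lambda_*$. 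The LaSalle closing step and the treatment of the equality case in (i)--(ii) via $\limsup_{t\to\infty}v(t)\le K_2$ likewise match Lemmas \ref{lem3.2}, \ref{lem3.5} and \ref{lem3.6}.
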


\begin{rem}\label{rem2.2}
We note that the  sets $\Lambda_1$, $\Lambda_2$ and ${\Lambda_*}$ given in \eqref{eq2.1}-\eqref{eq2.3} are mutually disjoint. See Appendix A for the detailed {proof}.
\end{rem}

\begin{rem}
    In view of Theorem \ref{thm2.1}, the global stability of {the} system \eqref{model} with Holling type I functional response function \eqref{eqh1} can be completely classified, as summarized in Table \ref{table2}. However, for Holling type II functional response function \eqref{eqh2}, there are some gaps (see Table \ref{table3}) for $0<\theta<L$ left for the global stability.
\end{rem}
{
\renewcommand{\arraystretch}{1.5}
\begin{table}[!ht]
    \caption{\small Global stability of equilibria of {the} system \eqref{model} with \eqref{eqh1}.}
    \begin{tabular}
        {m{3.5cm}<{\centering}|m{3.5cm}<{\centering}|m{3cm}<{\centering}|m{3cm}<{\centering}}
        \hline
        &$\theta\in(0,\theta_0]$
        &$\theta\in(\theta_0,L)$
        &$\theta\in[L,\infty)$\\
        \hline
        $\alpha_1>\alpha_2$& $P_2$ is {GAS}
        & {\multirow{3}{*}{$P_*$ is {GAS}}}
        & {\multirow{3}{*}{$E_{uv}$ is {GAS}}}\\
        \cline{1-2}
        $\alpha_1<\alpha_2$& $P_1$ is {GAS} & &\\
        \cline{1-2}
        \multicolumn{2}{c|}{$\alpha_1=\alpha_2\ (\Longleftrightarrow \theta_0=0)$}
        &
        &
        \\
        \hline
    \end{tabular}
    \label{table2}
    \begin{adjustwidth}{11mm}{17mm}
        \begin{tablenotes}
            \footnotesize
            \item {Note: {Here the notations ``{GAS}" and ``$\Longleftrightarrow$'' denote “globally asymptotically stable” and ``if and only if'', respectively.}}
        \end{tablenotes}
    \end{adjustwidth}
\end{table}
}

{
\renewcommand{\arraystretch}{1.5}
\begin{table}[!ht]
    \caption{ Global stability of equilibria of {the} system \eqref{model} with \eqref{eqh2}.}
    \begin{tabular}
        {m{4cm}<{\centering}|m{3cm}<{\centering}|m{3cm}<{\centering}|m{3cm}<{\centering}}
        \hline
        $i\in\dkh{1,2}$
        &$\theta\in(0,L_i)$
        &$\theta\in[L_i,L)$
        &$\theta\in[L,\infty)$\\
        \hline
        $(K_1,K_2)\in\Lambda_i$& $Q_i$ is {GAS}
        & Unclear
        & {\multirow{3}{*}{$E_{uv}$ is {GAS}}}\\
        \cline{1-3}
        $(K_1,K_2)\in\Lambda_*$&
        \multicolumn{2}{c|}{$Q_*$ is {GAS}}
        &\\
        \cline{1-3}
        $(K_1,K_2)\not\in\Lambda_1\cup\Lambda_2\cup\Lambda_*$
        &
        \multicolumn{2}{c|}{Unclear}
        &
        \\
        \hline
    \end{tabular}
    \label{table3}
    \begin{adjustwidth}{11mm}{17mm}
        \begin{tablenotes}
            \footnotesize
            \item {Note: {Here the notation ``{GAS}" has the same interpretation as in Table \ref{table2}.}}
        \end{tablenotes}
    \end{adjustwidth}
\end{table}
}
The global stability results asserted in Theorem \ref{thm2.1} and Theorem \ref{thm2.2} will be proved by the Lyapunov function method along with LaSalle's invariant principle. The proofs will be postponed to Section 5.

\vskip3mm
\section{{Numerical} simulations {and biological implications}}\label{sec4}

From Table \ref{table2}, we see that the global stability of solutions to \eqref{model} with \eqref{eqh1} has been completely classified and there is no gap left for the global stability of solutions. In contrast, there are some parameter gaps in which the global dynamics of \eqref{model} with \eqref{eqh2} remain unknown (see Table \ref{table3}). In the following, we shall numerically explore the global dynamics of \eqref{model} with \eqref{eqh2} in these gaps.
It is well known that one predator and one prey models with Holling type II functional response function may have stable time-periodic solutions \cite{C1981SIAMJMA0}. Therefore we anticipate that periodic solutions {may} arise from the system \eqref{model} with Holling type II functional response function. Apart from this, we shall {also} investigate the effect of the predator-mediated apparent competition on population dynamics.

The associated Jacobian matrix of {the} system \eqref{model} at an equilibrium $E_s=(u_s, v_s, w_s)$ is
\begin{align*}
    \mathcal{J}(E_s)
    &=
\left(
\begin{array}{ccc}
    \medskip 1-\frac{2 u_s}{K_1}-w_s f_1'(u_s) & 0 & -f_1(u_s) \\
    \medskip 0 & 1-\frac{2 v_s}{K_2}-w_s f_2'(v_s) & -f_2(v_s) \\
    \medskip \beta_1 w_s f_1'(u_s)& \beta_2 w_s f_2'(v_s) & \beta_{1} f_{1}(u_s)+\beta_{2} f_{2}(v_s)-\theta  \\
\end{array}
\right)
\non
&=:
\left(
\begin{array}{ccc}
    \medskip J_{11} & 0 & J_{13} \\
    \medskip 0 & J_{22} & J_{23} \\
 J_{31}& J_{32}&J_{33}  \\
\end{array}
\right).
\end{align*}
We denote the three eigenvalues of $\mathcal{J}(E_s)$ by $\rho_1$, {$\rho_-$ and $\rho_+$}, which are the roots of
\begin{align}\label{eq4.1}
    \rho^3+a_2\rho^2+a_1\rho+a_0=0,
\end{align}
where $a_i=a_i(E_s)$, $i=0,1,2$, are given by
\begin{eqnarray}\nonumber
\left\{
\begin{array}{llll}
     a_0:= J_{11} J_{22} J_{33}-J_{11} J_{23} J_{32}-J_{13} J_{22} J_{31},\\
    a_1:=J_{11} J_{22}+J_{11} J_{33}+J_{22} J_{33}-J_{13} J_{31}-J_{23} J_{32},\\
        a_2:=-(J_{11}+J_{22}+J_{33}).
\end{array}
\right.
\end{eqnarray}
It follows from the Routh-Hurwitz criterion (cf. \cite[Appendix B]{M20020}) that all roots of (\ref{eq4.1}) have negative real parts if and only if
\begin{equation}\nonumber
    a_0,a_1,a_2>0 \quad\text{and}\quad a_1a_2-a_0>0.
\end{equation}

Next, we use the above results to study the stability of all equilibria. First from Theorem \ref{thm2.2} (iv) it follows that $E_{uv}$ is globally asymptotically stable for $\theta\geq L$. The following results can also be easily obtained.

\begin{lem}\label{lem4.1}
    The equilibria $E_0$, $E_u $, $E_v $ are saddles for {any} $\theta>0$. The equilibrium $E_{uv}$ is a saddle for $\theta\in(0,L )$, {while} $E_{uv}$ is globally asymptotically stable for $\theta\geq L$.
\end{lem}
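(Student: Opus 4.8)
The plan is to read the eigenvalues of the Jacobian $\mathcal{J}(E_s)$ directly off each of the four predator-free equilibria, exploiting the fact that every one of them has vanishing predator component $w_s=0$. When $w_s=0$, the lower-left entries $J_{31}=\beta_1 w_s f_1'(u_s)$ and $J_{32}=\beta_2 w_s f_2'(v_s)$ both vanish, so $\mathcal{J}(E_s)$ is upper triangular and its eigenvalues are exactly the diagonal entries $J_{11}$, $J_{22}$, $J_{33}$. Moreover, with $w_s=0$ the predation terms $w_s f_i'$ drop out of $J_{11}$ and $J_{22}$, leaving $J_{11}=1-2u_s/K_1$ and $J_{22}=1-2v_s/K_2$; and since $f_i(0)=0$ for both functional responses \eqref{eqh1} and \eqref{eqh2}, the predator entry reduces to $J_{33}=\beta_1 f_1(u_s)+\beta_2 f_2(v_s)-\theta$ evaluated at the relevant boundary point. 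This structural observation is the whole content of the computation.

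First I would evaluate these three diagonal entries at each equilibrium. At $E_0=(0,0,0)$ the eigenvalues are $1,\,1,\,-\theta$; at $E_u=(K_1,0,0)$ they are $-1,\,1,\,L_1-\theta$, using $\beta_1 f_1(K_1)=L_1$ together with $f_2(0)=0$; at $E_v=(0,K_2,0)$ they are $1,\,-1,\,L_2-\theta$ by the symmetric calculation; and at $E_{uv}=(K_1,K_2,0)$ they are $-1,\,-1,\,L-\theta$, where $L=L_1+L_2$. For each of $E_0$, $E_u$, $E_v$ the spectrum contains both a strictly positive and a strictly negative eigenvalue for every $\theta>0$, so these three are saddles. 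For $E_{uv}$ with $\theta\in(0,L)$ the third eigenvalue $L-\theta$ is strictly positive while the other two equal $-1$, giving one unstable and two stable directions, hence a saddle.

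For the regime $\theta\geq L$ there is nothing new to prove: the asserted global asymptotic stability of $E_{uv}$ is precisely Theorem \ref{thm2.2}(iv), so I would simply invoke it. This is also the cleanest way to treat the borderline value $\theta=L$, at which the linearization degenerates (the eigenvalue $L-\theta$ vanishes) and a purely local analysis is inconclusive; citing the global result bypasses this difficulty entirely.

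There is no genuine analytic obstacle here; the only thing worth noticing is the upper-triangular structure forced by $w_s=0$, after which every eigenvalue is explicit. The one point requiring mild care is that the argument should be type-agnostic, using only $f_i(0)=0$ and $\beta_i f_i(K_i)=L_i$, which hold for both \eqref{eqh1} and \eqref{eqh2}, so that a single computation covers both functional responses simultaneously.
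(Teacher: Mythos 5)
Your proposal is correct and follows essentially the same route as the paper: the paper likewise reads off the eigenvalues $(-\theta,1,1)$, $(L_1-\theta,\pm 1)$, $(L_2-\theta,\pm 1)$, $(L-\theta,-1,-1)$ at the four predator-free equilibria by a direct Jacobian computation and cites Theorem \ref{thm2.2}(iv) for the regime $\theta\geq L$. Your explicit observation that $w_s=0$ forces the upper-triangular structure is just a cleaner packaging of the same ``simple calculations'' the paper leaves implicit.
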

\begin{proof}
    With simple calculations, one can easily find that the eigenvalues of $\mathcal{J}$ at the four equilibria $E_0, E_u, E_v, E_{uv}$ are
    \begin{align}\nonumber
        \left\{
            \begin{array}{lll}
                \medskip \rho_1= -\theta,&\rho_\pm = 1,&\text{if }E_s=E_0,\\
                \medskip \rho_1= L_1-\theta,\qquad&\rho_\pm = \pm1,\qquad\qquad&\text{if }E_s=E_u,\\
                \medskip \rho_1= L_2-\theta,&\rho_\pm = \pm1,\quad&\text{if }E_s=E_v,\\
                \rho_1= L -\theta,\quad&\rho_\pm = -1,\quad&\text{if }E_s=E_{uv},
            \end{array}
        \right.
    \end{align}
    which completes the proof.
\end{proof}

We next investigate the stability of semi-coexistence equilibria $Q_1$, $Q_2$, and coexistence equilibria $Q_*$. It turns out that the stability analysis for these equilibria of \eqref{model} with Holling type II functional response function \eqref{eqh2} is too complicated to find explicit stability/instability conditions. for clarity and definiteness,
we assume that the handling time for the two prey species is the same by simply letting $h_1=h_2=1$ for brevity. By \eqref{eqh2}, it holds that
\begin{align}\label{eq4.2}
    f_i(s)=\frac {s}{\frac1{\gamma_i}+ s}=:\frac {s}{\lambda_i+ s},
    \,\quad s\geq0,\ i=1,2.
\end{align}
{In what follows, we} shall use \eqref{eq4.2} instead of \eqref{eqh2} as the Holling type II functional response function to undertake case studies along with numerical simulations. As illustrated in \cite[Figure 1]{HB2017AREES0}, predator-mediated apparent competition among two victim prey species may be symmetric or asymmetric. Hence we shall distinguish these two scenarios in our subsequent analysis.
\begin{itemize}
    \item \textbf{{Symmetric apparent competition}} (two prey species have the same ecological characteristics): Two prey species have the same ecological characteristics, namely the two prey species are different phenotypes of the same species. In this case, we will consider
 $$K_i=K,\ \beta_i=\beta,\ \gamma_i=\gamma,\ h_i=h,\quad i=1,2,$$   where $K,\beta,\gamma$ and $h$ are positive constants.
     \item \textbf{{Asymmetric apparent competition}} (two prey species have different ecological characteristics): Prey species having different ecological characteristics may be dissimilar in many ways, such as the carrying capacity, trophic efficiency, the rate of being captured by the predator (i.e. capture rate), and so on. In this case, we may assume that the two prey species have different values for one parameter and the same values for other parameters.
     \end{itemize}
\subsection{\textbf{{Symmetric apparent competition}}} \label{sbusec4.1}
For definiteness and simplicity of computations, without loss of generality, we take
    \begin{align}
        K_1=K_2=3
        \quad\text{and}\quad
        \beta_1=\beta_2=\lambda_1=\lambda_2=1.\label{eq4.3}
    \end{align}
We deduce from  \eqref{eq4.3} that $L_1=L_2=\frac34$ and $L=\frac32$. In addition to the equilibria $E_0$, $E_u$, $E_v$ and $E_{uv}$ of \eqref{model} which exist for any $\theta>0$, there are two {semi-coexistence} equilibria
\begin{align}\label{eq4.5}
    \begin{cases}
        \medskip
        Q_1=\xkh{\frac{ \theta }{1-\theta},0,\frac{3-4\theta }{3 (1-\theta )^2}},\\
        Q_2=\xkh{0,\frac{ \theta }{1-\theta},\frac{3-4\theta }{3 (1-\theta )^2}},
    \end{cases}
    \quad \ \text{if} \ \theta\in\xkh{0,\frac34}.
\end{align}
With tedious but elementary calculations, one can find that there is no {coexistence equilibrium} if $\theta\geq \frac32$, a unique {coexistence equilibria} ${Q_*^0}$ {exists} if $\theta\in(0,\frac23]\cup[1,\frac32)$ and three {coexistence equilibria} $Q_*^i$ {(${i=0,1,2}$) exist} if $\theta\in(\frac23,1)$, where
\begin{align}\label{eq4.6}
    \begin{cases}
        \medskip
        {Q_*^0}:=\xkh{\frac\theta{2-\theta},\frac\theta{2-\theta},\frac{4 (3-2 \theta )}{3 (2-\theta )^2}},\\
        \medskip
        {Q_*^1}:=\xkh{1+2 \sqrt{\frac{1-\theta }{2-\theta }},1-2 \sqrt{\frac{1-\theta }{2-\theta }},\frac4{3(2-\theta)}},\\
        {Q_*^2}:=\xkh{1-2 \sqrt{\frac{1-\theta }{2-\theta }},1+2 \sqrt{\frac{1-\theta }{2-\theta }},\frac4{3(2-\theta)}}.
    \end{cases}
\end{align}

{
\begin{rem}\label{rem_Thm2.2_apply1}
    In addition to the global stability result for $E_{uv}$ stated in Lemma \ref{lem4.1}, we can also apply Theorem \ref{thm2.2} (iii) to see that ${Q_*^0}$ is globally asymptotically stable for $\theta\in[\frac43,\frac32)$ since $u_*=v_*=\frac\theta{2-\theta}\geq2=K_i-\lambda_i$ ($i=1,2$).
\end{rem}
}

In view of Lemma \ref{lem4.1} and Remark \ref{rem_Thm2.2_apply1}, it remains to consider the stabilities of {semi-coexistence} and {coexistence equilibria} for $\theta\in(0,\frac32)$. We begin with the local stability {of} the {semi-coexistence} equilibria $Q_1$ and $Q_2$ for $\theta\in(0,\frac34)$.

\begin{lem}
    Let \eqref{eq4.3} hold and $\theta\in(0,\frac34)$. {Then $Q_i$ ($i=1,2$) has the following properties.}
    \begin{itemize}[leftmargin=10mm]
        \item If $\theta\in\{\frac12,\frac23\}$, then $Q_i$ ($i=1,2$) is marginally stable, where $\rho_1=-\frac13, \rho_\pm=\pm\frac i{\sqrt6}$ if $\theta=\frac12$, and $\rho_1=0, \rho_\pm=\frac{-2\pm\sqrt{2} i}{9}$ if $\theta=\frac23$.
        \item If $\theta\in(0,\frac12)$, then $Q_i$ {is} a saddle-focus, where $\rho_1<0$, and $\rho_\pm$ are a pair of complex-conjugate eigenvalues with ${\rm Re}(\rho_\pm)>0$ and $\rm Im (\rho_\pm)\neq0$.
        \medskip
        \item If $\theta\in(\frac12,\frac23)$, then $Q_i$ {is} a stable focus-node, where $\rho_1<0$, and $\rho_\pm$ are a pair of complex-conjugate eigenvalues with ${\rm Re}(\rho_\pm)<0$ and $\rm Im (\rho_\pm)\neq0$.
        \medskip
        \item If $\theta\in(\frac23,\theta_1)$, then $Q_i$ {is} a saddle-focus, where $\rho_1>0$, and $\rho_\pm$ are a pair of complex-conjugate eigenvalues with ${\rm Re}(\rho_\pm)<0$ and $\rm Im (\rho_\pm)\neq0$.
        \medskip
        \item If $\theta\in[\theta_1,\frac34)$, then $Q_i$ is a saddle with $\rho_1>0$ and $\rho_\pm <0$.
    \end{itemize}
    Here, $\theta_1\approx0.6793$ is the unique real root of the equation $16 \theta ^3-37 \theta ^2+31 \theta -9=0$ for $\theta\in(0,\frac34)$.
\end{lem}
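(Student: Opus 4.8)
The plan is to analyze $Q_1$ in detail and to obtain the statement for $Q_2$ from symmetry: under \eqref{eq4.3} all prey-specific parameters coincide, so \eqref{model} is invariant under interchanging the roles of $u$ and $v$, and this involution maps $Q_1$ to $Q_2$ while preserving the spectrum of $\mathcal{J}$. Thus it suffices to treat $Q_1=(u_{Q_1},0,w_{Q_1})$ with $u_{Q_1}=\frac{\theta}{1-\theta}$ and $w_{Q_1}=\frac{3-4\theta}{3(1-\theta)^2}$, using $f_i(s)=\frac{s}{1+s}$ and $f_i'(s)=\frac1{(1+s)^2}$.

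First I would evaluate $\mathcal{J}(Q_1)$ from the formula in the text. Two structural facts drive everything. Since $v_s=0$ we have $f_2(0)=0$, hence $J_{23}=0$; and since $u_{Q_1}$ is defined by $\beta_1 f_1(u_{Q_1})=\theta$, the corner entry vanishes, $J_{33}=\beta_1 f_1(u_{Q_1})-\theta=0$. Consequently the second row of $\mathcal{J}(Q_1)-\rho I$ has a single nonzero entry, and expanding the characteristic determinant along it factors the characteristic polynomial as
\[
(J_{22}-\rho)\bigl(\rho^2-J_{11}\rho-J_{13}J_{31}\bigr)=0 ,
\]
so that $\rho_1=J_{22}$ and $\rho_\pm$ are the roots of the quadratic factor.

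Next I would compute the three controlling quantities. A short calculation gives $\rho_1=J_{22}=1-w_{Q_1}=\frac{\theta(3\theta-2)}{3(1-\theta)^2}$, whose sign on $(0,\frac34)$ is that of $3\theta-2$, producing the threshold $\theta=\frac23$ separating $\rho_1<0$, $\rho_1=0$ and $\rho_1>0$. For the quadratic I would record $\rho_++\rho_-=J_{11}=\frac{2\theta(1-2\theta)}{3(1-\theta)}$ and $\rho_+\rho_-=-J_{13}J_{31}=\frac{\theta(3-4\theta)}{3}$. Because $3-4\theta>0$ on $(0,\frac34)$, the product is positive, so the two roots share the common real part $\tfrac12 J_{11}$ whose sign is that of $1-2\theta$, giving the threshold $\theta=\frac12$; whether $\rho_\pm$ are complex or real is decided by the discriminant $\Delta=J_{11}^2-\frac{4\theta(3-4\theta)}{3}$.

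The crux is to locate the sign change of $\Delta$. Clearing denominators, I would show that on $(0,\frac34)$ the quantity $\Delta$ has the same sign as the cubic $p(\theta)=16\theta^3-37\theta^2+31\theta-9$, so that $\Delta=0$ precisely at the root $\theta_1$ of $p$, with $\Delta<0$ (complex $\rho_\pm$) for $\theta<\theta_1$ and $\Delta>0$ (real $\rho_\pm$) for $\theta>\theta_1$. I would confirm that $p$ has a unique root in $(0,\frac34)$ by checking $p(\frac12)<0$, $p(\frac34)>0$ and that $p'(\theta)=48\theta^2-74\theta+31$ is everywhere positive (its discriminant being negative), so $p$ is strictly increasing. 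Ordering the three thresholds $\frac12<\frac23<\theta_1<\frac34$, the five cases then follow by reading off signs: for $\theta>\theta_1$ positivity of the product together with $J_{11}<0$ forces $\rho_\pm<0$, yielding the saddle. Finally I would substitute $\theta=\frac12$ and $\theta=\frac23$ to recover the explicit marginal eigenvalues $\rho_1=-\frac13,\ \rho_\pm=\pm\frac{i}{\sqrt6}$ and $\rho_1=0,\ \rho_\pm=\frac{-2\pm\sqrt2\,i}{9}$. The only genuinely nontrivial step is the identification of $\Delta=0$ with the cubic $p$ and the isolation of its unique root $\theta_1$; the remaining work is routine substitution into the factored characteristic polynomial.
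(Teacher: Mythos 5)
Your proposal is correct: I have checked the key computations ($\rho_1=J_{22}=\frac{\theta(3\theta-2)}{3(1-\theta)^2}$, $\rho_++\rho_-=J_{11}=\frac{2\theta(1-2\theta)}{3(1-\theta)}$, $\rho_+\rho_-=\frac{\theta(3-4\theta)}{3}$, and $\Delta=\frac{4\theta}{9(1-\theta)^2}\left(16\theta^3-37\theta^2+31\theta-9\right)$), the ordering $\frac12<\frac23<\theta_1<\frac34$ follows from $p(\frac23)=-\frac1{27}<0<p(\frac34)=\frac3{16}$ with $p$ strictly increasing, and the symmetry reduction to $Q_1$ is valid under \eqref{eq4.3}. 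The paper itself omits this proof as ``elementary,'' so your argument is exactly the computation the authors had in mind and fills that gap correctly.
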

\begin{proof}
    {We omit the proofs for brevity as they are elementary.}
\end{proof}

We next give the local stability of the {coexistence equilibria}.
\begin{lem}
    Let \eqref{eq4.3} hold and $\theta\in(0,\frac32)$. {Then ${Q_*^0}$ has the following properties.}
    \begin{itemize}[leftmargin=10mm]
        \item If $\theta=1$, then ${Q_*^0}$ is marginally stable with $\rho_1=0$ and $\rho_\pm=\pm\frac i{\sqrt3}$.
        \item If $\theta\in(0,1)$, then $\rho_1>0$, and $\rho_\pm$ are a pair of complex-conjugate eigenvalues with $\rm Re(\rho_\pm)>0$ and $\rm Im (\rho_\pm)\neq0$. Therefore, ${Q_*^0}$ is an unstable focus-node.
        \medskip
        \item If $\theta\in(1,\frac34)$, then $\rho_1<0$, and $\rho_\pm$ are a pair of complex-conjugate eigenvalues with ${\rm Re}(\rho_\pm)<0$ and $\rm Im (\rho_\pm)\neq0$. As a result, ${Q_*^0}$ is a stable focus-node.
        \medskip
        \item If $\theta\in[\frac34,\frac23)$, then ${Q_*^0}$ is globally asymptotically stable.
    \end{itemize}
\end{lem}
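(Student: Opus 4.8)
The plan is to linearize \eqref{model} with \eqref{eq4.2}--\eqref{eq4.3} at $Q_*^0=\xkh{u_*,u_*,w_*}$, where $u_*=\frac{\theta}{2-\theta}$ and $w_*=\frac{4(3-2\theta)}{3(2-\theta)^2}$, and to extract the three eigenvalues by exploiting the symmetry $u_*=v_*$ rather than confronting the full cubic \eqref{eq4.1}. First I would evaluate $\mathcal J(Q_*^0)$: from $f(s)=s/(1+s)$, $f'(s)=1/(1+s)^2$ and $1+u_*=\frac{2}{2-\theta}$ one gets $f(u_*)=\frac{\theta}{2}$, $f'(u_*)=\frac{(2-\theta)^2}{4}$ and $w_*f'(u_*)=\frac{3-2\theta}{3}$. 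Since $Q_*^0$ is a coexistence equilibrium, the relation $f(u_*)+f(v_*)=\theta$ forces $J_{33}=0$, while the remaining entries are $J_{11}=J_{22}=\frac{2\theta(1-\theta)}{3(2-\theta)}$, $J_{13}=J_{23}=-\frac{\theta}{2}$ and $J_{31}=J_{32}=\frac{3-2\theta}{3}$, so that $\mathcal J(Q_*^0)=\left(\begin{smallmatrix}J_{11}&0&J_{13}\\0&J_{11}&J_{13}\\J_{31}&J_{31}&0\end{smallmatrix}\right)$.

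The decisive observation is that this symmetric structure decouples. The vector $(1,-1,0)^{\mathsf T}$ is an eigenvector with eigenvalue $J_{11}$, and the complementary invariant plane spanned by $(1,1,0)^{\mathsf T}$ and $(0,0,1)^{\mathsf T}$ reduces the problem to the $2\times2$ matrix $\left(\begin{smallmatrix}J_{11}&J_{13}\\2J_{31}&0\end{smallmatrix}\right)$; these two subspaces together span the state space, so all three eigenvalues are accounted for. Hence $\rho_1=J_{11}=\frac{2\theta(1-\theta)}{3(2-\theta)}$, whose sign is that of $1-\theta$, giving $\rho_1>0$, $\rho_1=0$, $\rho_1<0$ on $(0,1)$, at $\theta=1$, and on $(1,\frac32)$ respectively. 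The remaining eigenvalues $\rho_\pm$ are the roots of $\rho^2-J_{11}\rho-2J_{13}J_{31}=0$, with sum $J_{11}$ and product $-2J_{13}J_{31}=\frac{\theta(3-2\theta)}{3}>0$ throughout $(0,\frac32)$.

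It then remains to decide whether $\rho_\pm$ are real or complex, which is governed by the discriminant $\Delta:=J_{11}^2+8J_{13}J_{31}=J_{11}^2-\frac{4\theta(3-2\theta)}{3}$. Clearing denominators turns $\Delta<0$ into the polynomial inequality $g(\theta):=-7\theta^3+35\theta^2-61\theta+36>0$, and establishing this on $(0,\frac43)$ is the only genuine obstacle in the proof. I would dispatch it by noting that $g'(\theta)=-21\theta^2+70\theta-61$ has negative discriminant ($70^2-4\cdot21\cdot61=-224<0$), so $g$ is strictly decreasing; since $g(\tfrac43)=\frac{8}{27}>0$ while $g(\tfrac32)=-\frac38<0$, the unique real zero of $g$ lies in $(\frac43,\frac32)$, whence $g>0$ and $\Delta<0$ on all of $(0,\frac43)$. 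Consequently $\rho_\pm$ are genuinely complex there with ${\rm Re}(\rho_\pm)=\frac12 J_{11}$, so ${\rm Re}(\rho_\pm)$ inherits the sign of $\rho_1$; at $\theta=1$ one finds $\rho_\pm=\pm\frac{i}{\sqrt3}$. Combining with the sign table for $\rho_1$ yields the marginal case at $\theta=1$, the unstable focus-node on $(0,1)$, and the stable focus-node on $(1,\frac43)$.

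Finally, for $\theta\in[\frac43,\frac32)$ the linear analysis is superseded by the global result: since $u_*=v_*=\frac{\theta}{2-\theta}\geq 2=K_i-\lambda_i$ precisely when $\theta\geq\frac43$, the point $(K_1,K_2)=(3,3)$ lies in $\Lambda_*$ of \eqref{eq2.3}, so $Q_*^0$ is globally asymptotically stable by Theorem \ref{thm2.2}(iii), exactly as recorded in Remark \ref{rem_Thm2.2_apply1}.
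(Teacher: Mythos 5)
Your proposal is correct, and it supplies a complete argument where the paper gives essentially none: the paper's proof merely declares the eigenvalue computations ``standard and elementary'' and omits them, citing Theorem \ref{thm2.2}(iii) (via Remark \ref{rem_Thm2.2_apply1}) only for the final global-stability item, exactly as you do. Where the paper's Section \ref{sec4} preamble sets up the full cubic \eqref{eq4.1} and the Routh--Hurwitz criterion, you instead exploit the symmetry $u_*=v_*$ to split $\mathbb{R}^3$ into the eigendirection $(1,-1,0)^{\mathsf T}$ (eigenvalue $J_{11}=\frac{2\theta(1-\theta)}{3(2-\theta)}$, whose sign is that of $1-\theta$) and the invariant plane carrying the quadratic $\rho^2-J_{11}\rho-2J_{13}J_{31}=0$; this is cleaner and makes the sign structure transparent, at the cost of applying only to the symmetric equilibrium $Q_*^0$ (it would not help with $Q_*^1,Q_*^2$ in Lemma \ref{lem4.4}). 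I verified the key computations: $J_{11}$, $J_{13}=-\frac{\theta}{2}$, $J_{31}=\frac{3-2\theta}{3}$, $J_{33}=0$, the product of roots $\frac{\theta(3-2\theta)}{3}>0$, and the reduction of $\Delta<0$ to $7\theta^3-35\theta^2+61\theta-36<0$ after multiplying by $\frac{9(2-\theta)^2}{4\theta}>0$; your monotonicity argument ($g'$ has negative discriminant, $g(\frac43)=\frac{8}{27}>0$, $g(\frac32)=-\frac38<0$) correctly places the unique zero in $(\frac43,\frac32)$, so the complex-conjugate character holds on all of $(0,\frac43)$ as needed. You also correctly read the intervals $(1,\frac34)$ and $[\frac34,\frac23)$ in the statement as typographical errors for $(1,\frac43)$ and $[\frac43,\frac32)$, consistent with Table \ref{table4} and Remark \ref{rem_Thm2.2_apply1}.
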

\begin{proof}
    The proofs of the first two conclusions are omitted for brevity since they are standard and elementary. The third conclusion is a direct consequence of Theorem \ref{thm2.2} (iii), see Remark \ref{rem_Thm2.2_apply1}.
\end{proof}
{With some tedious calculations, we also obtain the following result.}
\begin{lem}\label{lem4.4}
    Let \eqref{eq4.3} hold and $\theta\in(\frac23,1)$. Then $\rho_1<0$, and $\rho_\pm$ are a pair of complex-conjugate eigenvalues with ${\rm Re}(\rho_\pm)<0$ and $\rm Im (\rho_\pm)\neq0$. Hence ${Q_*^1}$ and ${Q_*^2}$ are stable focus-nodes.
\end{lem}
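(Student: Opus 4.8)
The plan is to compute the three eigenvalues of $\mathcal{J}(Q_*^1)$ semi-explicitly and feed the coefficients $a_0,a_1,a_2$ into the Routh--Hurwitz criterion for local stability, then separate out the complex-conjugate structure via the discriminant of the characteristic cubic. First I would exploit symmetry: under \eqref{eq4.3} the two prey are interchangeable, so $Q_*^2$ is the image of $Q_*^1$ under swapping the $u$- and $v$-coordinates, and $\mathcal{J}(Q_*^2)$ is conjugate to $\mathcal{J}(Q_*^1)$ by the corresponding permutation matrix. Hence the two equilibria share the same spectrum and it suffices to treat $Q_*^1$. I would then set $r:=\sqrt{(1-\theta)/(2-\theta)}$, so that $\theta\in(\frac23,1)$ corresponds to $r\in(0,\frac12)$, and record from \eqref{eq4.6} the identities $u_*=1+2r$, $v_*=1-2r$, $w_*=\frac{4(1-r^2)}{3}$ together with the key relation $(2-\theta)(1-r^2)=1$; with $f_i(s)=s/(1+s)$ and $f_i'(s)=1/(1+s)^2$ from \eqref{eq4.2} everything below is a rational function of $r$.

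Next I would compute the Jacobian entries. Using $J_{33}=\beta_1 f_1(u_*)+\beta_2 f_2(v_*)-\theta=0$ at the coexistence equilibrium, a short calculation built on $(2-\theta)(1-r^2)=1$ gives $J_{11}=-\frac{2r(1+2r)}{3(1+r)}<0$ and $J_{22}=\frac{2r(1-2r)}{3(1-r)}>0$ (here $r<\frac12$ is used), while the coupling products satisfy $-J_{13}J_{31}=w_*f_1(u_*)f_1'(u_*)>0$ and $-J_{23}J_{32}=w_*f_2(v_*)f_2'(v_*)>0$. Forming $a_2=-(J_{11}+J_{22})$, $a_1=J_{11}J_{22}-J_{13}J_{31}-J_{23}J_{32}$ and $a_0=-\det\mathcal{J}(Q_*^1)$, I expect these to simplify to $a_2=\frac{4r^2}{3(1-r^2)}$, $a_0=\frac{4r^2(1-4r^2)}{9(1-r^2)^2}$ and $a_1=\frac{3-13r^2+14r^4-16r^6}{9(1-r^2)^2}$.

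For the stability half of the statement I would invoke the Routh--Hurwitz criterion quoted after \eqref{eq4.1}. Clearly $a_2>0$, and $a_0>0$ because $r<\frac12$ forces $1-4r^2>0$. The decisive inequality factors cleanly as $a_1a_2-a_0=\frac{8r^4(1+r^2-8r^4)}{27(1-r^2)^3}$, which is positive on $(0,\frac12)$ since $1+x-8x^2>0$ for $x=r^2\in(0,\frac14)$ (the positive root $\frac{1+\sqrt{33}}{16}$ exceeds $\frac14$); the remaining condition $a_1>0$ then follows automatically. This shows all three eigenvalues have negative real part, so $Q_*^1$ is locally asymptotically stable, and in particular the real eigenvalue guaranteed by the odd degree must be the $\rho_1<0$ of the statement.

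It remains to upgrade ``stable'' to ``stable focus-node'', i.e. to show the other two eigenvalues are a genuine complex-conjugate pair with nonzero imaginary part. I would reduce $\rho^3+a_2\rho^2+a_1\rho+a_0$ to a depressed cubic $y^3+py+q$ and show its discriminant is negative, equivalently $4p^3+27q^2>0$, as a rational function of $r$ for all $r\in(0,\frac12)$; this excludes three real roots and forces one real root plus a complex pair, whose real part is negative by the previous paragraph. The hard part will be exactly this discriminant estimate: whereas $a_1a_2-a_0$ factored conveniently, $4p^3+27q^2$ is a much higher-degree polynomial in $r^2$, and certifying its sign on the whole interval $(0,\frac14)$ is the only genuinely laborious step. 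A lighter alternative, should the direct sign analysis prove unwieldy, is to note that $a_1a_2-a_0>0$ strictly rules out purely imaginary eigenvalues for every $\theta\in(\frac23,1)$, verify only that the discriminant does not vanish on the interval (no repeated roots), exhibit a complex pair at a single convenient value such as $\theta=\frac45$ (where $r^2=\frac16$), and conclude that the complex pair persists throughout by continuity of the roots in $\theta$.
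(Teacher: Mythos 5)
The paper gives no proof of Lemma \ref{lem4.4} beyond the phrase ``with some tedious calculations'', so your proposal is in effect supplying the argument the authors omitted, and it is the natural (indeed the only) route: exploit the $u\leftrightarrow v$ symmetry to reduce to $Q_*^1$, linearize, apply Routh--Hurwitz, and settle the focus/node distinction through the discriminant of the characteristic cubic. I checked your algebra and it is correct: with $r=\sqrt{(1-\theta)/(2-\theta)}\in(0,\tfrac12)$ and $(2-\theta)(1-r^2)=1$ one gets $J_{11}=-\tfrac{2r(1+2r)}{3(1+r)}$, $J_{22}=\tfrac{2r(1-2r)}{3(1-r)}$, $J_{33}=0$, and exactly your $a_2=\tfrac{4r^2}{3(1-r^2)}$, $a_0=\tfrac{4r^2(1-4r^2)}{9(1-r^2)^2}$, $a_1=\tfrac{3-13r^2+14r^4-16r^6}{9(1-r^2)^2}$, with $a_1a_2-a_0=\tfrac{8r^4(1+r^2-8r^4)}{27(1-r^2)^3}>0$ since $1+x-8x^2>0$ on $(0,\tfrac14)$; together with $a_0,a_2>0$ this yields local asymptotic stability. (Your convention $a_0=-\det\mathcal J$ is the correct one for $\rho^3+a_2\rho^2+a_1\rho+a_0$; the formula for $a_0$ displayed after \eqref{eq4.1} in the paper has the opposite sign and appears to be a typo.) The one step you leave as a plan is certifying that the discriminant of the cubic is negative for all $r\in(0,\tfrac12)$, which is precisely what separates a stable focus-node from three real negative eigenvalues; your fallback --- show the discriminant never vanishes on the interval, evaluate it at one point (at $\theta=\tfrac45$, i.e.\ $r^2=\tfrac16$, it is $\approx-0.028<0$), and invoke continuity of the roots --- is logically sound, but note that ``nonvanishing on the interval'' is itself a global sign condition on a single-variable polynomial in $r^2$, so it saves little labour over the direct estimate. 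Until one of these two computations is actually carried out, the complex-conjugate-pair half of the statement is a well-motivated claim rather than a proved one; the stability half is complete.
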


{\renewcommand{\arraystretch}{1.5}
\begin{table}[!h]\small
    \caption{The stability of equilibria of system \eqref{model} with \eqref{eq4.3}.}
    \begin{tabular}
        {c|c|c|c|c|c|c|c|c|c|c|c}
        \hline
        \diagbox{\footnotesize Equilibria}{$\theta$}
        & $(0,\frac12)$
        & $\frac12$
        & $(\frac12,\frac23)$
        & $\frac23$
        & $(\frac23,\theta_1)$
        & $[\theta_1,\frac34)$
        & $[\frac34,1)$
        & $1$
        & $(1,\frac43)$
        & $[\frac43,\frac32)$
        & $[\frac32,\infty)$\\
        \hline
        $E_0, E_u, E_v $
        & \multicolumn{11}{l}{\makecell[c]{Saddle}}\\
        \hline
        $E_{uv}$
        & \multicolumn{10}{l|}{\makecell[c]{Saddle}}
        &{GAS}\\
        \hline
        $Q_1$, $Q_2$
        &SF
        &MS
        &S-FN
        &MS
        &SF
        &Saddle
        & \multicolumn{5}{l}{\makecell[c]{/}}\\
        \hline
        ${Q_*^0}$
        &\multicolumn{7}{c|}{U-FN}
        &MS
        &S-FN
        &{GAS}
        &/\\
        \hline
        ${Q_*^1},{Q_*^2}$
        & \multicolumn{4}{l|}{\makecell[c]{/}}
        & \multicolumn{3}{l|}{\makecell[c]{S-FN }}
        & \multicolumn{4}{l}{\makecell[c]{/}} \\
        \hline
    \end{tabular}
    \label{table4}
    \begin{adjustwidth}{0cm}{6mm}
        \begin{tablenotes}
            \footnotesize
            \item {Note: The abbreviations ``MS", ``SF", ``S-FN", and ``U-FN" stand for ``marginally stable", ``saddle-focus", ``stable focus node", and ``unstable focus node", respectively. {The notation ``{GAS}" has the same interpretation as in Table \ref{table2}}. The notation ``/" denotes “equilibria do not exist” and  $\theta_1\approx0.6793$ is given in Lemma \ref{lem4.1}}.
        \end{tablenotes}
    \end{adjustwidth}
\end{table}
}

\begin{figure}[!ht] \centering
    \includegraphics
    [width=1\textwidth,trim=0 0 0 0,clip]{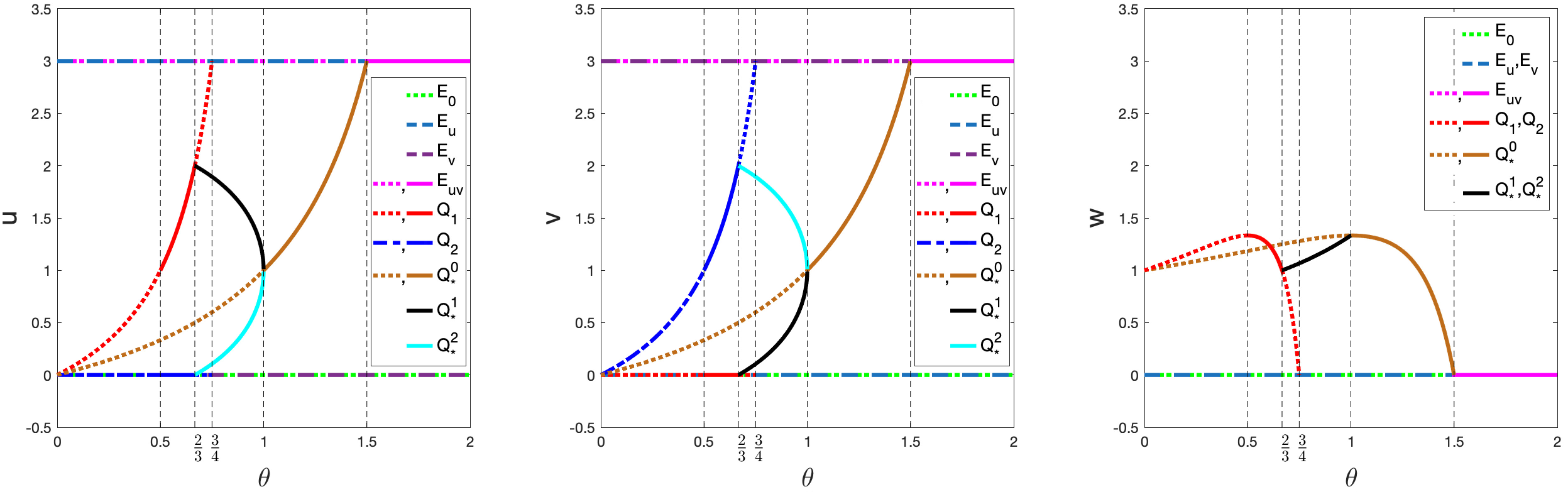}
    \caption{\small
    {Bifurcation diagrams of system \eqref{model} with \eqref{eq4.3} versus $\theta$. The solid curves denote linearly stable equilibria, and other types of curves represent unstable equilibria.}
    }
    \label{fig2}
\end{figure}

\begin{figure}[!ht] \centering
    \includegraphics
    [width=0.87\textwidth,trim=80 0 80 0,clip]{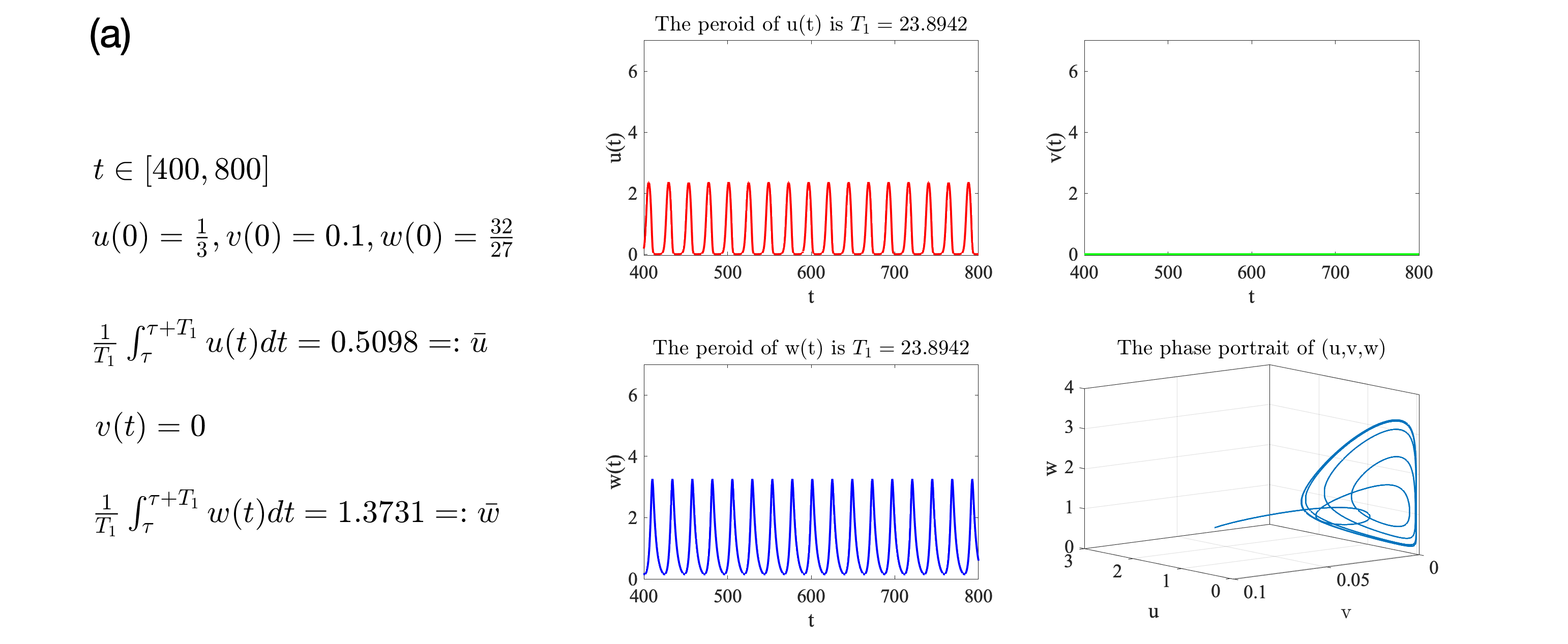}
    \vskip2mm
    \includegraphics
    [width=0.87\textwidth,trim=80 0 80 0,clip]{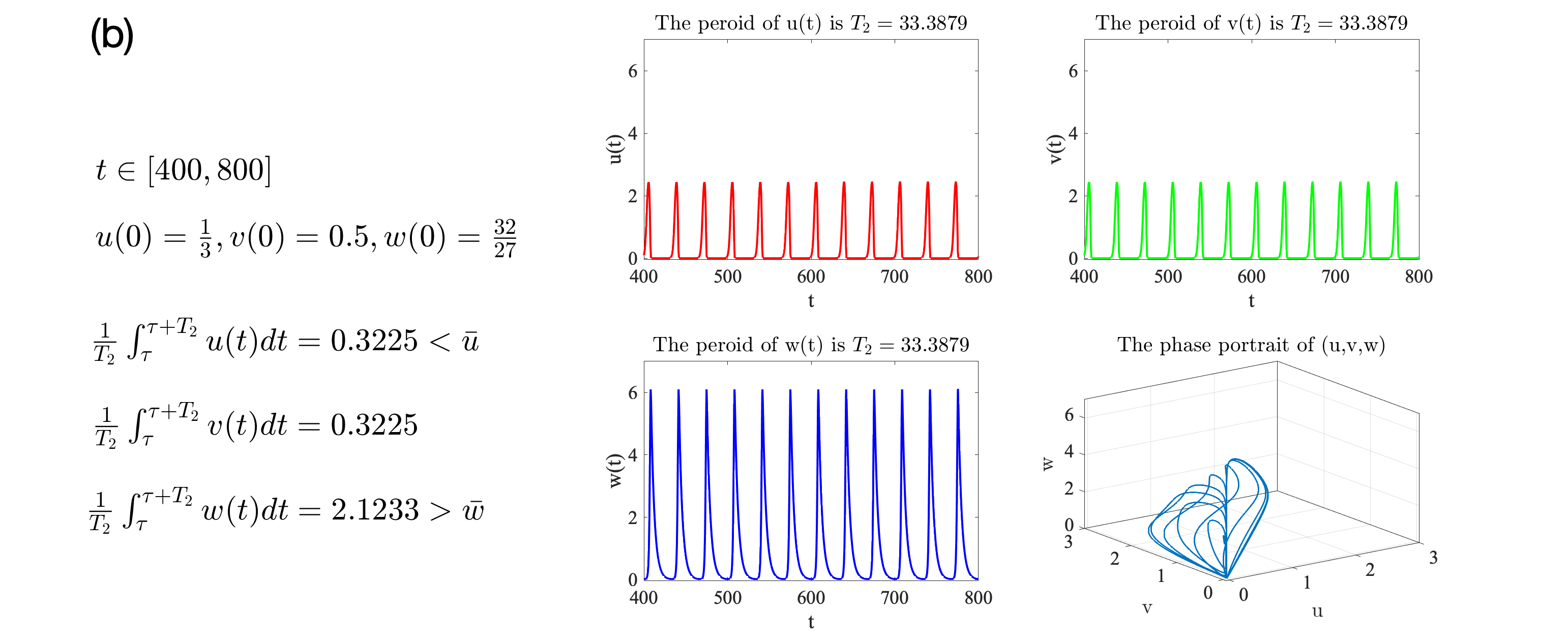}
    \vskip2mm
    \includegraphics
    [width=0.87\textwidth,trim=80 0 80 0,clip]{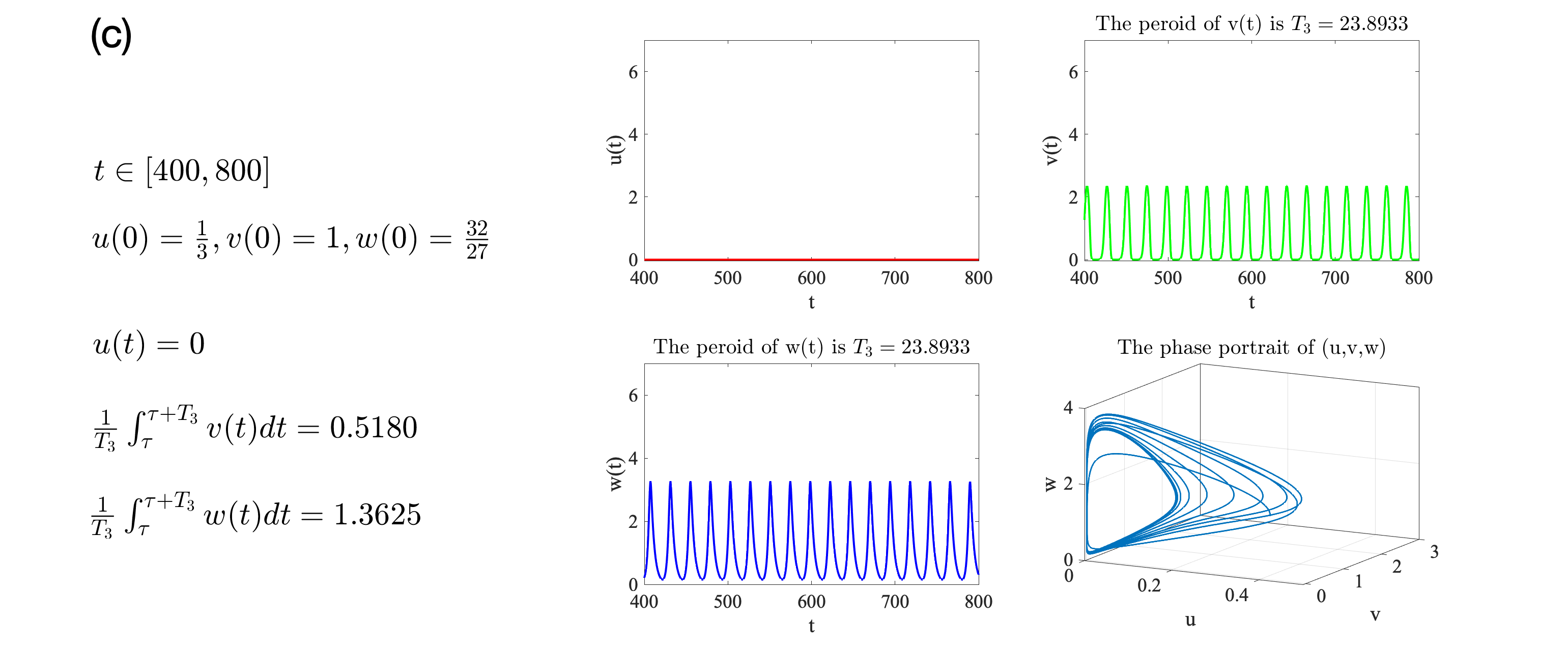}
    \caption{\small Asymptotic dynamics of {the} system \eqref{model} with \eqref{eqh2} under the parameter setting \eqref{eq4.3} and $\theta=\frac14$. The initial data are taken as :
    \text{(a)} $(\frac{1}{3},0.1,\frac{32}{27})$;
    \text{(b)} $(\frac{1}{3},0.5,\frac{32}{27})$;
    \text{(c)} $(\frac{1}{3},1,\frac{32}{27})$.
    }
    \label{fig3}
\end{figure}

With the stability results given in {Lemmas \ref{lem4.1}-\ref{lem4.4}}, we summarize the stability/instability properties of all equilibria in Table \ref{table4}. The bifurcation diagrams of these equilibria are shown in Fig. \ref{fig2}. The results in Table \ref{table4} imply that if the predator's mortality rate $\theta$ is sufficiently large ($\theta\geq \frac{3}{2}$), then the predator will die out and two prey species coexist (i.e. $E_{uv}$ is globally asymptotically stable). If $\theta$ is suitably large (i.e. $\theta \in [\frac{4}{3},\frac{3}{2})$), then the predator coexist with the two prey species (i.e. ${Q_*^0}$ is globally asymptotically stable). However, if $\theta$ is not large (i.e. $0<\theta<\frac{4}{3}$),  the global dynamics largely remain unknown and different outcomes are expected from the local dynamics shown in Table 5. We shall use numerical simulations to foresee the possible global dynamics for $0<\theta<\frac{4}{3}$ and quantify the population size in the next subsection and discuss the underlying biological implications. Our numerical simulations and biological discussion will focus on the questions {\bf A1} and {\bf A2} given in the Introduction. Therefore we consider two classes of initial data. The first class of initial data {is} set as a perturbation of the invasive species free equilibrium $Q_1=(u_{Q_1}, 0, w_{Q_1})$ while keeping $u_{Q_1}$ and $w_{Q_1}$ unchanged, namely $(u_0, v_0,w_0)=(u_{Q_1}, R, w_{Q_1})$ with $R>0$ being a constant. The numerical results for such initial data can address the effect of the invasion of the invasive prey species on the dynamics of the native prey species, and further investigate under what conditions the native prey species is reduced in its population size or annihilated.  The second class of initial {data is} set as a perturbation of coexistence equilibrium $Q_*$, for which the numerical results can address the robustness of the coexistence in the predator-mediated apparent competition.\\

\noindent {\bf Numerical simulations and implications}.
The numerical simulations for $\theta\in(0,\frac43)$  will be divided into three {parts}: $\theta\in (0,\frac12)$, $\theta\in[\frac12,\frac34)$ and $\theta\in[\frac34,\frac43)$, and in each part we take an arbitrary value of $\theta$ to conduct the numerical simulations.

\textbf{Part 1:  $\theta \in (0,\frac12)$}.
We take $\theta=\frac14\in (0,\frac12)$ and focus on the {semi-coexistence} equilibria $Q_1=(\frac13,0,\frac{32}{27})$ given by \eqref{eq4.5} which is unstable (see Table \ref{table4}). The initial value is set as  $(u_0, v_0, w_0)=(\frac13,R,\frac{32}{27})$ with $R>0$ denoting the initial mass of invasive prey species $v$. The numerical results for different values of $R$  are plotted in Fig. \ref{fig3}, where we find three different typical outcomes showing whether the invasion is successful depends on the initial biomass of invasive prey species $v$ if the mortality rate of the predator is suitably small. Specifically, we have the following observations.
\begin{itemize}
    \item[(i)] If the initial mass $v_0$ of the invasive prey species is small (e.g. $v_0=R=0.1$), then the invasive prey species fails to invade and die out while the native prey species coexist with the predator periodically (i.e. the solution asymptotically develops into a periodic solution $(u_1^*(t), 0, w_1^*(t)$   with period $T_1=23.8942$); see Fig. \ref{fig3}\text{(a)}.
    \item[(ii)] If the initial mass $v_0$ of the invasive prey species is medial (e.g. $v_0=R=0.5$),  the invasive species $v$ invades successful and finally coexists with the native prey species $u$ and the predator $w$ periodically (i.e. the solution asymptotically develops into a periodic solution $(u_2^*(t), v_2^*(t), w_2^*(t)$ with period $T_2=33.3879$), but the biomass of the native prey species $u$ is reduced  due to the increase of the predator's biomass, where
    \begin{align*}
    \begin{cases}
        \medskip
        \frac{1}{T_1} \int_0^{T_1} u_1^*(t) d t=\bar u=0.5098>0.3225=\frac{1}{T_2} \int_0^{T_2} u_2^*(t) d t,\\
        \frac{1}{T_1} \int_0^{T_1} w_1^*(t) d t=\bar w=1.3625<2.1233=\frac{1}{T_2} \int_0^{T_2} w_2^*(t) d t,
    \end{cases}
\end{align*}
as shown in Fig. \ref{fig3}\text{(b)}.
    \item[(iii)] If the initial mass $v_0$ of the invasive prey species is large (e.g. $v_0=R=1$),  the invasive species $v$ not only invade successfully but also wipes out the native prey species via the predator-mediated apparent competition (i.e. the solution asymptotically develops into a periodic solution $(0, v_3^*(t), w_3^*(t)$ with period $T_2=23.8933$)); see Fig. \ref{fig3}\text{(c)}.
\end{itemize}

The above observations indicate that whether the invasive prey species can invade successfully to trigger the predator-mediated apparent competition essentially depends on the size of the initial biomass of the invasive prey species. Small initial biomass will lead to failed invasions and does not change the existing population dynamics. However, if the invasive prey species has a suitably large initial biomass, then the invasion will be successful and the predator-mediated apparent competition will take effect, resulting in the decrease or even extinction of the native prey species. To reduce the biomass of a certain species (like pests), it is suitable to employ the strategy of predator-mediated apparent competition by introducing a new (invasive) species with appropriate initial biomass.

\begin{figure}[!hb] \centering
    \includegraphics
    [width=0.8\textwidth,trim=90 100 75 0,clip]{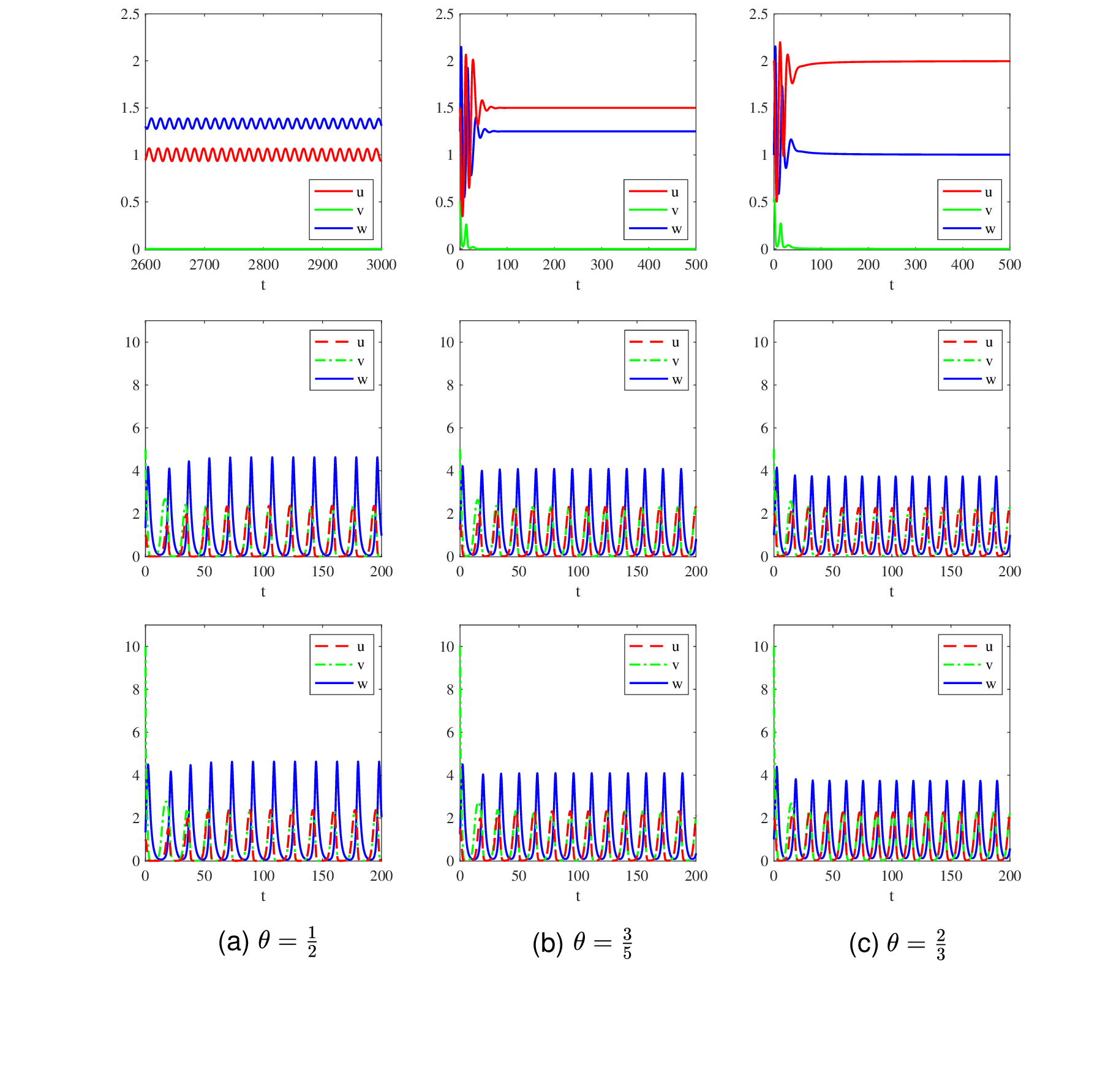}
    \caption{\small
    Long-time dynamics of {the} system \eqref{model} with \eqref{eqh2}{, \eqref{eq4.3}, and different values of $\theta\in\dkh{\frac12,\frac35,\frac23}$}. The initial data are taken as  $(u_0,v_0,w_0)=Q_1+(0,R,0)$, where $Q_1=(1,0,\frac43)$ in (a), $Q_1=(\frac32,0,\frac54)$ in (b), and  $Q_1=(2,0,1)$ in (c); $R=0.5$ in the first row, $R=5$ in the second row, and $R=10$ in the third row.
    }
    \label{fig4}
\end{figure}

\textbf{Part 2:  $\theta\in[\frac12,\frac34)$}.
In this case, we  first consider three values for
\begin{align}\nonumber
    \theta\in\dkh{\frac12,\frac35,\frac23}
\end{align}
and corresponding numerical simulations are plotted in Fig. \ref{fig4}. We observe similar behaviors to those for $\theta \in (0,\frac{1}{2})$ shown in Fig. \ref{fig3}, where the invasive species $v$ will fail to invade if its initial mass is small as illustrated in the first row of Fig. \ref{fig4}. However, with a large initial mass, the invasive prey species can invade successfully as shown in the second row of Fig. \ref{fig4}, but cannot annihilate the native prey species via the predator-mediated apparent competition. This is perhaps because the predator's mortality rate $\theta$ is too large to annihilate the native species even if the invasive species can boost the food supply of the predator. This alongside the numerical simulations shown in Fig. \ref{fig3} implies that whether the native prey species will be driven to extinction via the predator-mediated apparent competition depends not only on the initial mass of the invasive species but also on the mortality rate of the predator. Further increasing the value of $\theta$ to be $\theta=\frac7{10} \in (\theta_1, \frac{3}{4})$, at which $Q_1=(\frac73,0,\frac{20}{27})$, we find from the numerical simulations shown in Fig. \ref{fig5}(a) that the invasion is successful albeit small initial population abundance of the invasive species (in comparison with those in the first row of Fig. \ref{fig4}). Mathematically this is because $Q_1$ is a saddle and any small perturbation of $Q_1$ will result in instability.
With a large predator's mortality rate, the invasive species (even with a large initial mass) cannot drive the native species to extinction (see Fig. \ref{fig5}\text{(b)}), similar to other large values of $\theta$ shown in the second and third rows of Fig. \ref{fig4}.
This implies if the predator has a large mortality rate, it can not drive the native prey species to extinction even if its food supply is boosted by the invasive prey species.

\begin{figure}[!h] \centering
    \includegraphics
    [width=0.65\textwidth,trim=60 0 50 10,clip]{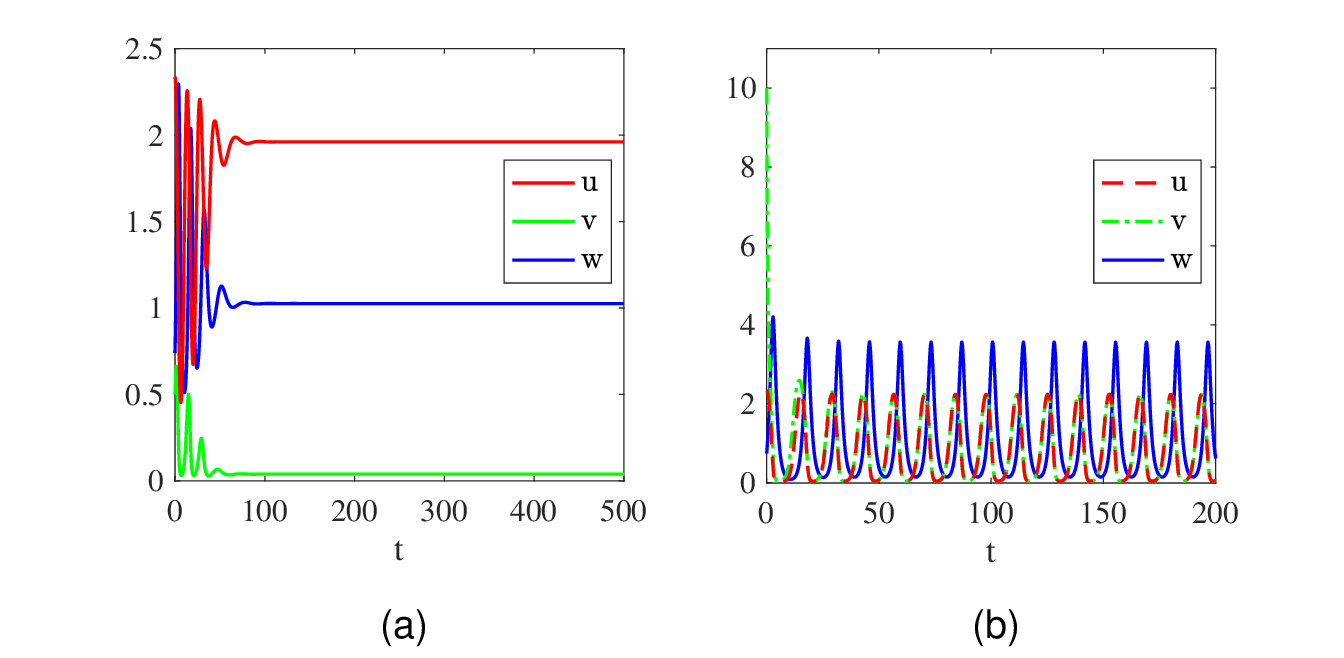}
    \caption{\small
    Long-time dynamics of {the} system \eqref{model} with \eqref{eqh2} and parameters given in \eqref{eq4.3} for $\theta=\frac7{10}$. The initial data are taken as
    $(u_0,v_0,w_0)=Q_1+(0,R,0)$, where $Q_1=(\frac73,0,\frac{20}{27})$, $R=0.5$ for \text{(a)} and $R=10$ for \text{(b)}.
    }
    \label{fig5}
\end{figure}

Concerning the questions raised in {\bf A1}, the above numerical results pinpointed two key factors determining the successful invasion of the invasive prey species: initial invasive mass $v_0$ and mortality rate $\theta$ of the predator. Specifically for a fixed mortality rate $\theta$ not large, increasing the initial invasive mass $v_0$ can lead to a successful invasion. If the mortality rate $\theta$ is large, then the predator will go extinct and the mass of the native prey species will not be affected though the invasion is successful. Conversely, for a fixed initial invasive mass that is not too small, the larger mortality rate of the predator will be beneficial to the success of the invasion. Moreover, the population abundance of the native prey species will be reduced by the predator-mediate apparent competition as shown in Fig. \ref{fig3}. Another interesting finding in our numerical simulations is that the asymptotic profiles of the native and invasive prey species coincide as long as the non-trivial periodic coexistence appears (see Fig. \ref{fig3} to Fig. \ref{fig5}). This is a mysterious mathematical question deserving further investigation.

Next, we explore how the population abundance of native prey species changes with respect to the initial invasive mass. To this end, we take the numerical results shown in  Fig. \ref{fig4}\text{(b)} as an example.    {Denote the three solutions shown in Fig. \ref{fig4}\text{(b)} by $(u_{R}^*,v_{R}^*,w_{R}^*)(t)$ for $\theta=\frac35$ and $R=0.5,5,10$. Then $(u_{R}^*,v_{R}^*,w_{R}^*)(t)\mid_{R=0.5}\equiv Q_1=(\frac32,0,\frac54)$ for all $t>0$, and $(u_{R}^*,v_{R}^*,w_{R}^*)(t)$ are periodic solutions with the period $T_R$ for $R=5,10$. Quantitative estimates of the total population in a period for $R=0.5,5,10$ are summarized in Table \ref{table5}.} We see from the results shown in Table \ref{table5} that the total mass of the native prey species decreases with respect to the initial mass of the invasive prey species, as expected.

{
\renewcommand{\arraystretch}{1.5}
\begin{table}[!h]
    \caption{{Quantitative properties of $(u_{R}^*,v_{R}^*,w_{R}^*)(t)$ for $R=0.5,5,10$.}}
    \begin{tabular}
        {m{4cm}<{\centering}|m{2.5cm}<{\centering}|m{2.5cm}<{\centering}|m{2.5cm}<{\centering}}
        \hline
         $R$
        &  0.5
        &  5
        &  10
        \\
        \hline
         Period $T_R$
        & /
        & 15.3714
        &15.3714
        \\
        \hline

        $\bar{u}=\frac{1}{T_R} \int_0^{T_R} u_{R}^*(t) d t$
        &  $\frac32$
        &  0.6277
        &  0.6275
        \\
        \hline

        $\bar{v}=\frac{1}{T_R} \int_0^{T_R} v_{R}^*(t) d t$
        &  0
        &  0.6277
        &  0.6275
        \\
        \hline

        $\bar{w}=\frac{1}{T_R} \int_0^{T_R} w_{R}^*(t) d t$
        & $\frac54$
        & 1.5866
        & 1.5844
        \\
        \hline
    \end{tabular}
    \label{table5}
    \begin{adjustwidth}{19mm}{8mm}
        \begin{tablenotes}
            \footnotesize
            \item {Remark: Here the notation ``/" means “this is not a  non-constant periodic case”.}
        \end{tablenotes}
    \end{adjustwidth}
\end{table}
}

\begin{figure}[!h] \centering
    \includegraphics
    [width=0.9\textwidth,trim=90 70 70 40,clip]{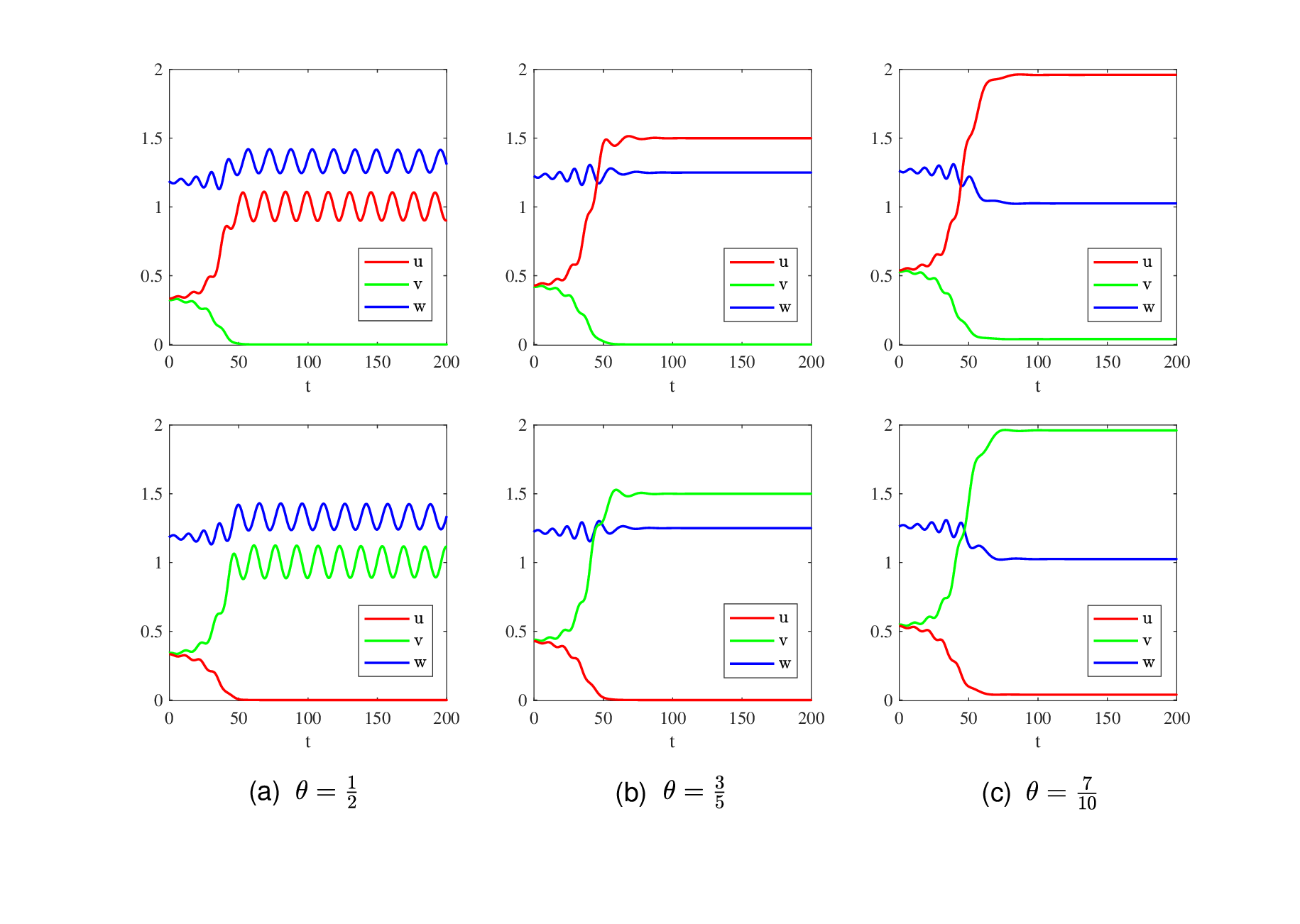}
    \caption{\small
        Long-time dynamics of {the} system \eqref{model} with \eqref{eqh2} and parameters given in \eqref{eq4.3} for $\theta\in\dkh{\frac12,\frac35,\frac7{10}}$. The initial data are taken as
        $(u_0,v_0,w_0)={Q_*^0}+(0,R,0)$, where $R=-0.01$ for the first row and $R=0.01$ for the second row, and ${Q_*^0}$ is given by \eqref{eq4.6}:
    \text{(a)} $(\frac{1}{3},\frac{1}{3},\frac{32}{27})$;
    \text{(b)} $(\frac{3}{7},\frac{3}{7},\frac{60}{49})$;
    \text{(c)} $(\frac{7}{13},\frac{7}{13},\frac{640}{507})$.
    }
    \label{fig6}
\end{figure}
\begin{figure}[!ht] \centering
    \includegraphics
    [width=0.9\textwidth,trim=100 10 75 10,clip]{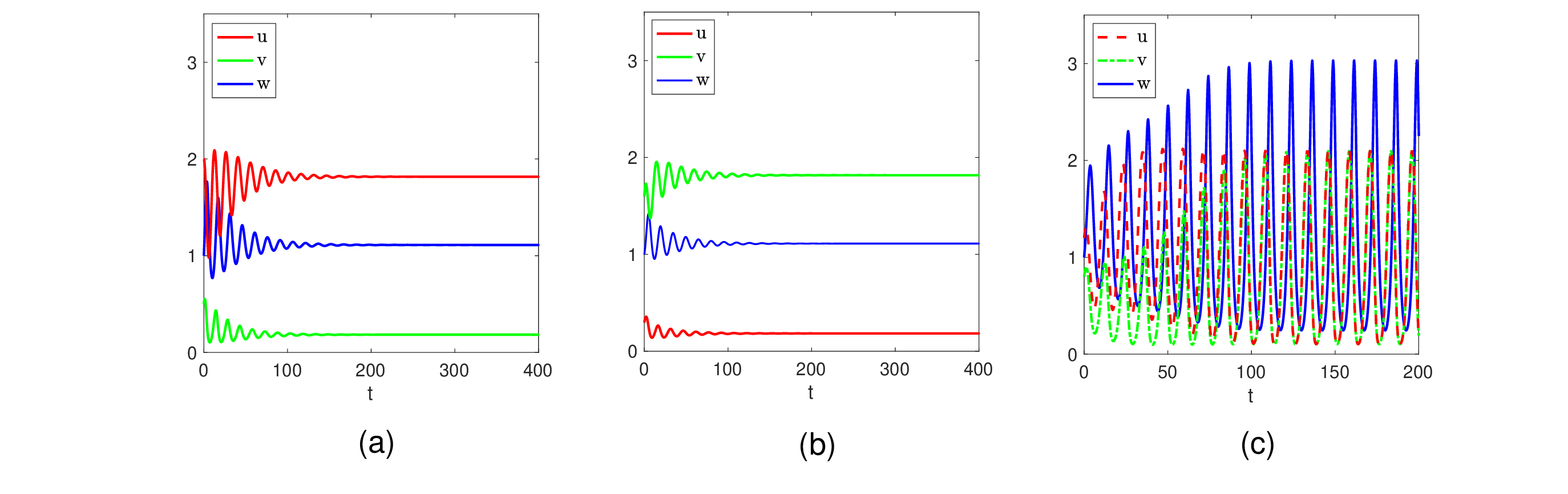}
    \caption{\small
        Long-time dynamics of {the} system \eqref{model} with \eqref{eqh2}  under the parameter setting \eqref{eq4.3} and $\theta=0.8$. The initial data are taken as $(u_0,v_0,w_0)$:
    \text{(a)} $(2,0.5,1)$;
    \text{(b)} $(0.3,1.6,1)$;
    \text{(c)} $(1.2,0.8,1)$.
    }
    \label{fig7}
\end{figure}

We proceed to examine whether the constant coexistence/positive solution is stable. To this end, we shall investigate the stability/instability of $Q_0^*$ which exists if $\theta<\frac{3}{2}$. The results of Theorem 2.2 have asserted that $Q_0^*$ is globally asymptotically stable if $\theta \in [\frac{4}{3},\frac{3}{2})$. This indicates that if the mortality rate of the predator is appropriately large, then the coexistence will persist as long as the invasion is successful. However, this is no longer the case if the mortality rate of the predator is suitably small, as shown in Fig. \ref{fig6} where we see that any small negative (resp. positive) perturbation of one prey species density may lead to the extinction or abundance decrease of this species (reps. the other one). This indicates that the constant coexistence solution is not robust against (small) perturbations.

{\textbf{Part 3: $\theta\in[\frac34,\frac43)$}.}
In view of Table \ref{table4}, both {coexistence equilibria} ${Q_*^1}$ and ${Q_*^2}$ are stable for $\theta\in[\frac34,1)$, that is {the} system \eqref{model} generates a bistable dynamics as illustrated in Fig. \ref{fig7}  for $\theta=0.8$, where
$${Q_*^1}=\xkh{1+\sqrt{\frac23},1-\sqrt{\frac23},\frac{10}9}, \ {Q_*^2}=\xkh{1-\sqrt{\frac23},1+\sqrt{\frac23},\frac{10}9}.$$ With an initial value $(u_0,v_0,w_0)$ which is``closer" to ${Q_*^1}$ than ${Q_*^2}$, the corresponding numerical results shown Fig. \ref{fig7}\text{(a)} demonstrate that the solution converges to ${Q_*^1}$, while Fig. \ref{fig7}\text{(b)} illustrates the convergence of solutions to ${Q_*^2}$ when the initial value is closer to ${Q_*^2}$. We wonder if a non-constant solution may develop if the initial value is not close to either of these two stable equilibria.
Hence we choose an initial value $(u_0,v_0,w_0)=(1.2,0.5,1)$ neither close to ${Q_*^1}$ nor to ${Q_*^2}$, the corresponding numerical result shown in Fig. \ref{fig7}\text{(c)} demonstrates that the periodic solution will develop. But how to rigorously prove the existence of periodic solutions remains an interesting open question.

In applications, the invasive prey species may be used as a biological control agent to regulate the population size of the native prey species if they are harmful (like pests). The ideal situation is that a small number of invasive prey species can achieve this goal. The above linear stability analysis alongside numerical simulations indicates that this is unfeasible if two prey species are ecologically identical (i.e. the symmetric case). However, this is achievable when two prey species are ecologically different (i.e. asymmetric case) as to be shown in the next subsection.
\subsection{\textbf{{Asymmetric apparent competition}}}
For simplicity, we first rescale the system \eqref{model} with \eqref{eqh2}. To this end, we set
\begin{align}\label{eq4.7}
    \widetilde u=\frac u{K_1},\
    \widetilde v=\frac v{K_2},\
    \widetilde w=w,\quad (\widetilde \gamma_i,\widetilde h_i,\widetilde \beta_i)
    =(\gamma_i,h_i K_i,\beta_i K_i),\quad  i=1,2.
\end{align}
Substituting the above rescalings into \eqref{model} with \eqref{eqh2} and dropping the tildes for brevity, we obtain the following rescaled system
\begin{equation}\label{eq4.8}
    \begin{cases}
        \medskip u_{t}= u\left(1-u \right)-w \frac {\gamma_1 u}{1+\gamma_1   h_1 u},  \quad &t>0,\\
        \medskip v_{t}= v\left(1-v  \right)-w \frac {\gamma_2 v}{1+\gamma_2   h_2 v},  \quad &t>0,\\
         \medskip w_{t}=w\xkh{ \beta_{1}  \frac {\gamma_1 u}{1+\gamma_1   h_1 u}
         + \beta_{2}\frac {\gamma_2 v}{1+\gamma_2   h_2 v}
         -\theta},  \quad &t>0,\\
         (u,v,w)(0)=(u_0,v_0,w_0).
    \end{cases}
\end{equation}
The rescaled system \eqref{eq4.8}, which can be viewed as a special case of \eqref{eq4.7} with $K_1=K_2=1$, has three types of parameters: capture rates $\gamma_i$, handling times $h_i$ and conversion rates $\beta_i$, where $i=1,2$, In the following, we shall focus on the case that the two prey species have different capture rates (i.e. $\gamma_1\ne \gamma_2$)  by assuming $h_1=h_2, \beta_1=\beta_2$ and study the effects of predator-mediated apparent competition with different capture rates. For definiteness, we set without loss of generality
\begin{align}\label{eg2}
        h_i=1,\ \beta_i=b>0\
        \quad\text{and}\quad
        0< \gamma_2< \gamma_1=1.
    \end{align}
The biological meaning of parameter values set in \eqref{eg2} is that the two prey species $u$ and $v$ have the same handling times and conversion rates but vary in capture rates, while the predator prefers to hunt the native prey species $u$ ($\gamma_1>\gamma_2$). Clearly the rescaled system \eqref{eq4.8} with {\eqref{eg2}} has four predator-free equilibria
\begin{align}\nonumber
    E_0=(0,0,0),\ E_u=(1,0,0),\ E_v=(0,1,0),\ E_{uv}=(1,1,0),\quad \text{if }\theta>0,
\end{align}
two semi-coexistence equilibria
\begin{align} \nonumber
    \begin{cases}
        \medskip
        Q_1=\xkh{u_{Q_1},0,w_{Q_1}}=\xkh{\frac{\theta }{b-\theta},0,\frac{b (b-2 \theta )}{(b-\theta )^2}},
        \quad \ &\text{if } \theta\in\xkh{0, L_1},\\
        Q_2=\xkh{0,v_{Q_2},w_{Q_2}}=\xkh{0,\frac{\theta }{\gamma_2(b-\theta)},\frac{b (b \gamma_2-(1+\gamma_2) \theta )}{\gamma_2^2(b-\theta )^2}},
        \quad \ &\text{if } \theta\in\xkh{0, L_2},
    \end{cases}
\end{align}
and a unique coexistence equilibrium (see Lemma \ref{lemB.3} in Appendix B for detailed reasons)
\begin{align}\label{eq4.9}
    Q_*=(u_*,v_*,w_*),\quad \text{if } \theta\in(\Theta_1, L),
\end{align}
where
\begin{align} \label{eq4.11}
    \begin{cases}
        \medskip
        L_1=\frac b2> L_2=\frac{b \gamma_2}{1+\gamma_2},\  L= L_1+ L_2<b,\\
        \Theta_1=\varphi_1(\gamma_2) b\in(0, L_2),\quad
        \varphi_1(\gamma_2):=\frac{ \sqrt{(1-\gamma_2) (3 \gamma_2+1)}-(1-\gamma_2) (2 \gamma_2+1)}{2 \gamma_2^2}.
    \end{cases}
\end{align}
For $b>0$ and $\gamma_2\in(0,1)$, it holds that
\begin{align*}
    \begin{cases}
        \medskip
        \varphi_1''(\gamma_2)<0,\ \varphi_1'(\frac23)=0,\
    \lim\limits_{\gamma_2\rightarrow0}\varphi_1(\gamma_2)=\lim\limits_{\gamma_2\rightarrow1}\varphi_1(\gamma_2)=0,\\
    0<\Theta_1\leq b\varphi_1(\frac23)=\frac b4, \text{ and }\Theta_1\text{ attains its maximum }\frac b4 \text{ if and only if }\gamma_2=\frac23.
    \end{cases}
\end{align*}
This implies that $\Theta_1$ is non-monotone in $\gamma_2$, but a convex function maximized at $\gamma_2=\frac23$.

\begin{rem}\label{rem4.3}
    Applying Theorem \ref{thm2.2} (iii)-(iv) with $K_1=K_2=1$ to system \eqref{eq4.8}-\eqref{eg2}, we can easily find that $Q_*$ is globally asymptotically stable for $\theta\in(\Theta_1, L)$, and $E_{uv}=(1,1,0)$ is globally asymptotically stable for $\theta\geq  L$. Since $\frac {1+\gamma_1 h_1}{\gamma_1}=2>\lim\limits_{\theta\rightarrow \Theta_1}w_{Q_2}=1$ for $b>0$ and $\gamma_2\in(0,1)$, the results in Theorem \ref{thm2.2}(ii) with $K_1=K_2=1$ are inapplicable to assert the global stability of $Q_2$ for $\theta\in(0,\Theta_1]$. However,  this can be shown in the following lemma.
\end{rem}
\begin{lem}\label{lem4.5}
    The semi-coexistence equilibrium $Q_2$ of the rescaled system \eqref{eq4.8} with \eqref{eg2} is globally asymptotically stable if $\theta\in(0,\Theta_1]$.
\end{lem}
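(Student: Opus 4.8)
The plan is to build an explicit Lyapunov function and close the argument with LaSalle's invariance principle. Write $f_1(u)=\frac{u}{1+u}$ and $f_2(v)=\frac{\gamma_2 v}{1+\gamma_2 v}$, and abbreviate the nonzero coordinates of $Q_2$ by $\bar v:=v_{Q_2}$ and $\bar w:=w_{Q_2}$, which satisfy the equilibrium identities $b\,f_2(\bar v)=\theta$ and $\bar w=\frac{\bar v(1-\bar v)}{f_2(\bar v)}=\frac{(1-\bar v)(1+\gamma_2\bar v)}{\gamma_2}$. The subtlety, already flagged in Remark \ref{rem4.3}, is that the condition $\frac{K_1}{f_1(K_1)}\le w_{Q_2}$ of Theorem \ref{thm2.2}(ii) reads $2\le\bar w$ after rescaling and fails on the whole range $\theta\in(0,\Theta_1]$; the genuine threshold for extinction of the native prey is the sharper inequality $\bar w\ge 1$, which I will show is exactly equivalent to $\theta\le\Theta_1$.

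Concretely, I would take the Volterra-type functional
\begin{align*}
    V=b\,u+b\int_{\bar v}^{v}\frac{f_2(\xi)-f_2(\bar v)}{f_2(\xi)}\,d\xi+\int_{\bar w}^{w}\frac{\eta-\bar w}{\eta}\,d\eta,
\end{align*}
which is nonnegative and proper on the positive octant and vanishes only at $Q_2$ (the linear term in $u$ is the degenerate limit of the Holling-type integrand as the equilibrium prey density tends to $0$). Differentiating along \eqref{eq4.8}--\eqref{eg2} and inserting $\theta=b\,f_2(\bar v)$, the cross terms proportional to $(w-\bar w)\big(f_2(v)-f_2(\bar v)\big)$ cancel by construction, and the $w\,f_1(u)$ contributions from the first and third equations combine to $-b\bar w f_1(u)$, i.e.\ $w$ is effectively replaced by $\bar w$. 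Using the explicit Holling form then reduces the derivative to the two sign-definite pieces
\begin{align*}
    \dot V=b\,u\,\frac{1-u^2-\bar w}{1+u}+\frac{b\,(v-\bar v)^2\big[(\gamma_2-1)-\gamma_2(v+\bar v)\big]}{(1+\gamma_2 v)(1+\gamma_2\bar v)}.
\end{align*}

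The second summand is $\le 0$ for every $v\ge 0$, with equality only at $v=\bar v$, because $\gamma_2<1=\gamma_1$ makes the bracket strictly negative; this is precisely where the asymmetry $\gamma_2<\gamma_1$ enters, and it imposes no constraint on $\theta$. For the first summand it suffices to prove $\bar w\ge 1$, since then $1-u^2-\bar w\le -u^2\le 0$. I would establish $\bar w\ge 1$ by a sign analysis of the numerator $N(\theta):=b\big(b\gamma_2-(1+\gamma_2)\theta\big)-\gamma_2^2(b-\theta)^2$ of $\bar w-1$: it is a downward parabola in $\theta$ with $N(0)=b^2\gamma_2(1-\gamma_2)>0$, whose discriminant simplifies to $b^2(1-\gamma_2)(1+3\gamma_2)$ and whose positive root is exactly $\Theta_1=\varphi_1(\gamma_2)\,b$, reproducing the formula for $\varphi_1$ in \eqref{eq4.11}. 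Hence $N(\theta)\ge 0$, i.e.\ $\bar w\ge 1$, precisely for $\theta\in(0,\Theta_1]$, with equality only at $\theta=\Theta_1$, so $\dot V\le 0$ on the whole positive octant.

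Finally I would invoke LaSalle's invariance principle on a compact absorbing set, boundedness of trajectories being standard for \eqref{eq4.8}. From the two displayed terms, $\dot V=0$ forces $u=0$ and $v=\bar v$; on the invariant set $\{u=0,\ v=\bar v\}$ the $v$-equation gives $\bar v(1-\bar v)=w\,f_2(\bar v)$, hence $w=\bar w$, so the largest invariant subset of $\{\dot V=0\}$ is the singleton $\{Q_2\}$, yielding global asymptotic stability. I expect the main obstacle to be the collapse of $\dot V$ into the two manifestly signed pieces above, which hinges on the explicit Holling II form (both to cancel the cross terms and to sharpen the prey-extinction threshold from $\bar w\ge 2$ down to $\bar w\ge 1$), together with the marginal endpoint $\theta=\Theta_1$: there $\bar w=1$ and the first summand degenerates to $-\,b\,u^3/(1+u)$, which still vanishes only at $u=0$, so the LaSalle argument survives even though the cruder sufficient condition of Theorem \ref{thm2.2}(ii) cannot reach this threshold.
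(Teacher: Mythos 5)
Your proposal is correct and follows essentially the same route as the paper: your Volterra functional evaluates (using $\tfrac{b}{1+\gamma_2\bar v}=b-\theta$) to the paper's Lyapunov function $bu+(b-\theta)\bigl(v-v_{Q_2}-v_{Q_2}\ln\tfrac{v}{v_{Q_2}}\bigr)+\bigl(w-w_{Q_2}-w_{Q_2}\ln\tfrac{w}{w_{Q_2}}\bigr)$, your decomposition of $\dot V$ into the $u$- and $v$-pieces matches the paper's, and your quadratic $N(\theta)$ is exactly $-\varphi_2(\theta)$ from \eqref{addeq1}, so the threshold $\bar w\ge 1\iff\theta\le\Theta_1$ is the same computation. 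The only (welcome) additions are your explicit treatment of the marginal case $\theta=\Theta_1$ and of the LaSalle invariant set, which the paper leaves to "similar arguments".
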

\begin{proof}
    Let $\theta\in(0,\Theta_1]$. Then \eqref{eq4.11} implies $0<\theta<\frac{b \gamma_2}{1+\gamma_2}<\frac b2$. For $t>0$, let
    \begin{equation}\nonumber
        \medskip
        \mathcal E(t;Q_2)=bu
        +(b-\theta) \xkh{v-v_{Q_2}-v_{Q_2}\ln\frac v{v_{Q_2}}}
        +\xkh{w-w_{Q_2}-w_{Q_2}\ln\frac w{w_{Q_2}}}.
    \end{equation}
    Then by similar arguments as in the proofs of Lemma \ref{lem3.2} and Lemma \ref{lem3.5}, we have $\mathcal E (t;Q_2)>0\text{ for all }(u,v,w)\neq Q_2$, and
    \begin{align*}
        \mathcal E'(t;Q_2)
        \deyu ~ b \left(1-u -\frac {w }{1+ u}\right)u+(b-\theta)\left(1-v  - \frac {\gamma_2 w}{1+\gamma_2  v}\right)(v-v_{Q_2})\non
        &+\xkh{\frac{b \gamma_2 v}{\gamma_2 v+1}+\frac{b u}{u+1}-\theta}(w-w_{Q_2})\non
        \deyu -(b-\theta)\frac{(1-\gamma_2+\gamma_2 (v+v_{Q_2}))}{\gamma_2 v+1}(v-v_{Q_2})^2
        -\frac{b u^3}{u+1}+\frac{b u \varphi_2(\theta )}{\gamma_2^2 (u+1) (b-\theta )^2}\non
        <& -(b-\theta)\frac{\gamma_2 (v+v_{Q_2})}{\gamma_2 v+1}(v-v_{Q_2})^2
        -\frac{b u^3}{u+1},
    \end{align*}
    where we have used $\gamma_2\in(0,1)$ and the fact that the quadratic function
    \begin{align}\label{addeq1}
        \varphi_2(\theta ):=&~\gamma_2^2\theta ^2 +b   \left(-2 \gamma_2^2+\gamma_2+1\right)\theta+b^2 (\gamma_2-1) \gamma_2\non
        =&~\gamma_2^2\zkh{\theta+\Theta_1+b \left(\frac{1+\gamma_2}{\gamma_2^2}-2\right)}(\theta-\Theta_1)
    \end{align}
    is nonpositive for $\theta\in(0,\Theta_1]$ in the last inequality. Finally, similar arguments based on the Lyapunov function method and LaSalle's invariant principle as in the proof of Lemma \ref{lem3.2} complete the proof.
\end{proof}

With Remark \ref{rem4.3} and Lemma \ref{lem4.5}, we summarize the global stability results in Table \ref{table6} for the rescaled system \eqref{eq4.8} with \eqref{eg2}.

{\renewcommand{\arraystretch}{1.5}
\begin{table}[H]
    {
        \caption{{Global stability of \eqref{eq4.8} with \eqref{eg2}.}}
    \begin{tabular}
        {m{3.5cm}<{\centering}|
        m{3cm}<{\centering}|
        m{3cm}<{\centering}|
        m{3cm}<{\centering}}
        \hline
        $\theta$
        & $(0,\Theta_1]$
        & $(\Theta_1,L)$
        & $[L,\infty)$\\
        \hline
        Global stability
        & $Q_2\text{ is GAS}$
        & $Q_*\text{ is GAS}$
        & $E_{uv}\text{ is GAS}$\\
        \hline
    \end{tabular}
    \label{table6}
    \begin{adjustwidth}{13mm}{19mm}
        \begin{tablenotes}
            \footnotesize
            \item Note: The notation ``GAS" has the same interpretation as in Table \ref{table2}. The parameter $\Theta_1$ is given in \eqref{eq4.11}.
        \end{tablenotes}
    \end{adjustwidth}
    }
\end{table}
}
Under the parameter setting {\eqref{eg2}}, the capture rate of the invasive prey species $v$ is smaller than the native prey species $u$, namely {$0<\gamma_2<\gamma_1=1$}. According to the results shown in Table \ref{table6} for any $\theta>0$, we can derive the following biological implications.
\begin{itemize}
    \item[(i)] {If} $\theta \in(0, \Theta_1]$ (i.e. the predator has a low mortality rate), the global stability of $Q_2$  implies that the invasive prey species can invade successfully regardless of its initial population size and wipe out the native prey species via the predator-mediated apparent competition.
    \item[(ii)] {If} $\theta \in(\Theta_1, L)$ (i.e., the predator has a moderate mortality rate), then the global stability of $Q_*$  indicates that moderate predator's mortality allows the native prey species to survive and to coexist with the invasive prey species and the predator.
    \item[(iii)] {If} $\theta \geq { L}$, the global stability of $E_{uv}$ entails that the poor physical condition of the predator (i.e. the predator has a large mortality rate) will result in the extinction of the predator though the invasive prey species can boost the food supply to the predator.
\end{itemize}
The above results indicate that if the predator has a hunting preference for the native species (i.e. larger capture rate of the native prey species), then the invasive prey species can always invade successfully regardless of its initial population size. Furthermore, whether the native prey species can be eradicated through the predator-mediated apparent competition essentially depends upon the mortality rate of the predator (i.e. low predator mortality rate will result in the extinction of the native prey species while a moderate or large mortality rate will allow the native prey species to persist). In the general parameter set in which $0<\gamma_2<\gamma_1$,  the case $0<\gamma_2<\gamma_1=1$ is only a special situation where we can completely classify the global stability of solutions as given in Table \ref{table6}. For other parameter regimes contained in the set $0<\gamma_2<\gamma_1$, we can perform the linear stability analysis to obtain local stability results and employ the Lyapunov function method alongside LaSalle's invariant principle to obtain the global stability results in partial parameter regimes, but a complete classification of global stability can not be established. Indeed, in some parameter regimes, the periodic solutions may exist (see Fig. \ref{fig8}), and hence the global stability in the whole parameter regime is impossible.  Nevertheless, the biological phenomena observed from our numerical simulations (not shown here for brevity) are essentially similar to the case $0<\gamma_2<\gamma_1=1$: the invasive prey species will always invade successfully regardless of its initial population abundance and can even wipe out the native prey species through the predator-mediated apparent competition if the mortality rate of the predator is low,  while the native prey species can persist and coexist with the predator and invasive prey species if the mortality rate of the predator is moderate, where the difference from the case $0<\gamma_2<\gamma_1=1$ is that the coexistence state may be periodic besides constant as shown in Fig. \ref{fig8}.

\begin{figure}[!ht] \centering
    \includegraphics
    [width=1\textwidth,trim=160 10 120 20,clip]{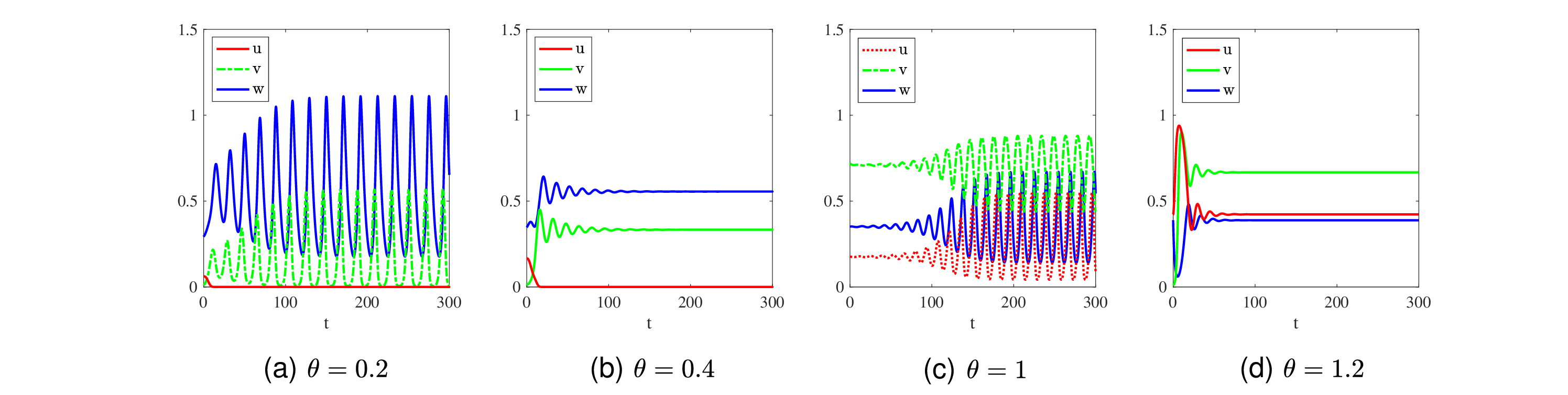}
    \caption{\small
    {Long-time dynamics of {the} rescaled system \eqref{eq4.8} with $ h_i= \beta_i=1$ ($i=1,2$), $(\gamma_1,\gamma_2)=(4,2)$ and $\theta=0.2,0.4,1,1.2$. The initial data are taken as
    $(u_0,v_0,w_0)=Q_1+(0,0.01,0)$ in \text{(a)}-\text{(b)}, $(u_0,v_0,w_0)=Q_*+(0,0.01,0)$ in \text{(c)}, and $(u_0,v_0,w_0)=(u_*,0.01,w_*)$ in \text{(d)}.
    }
    }
    \label{fig8}
\end{figure}

If we assume $0<\gamma_1<\gamma_2=1$ (i.e. the capture rate of the native prey species is {smaller} than the invasive prey species), then the results in Table \ref{table6} hold by swapping $Q_1$ with $Q_2$. This means that if the predator has a hunting preference for the {invasive} prey species, then a successful invasion depends heavily on the predator's mortality rate (precisely the invasion will fail for $\theta\in(0, \Theta_1]$ while succeeding for $\theta > \Theta_1$). Even if the invasion is successful, the invasive prey species is unable to wipe out the native prey species through predator-mediated apparent competition, regardless of its initial population abundance. These interesting results have significant value in applications. For instance, if we were to control the population abundance of some harmful species (like pests) by their natural enemies, we can introduce a small amount of secondary (invasive) prey species that are less preferred by their natural enemies based on the principle of predator-mediated apparent competition.


\section{Summary and discussion}\label{sec5}

The predator-mediated apparent competition is an indirect and negative interaction between two prey species mediated by a shared predator. As stressed in \cite{SKB2018E0}, quantifying such indirect effects is methodologically challenging but important for understanding ecosystem functioning. To study the effects of the predator-mediated apparent competition on population dynamics, in this paper, we propose to consider system \eqref{model} by viewing $u$ as a native prey species and $v$ as an invasive prey species, both of which share one predator $w$. We find conditions for the local and global stability of the equilibria of system \eqref{model} with Holling type I and II functional response functions,  and employ the numerical simulations to demonstrate the possible population dynamics and biological consequences due to the predator-mediated apparent competition. Among others,  we have the following observations.
\begin{itemize}
\item In the case of Holling I functional response function, the global dynamics of \eqref{model} can be completely classified (see Theorem \ref{thm2.1}), and any solution will globally asymptotically stabilize into an equilibrium as summarized in Table \ref{table2}. The results imply that the predator will die out if its mortality rate is too large (i.e., $\theta\geq L$) while coexists with the two prey species if its mortality rate is moderate (i.e. $\theta_0<\theta<L$). However, if the predator's mortality rate is small (i.e. $\theta\in(0,\theta_0]$), then the global dynamics depend on the capture rate of the prey species: the prey species with a smaller capture rate will annihilate other prey species with larger capture rates via the predator-mediated apparent competition.  In other words, the invasive species will not be able to invade successfully if the predator preferably captures them. On the contrary,  the invasive prey species will successfully invade and drive the native species to go extinct if the predator preferably hunts the native prey species. If the two prey species have the same capture rate ($\alpha_1=\alpha_2$), then they either coexist with or eliminate the predator depending on the strength of the predator's mortality rate (see Table \ref{table2}). In this scenario, the invasive prey species cannot drive the native prey species to extinction via predator-mediated apparent competition.  These results reveal that the predator's mortality rate and prey's capture rates are key factors determining the global dynamics with the Holling I functional response.

\item In the case of Holling type II functional response function, the global dynamics of \eqref{model} are much more complicated. If two prey species have the same ecological characteristics (i.e. symmetric apparent competition), we see that the dynamics are rather intricate as shown in Sect. \ref{sbusec4.1} for parameters given in  \eqref{eq4.3}. From this case study, we find that the initial mass of the invasive prey species and the predator's mortality rate are two key factors determining the local and global dynamics of the system (see Table \ref{table4} and bifurcation diagrams plotted in Fig. \ref{fig2}), the success of invasion and the effect of the predator-mediated apparent competition (see numerical simulations illustrated in Sect. \ref{sbusec4.1}). When the predator's mortality rate is low (i.e. the predator is healthy physically), the invasive prey species invades successfully (and hence promotes the predator-mediated apparent competition) only if its initial mass is suitably large. To further wipe out the native prey species via the predator-mediate apparent competition, the initial invasive mass needs to be larger (see numerical simulations in Fig. \ref{fig3}). However, if the predator's mortality rate is moderate  (i.e. the predator is neither very healthy nor poor physically), the large initial mass of the invasive prey species can only ensure the success of the invasion but can not annihilate the native prey species via the predator-mediated apparent competition.  The study of the symmetric apparent competition implies that if both invasive and native prey species are ecologically identical (or similar),  the predator-mediates apparent competition can not be evoked unless a large initial mass of the invasive species is introduced (which is costly however). Once the predator-mediates apparent competition is triggered, the population abundance of the native prey species will be reduced (see Fig. \ref{fig3}). For the asymmetrical apparent competition (i.e. the two prey species have different ecological characteristics), assuming that two prey species have different capture rates (i.e. $\gamma_1\ne \gamma_2$), we find that the invasive prey species will always invade successfully regardless of its initial population abundance if it has a smaller capture rate than the native prey species (see Table \ref{table6} for the case $0<\gamma_2<\gamma_1=1$).   Hence in this case, the initial mass of invasive prey species is no longer important for the promotion of predator-mediated apparent competition which can be triggered easily by introducing a small number of invasive prey species.  If the predator's mortality rate is low (i.e. the predator is healthy physically),  the invasion of a secondary prey species will result in the extinction of the native prey species (Table \ref{table6}). Though we only perform the detailed analysis for the case $0<\gamma_2<\gamma_1=1$,  similar behaviors are numerically found for other asymmetrical parameter regimes as discussed in Sect. \ref{sbusec4.1}. On the contrary, if the invasive prey species has a large capture rate,  it will not be able to promote the predator-mediate apparent competition no matter how large the initial invasive mass is.  This again shows that the capture rate of the prey species and the predator's mortality rate are two major determinants of the predator-mediated apparent competition and its biological consequence for the Holling type II functional response function.
\end{itemize}
In summary, we find that if two prey species employ the same Holling type I response function, whether the invasion is successful and hence promotes the predator-mediated apparent competition is entirely determined by their capture rates (i.e. the rates being captured by the predator).  In contrast, the dynamics with the Holling type II response function are more complicated. First, if two prey species have the same ecological characteristics, then the initial mass of the invasive prey species is the key factor determining the success of the invasion and hence the promotion of the predator-mediated apparent competition. Whereas if two prey species have different ecological characteristics, say different capture rates without loss of generality,  then the success of the invasion (i.e. the promotion of the predator-mediated apparent competition) no longer depends on the initial mass of the invasive prey species, but on the capture rates. In all cases, if the invasion succeeds, whether the native prey species can be annihilated via predator-mediated apparent competition essentially depends on the predator's mortality rate (i.e. the low predator's mortality rate will result in the extinction of the native prey species). All these interesting findings will provide valuable suggestions for decision-makers when introducing alien species to a new ecological system to maintain ecological balance and biodiversity.

Our present works not only pinpoint key factors promoting predator-mediated apparent competition but also show the significant effects of predator-mediated apparent competition on the structure and stability of ecological systems.  Therefore, a comprehensive understanding of the mechanism and underlying dynamics of this indirect interaction is imperative.  This paper only takes a (first) step forward in this direction and  many interesting questions remain open in our present works. Among others, we mention several possible questions for future efforts.
\begin{itemize}[leftmargin=5mm]
    \item We consider the same functional response functions for both prey species, either Holling type I or Holling type II. In reality, the functional response functions for two prey species may be different, such as Holling type I for the native prey species and Holling type II for the invasive one, or vice versa. Then we anticipate that the dynamics might be different from those obtained in this paper. This deserves to be clarified in a future work.
    \item The model considered in this paper does not include the spatial structures, such as the random diffusion and/or directed movement (e.g. prey-taxis cf. \cite{KO1987AN0}), which are indispensable factors to make the model more realistic. This raises a natural question: what are the dynamics of the predator-mediated apparent competition with spatial structure and whether the spatial movement of species will bring significantly different effects?  These interesting questions can serve as a roadmap to study spatial effects on the population dynamics of predator-mediated apparent competition and hence provide insights into the understanding of complex dynamics of ecological systems.  We shall explore this question in the future.
    \item In the model,  the direct (i.e. interference) competition of two prey species is not considered.  If we include the direct competition in the model, the complexity of both qualitative and quantitative analysis will be considerably increased. However, it is still very interesting to explore how the direct competition and indirect interaction (i.e. predator-mediated apparent competition) between the two prey species jointly affect the population dynamics.
\end{itemize}

\section{Proof of the global stability}\label{sec3}

In this section, we aim to prove the global stability of the equilibria of {the} system \eqref{model}. As mentioned earlier, we will primarily focus on proving the global stability of the equilibrium $E_{uv}$ for $\theta\geq L$ and {semi-coexistence}/{coexistence equilibria} for $0<\theta<L$. Before proceeding with the stability analysis, we first establish the {global} well-posedness of {the} system \eqref{model} by proving the following result.
\begin{lem}
    Let $(u_0, v_0, w_0) \in \mathbb{R}_{+}^3$, and let $f_1(u)$ and $f_2(v)$ be given by \eqref{eqh1} or \eqref{eqh2}. Then {the} system \eqref{model} admits a unique {nonnegative} solution, which is bounded for $t{>}0$. Moreover, the solution satisfies
    \begin{align}\label{eq3.1}
        \sup_{t{\geqslant}0} u(t)\leq M_1,\quad
        \sup_{t{\geqslant}0} v(t)\leq M_2,\quad
        \sup_{t{\geqslant}0} w(t)\leq M_3,
    \end{align}
    and
    \begin{align}\label{eq3.2}
        \limsup _{t \rightarrow \infty} u(t) \leq K_1,\quad
        \limsup _{t \rightarrow \infty} v(t) \leq K_2,\quad
        \limsup _{t \rightarrow \infty} w(t) \leq \frac{(1+\theta)^2}{4\theta}\xkh{\beta_1K_1+\beta_2K_2},
    \end{align}
    where the constants $M_i$, $i=1,2,3$, are given by
    \begin{align*}
        \begin{cases}
            \medskip
            M_1:=\max\dkh{u_0,K_1},\quad
            M_2:=\max\dkh{v_0,K_2},\\
            M_3:=\max\dkh{\beta_1u_0+\beta_2v_0+w_0,\frac{(1+\theta)^2}{4\theta}\xkh{\beta_1K_1+\beta_2K_2}}.
        \end{cases}
    \end{align*}
\end{lem}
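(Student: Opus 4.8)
The statement claims global well-posedness (existence, uniqueness, nonnegativity, boundedness) together with explicit uniform and asymptotic bounds. The plan is to treat these in the natural order: local existence/uniqueness, invariance of the nonnegative octant, a priori bounds, and finally the asymptotic estimates.

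\emph{Local existence, uniqueness, and nonnegativity.} First I would observe that in both cases \eqref{eqh1} and \eqref{eqh2} the right-hand side of \eqref{model} is a polynomial or a rational function with nonvanishing denominator on $\mathbb{R}_+^3$, hence locally Lipschitz; the Picard--Lindel\"of theorem then yields a unique local solution on a maximal interval $[0,T_{\max})$. For nonnegativity I would note that each equation has the form of a linear equation in the corresponding variable: $u_t = u\,[\,\cdots]$, $v_t = v\,[\,\cdots]$, and $w_t = w\,[\,\cdots]$. Since the bracketed factors are continuous along the (bounded) solution, each variable can be written as $u(t)=u_0\exp(\int_0^t[\cdots]\,ds)$ and similarly for $v,w$, which stays positive whenever the initial datum is positive and stays nonnegative when it is zero. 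Thus $\mathbb{R}_+^3$ is forward invariant.

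\emph{Uniform bounds \eqref{eq3.1}.} For $u$, I would use the first equation together with $w f_1(u)\geq 0$ (valid since $w\geq0$, $f_1\geq0$) to get the differential inequality $u_t \leq u(1-u/K_1)$. A standard comparison with the logistic ODE then gives $u(t)\leq \max\{u_0,K_1\}=M_1$, and the identical argument yields $v(t)\leq M_2$. For the predator, the key device is to form the weighted total mass $P(t):=\beta_1 u+\beta_2 v+w$. Differentiating and using the structure of \eqref{model}, the predation terms $-\beta_i w f_i$ in the prey equations exactly cancel the gain term $w(\beta_1 f_1+\beta_2 f_2)$ in the predator equation, leaving $P_t = \beta_1 u(1-u/K_1)+\beta_2 v(1-v/K_2)-\theta w$. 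Adding and subtracting to produce $-\theta P$ on the right, one obtains $P_t \leq -\theta P + \beta_1 u(1+\theta-u/K_1)+\beta_2 v(1+\theta - v/K_2)$, and maximizing each quadratic $s(1+\theta-s/K_i)$ over $s\geq0$ (attained at $s=\tfrac{(1+\theta)K_i}{2}$ with value $\tfrac{(1+\theta)^2 K_i}{4}$) gives $P_t\leq -\theta P + \tfrac{(1+\theta)^2}{4}(\beta_1K_1+\beta_2K_2)$. Gronwall/comparison then bounds $P$, hence $w\leq P$, by $M_3$.

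\emph{Asymptotic bounds \eqref{eq3.2}.} The $\limsup$ estimates for $u$ and $v$ follow from the same logistic comparison, since the solution of $u_t=u(1-u/K_1)$ tends to $K_1$, so $\limsup_{t\to\infty}u(t)\leq K_1$ (and similarly for $v$). For $w$, I would apply the comparison result for $P$: the limiting value of the scalar ODE $P_t=-\theta P + \tfrac{(1+\theta)^2}{4}(\beta_1K_1+\beta_2K_2)$ is $\tfrac{(1+\theta)^2}{4\theta}(\beta_1K_1+\beta_2K_2)$, whence $\limsup_{t\to\infty}w(t)\leq\limsup_{t\to\infty}P(t)$ gives the stated bound. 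The uniform bounds also guarantee that the solution does not blow up, so $T_{\max}=\infty$ and the solution is global. The main obstacle here is purely bookkeeping: verifying the exact cancellation in $P_t$ for \emph{both} functional-response forms and choosing the constant $1+\theta$ (rather than $1$) so that the quadratic-maximization step produces precisely the factor $\tfrac{(1+\theta)^2}{4\theta}$; the rest is routine scalar comparison.
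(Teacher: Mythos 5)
Your proposal is correct and follows essentially the same route as the paper: local existence via the locally Lipschitz right-hand side, nonnegativity via the exponential representation of each component, logistic comparison for $u$ and $v$, and the weighted total mass $\beta_1 u+\beta_2 v+w$ with the quadratic maximization yielding the constant $\tfrac{(1+\theta)^2}{4}(\beta_1K_1+\beta_2K_2)$ and the bound on $w$. The paper labels the maximization step as Young's inequality, but the computation is identical to yours.
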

\begin{proof}
    Since the vector field, defined by the terms on the right-hand side of {the} system \eqref{model}, is smooth in $\mathbb{R}_{+}^3$, the existence theory of ordinary differential equations (cf. \cite[Theorem 4.18]{LR20140}) guarantees that {the} system \eqref{model} admits a unique maximal solution with a maximal time $T_{max}\in(0,\infty]$. {By the first equation of \eqref{model}, we have
    \begin{align*}
        u(t)=u_0e^{\int_0^t(1-\frac {u(s)}{K_1}-\frac{w(s)f_1(s)}{u(s)})}ds\geq0\quad\text{for all }t\in(0,T_{max}).
    \end{align*}
    We can similarly obtain $v(t),w(t)\geq0$ for all $t\in(0,T_{max})$. Then the first equation of \eqref{model} shows that} $u_t\leq u(1-u / K_{1})$, which along with the comparison principle gives $u(t)\leq \max\dkh{u_0,K_1}=M_1$ for all $t\in(0,T_{max})$. {Similarly, it} holds that $v(t)\leq  \max\dkh{v_0,K_2}=M_2$ for all $t\in(0,T_{max})$. Let $z(t):=\beta_1u(t)+\beta_2v(t)+w(t)$, then we have from \eqref{model} and Young's inequality that
    \begin{align*}
        z_t\deyu ~\beta_1u\left(1-\frac u {K_{1}}\right)+\beta_2 v\left(1-\frac v {K_{2}}\right)-\theta\xkh{z-\beta_1u -\beta_2v }\non
        \deyu -\theta z+\beta_1\xkh{\xkh{1+\theta}u-\frac{u^2}{K_1}}+\beta_2\xkh{\xkh{1+\theta}v-\frac{v^2}{K_2}}\non
        \xiyu -\theta z+\frac{(1+\theta)^2}4\xkh{\beta_1K_1+\beta_2K_2}\quad \text{for all } t\in(0,T_{max}).
    \end{align*}
    By {the} comparison {principle}, we get $w(t)\leq z(t)\leq M_3=\max\{\beta_1u_0+\beta_2v_0+w_0,\frac{(1+\theta)^2}{4\theta}\xkh{\beta_1K_1+\beta_2K_2}\}$ for all $t\in(0,T_{max})$. Therefore the solution is bounded and hence $T_{max}=\infty$. Given the above analysis, \eqref{eq3.1} and \eqref{eq3.2} follow immediately. {The proof is completed.}
\end{proof}

Now we consider the global stability of the equilibria of {the} system \eqref{model}. Before proceeding, for $t>0$ and a given equilibrium $E_s=(u_s,v_s,w_s)$, we let
\begin{align}\label{eq3.3}
    \mathcal E(t;E_s):=
    \Gamma_1 \xkh{u-u_s-u_s\ln\frac u{u_s}}
    +\Gamma_2 \xkh{v-v_s-v_s\ln\frac v{v_s}}
    + \xkh{w-w_s-w_s\ln\frac w{w_s}},
\end{align}
where the constants $\Gamma_1$ and $\Gamma_2$ are given by
\begin{align}\label{eq3.4}
    \Gamma_i:=
    \begin{cases}
        \medskip
        \frac{\beta_i f_i(u_s)}{\alpha_i u_s}
        =\beta_i,\quad &\text{if }\eqref{eqh1}\text{ holds},\\
        \frac{\beta_i f_i(u_s)}{\gamma_i u_s}
        =\frac{\beta_i}{1+\gamma_ih_iu_s},\quad &\text{if }\eqref{eqh2}\text{ holds},
    \end{cases}
    \qquad i=1,2.
\end{align}
Then we {prove} the global {stability} of the equilibria.
\subsection{Global stability for $\theta\geq L $}
The first result asserts that the equilibrium $E_{uv}$ is globally asymptotically stable if $\theta\geq L$.
\begin{lem} \label{lem3.2}
    Let $\theta\geq L$, and let $f_1(u)$ and $f_2(v)$  be given by \eqref{eqh1} or \eqref{eqh2}. Then the equilibrium $E_{uv}$ is globally asymptotically stable.
\end{lem}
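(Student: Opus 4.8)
The plan is to use the Lyapunov function method together with LaSalle's invariance principle, following the template \eqref{eq3.3}--\eqref{eq3.4}, but with one caveat: at $E_{uv}=(K_1,K_2,0)$ the predator component vanishes, so the logarithmic term in the $w$-slot degenerates. I would therefore take as candidate
$$\mathcal E(t;E_{uv})=\Gamma_1\xkh{u-K_1-K_1\ln\frac u{K_1}}+\Gamma_2\xkh{v-K_2-K_2\ln\frac v{K_2}}+w,$$
with $\Gamma_1,\Gamma_2$ as in \eqref{eq3.4} evaluated at $u_s=K_1,v_s=K_2$. Since $x-x_s-x_s\ln(x/x_s)\ge0$ for $x,x_s>0$ (equality iff $x=x_s$) and $w\ge0$, positivity of $\Gamma_1,\Gamma_2$ gives $\mathcal E\ge0$ with equality precisely at $E_{uv}$. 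The global well-posedness lemma guarantees the solution stays positive and bounded, so $\mathcal E$ is well-defined and the logarithms make sense along every trajectory.

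Next I would differentiate along \eqref{model}. Using $\frac{d}{dt}\xkh{u-K_1-K_1\ln\frac u{K_1}}=\frac{u-K_1}u\,u_t$, the logistic part of each prey equation produces the clean dissipative term $-\frac{\Gamma_i}{K_i}(u-K_i)^2$ (and the analogue for $v$), independently of the response function. The predation terms extracted from the two prey equations must then be combined with the predator equation $w_t=w(\beta_1f_1(u)+\beta_2f_2(v)-\theta)$.

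The crux is the algebraic identity
$$-\Gamma_1 f_1(u)\frac{u-K_1}u+\beta_1 f_1(u)=f_1(u)\zkh{\beta_1-\Gamma_1+\Gamma_1\frac{K_1}u}\equiv L_1,$$
and likewise for index $2$. For \eqref{eqh1} the choice $\Gamma_i=\beta_i$ kills the constant in the bracket, leaving $\Gamma_iK_i/u$ and hence exactly $L_i$. For \eqref{eqh2} the choice $\Gamma_i=\beta_i/(1+\gamma_ih_iK_i)$ makes the bracket proportional to $(1+\gamma_ih_iu)/u$, which cancels the denominator of $f_i(u)$ and again collapses to the constant $L_i$. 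I expect this telescoping, especially its non-obvious validity for the Holling type II response, to be the main technical step: it is precisely what lets a single Lyapunov function handle both response functions uniformly. Granting it, one obtains
$$\mathcal E'(t;E_{uv})=-\frac{\Gamma_1}{K_1}(u-K_1)^2-\frac{\Gamma_2}{K_2}(v-K_2)^2+(L-\theta)w\le0,$$
using the hypothesis $\theta\ge L$.

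Finally I would invoke LaSalle. When $\theta>L$, the set $\{\mathcal E'=0\}$ is exactly $\{E_{uv}\}$ and global attraction is immediate. The delicate case is the boundary $\theta=L$, where the $w$-term of $\mathcal E'$ vanishes identically and $\{\mathcal E'=0\}=\{u=K_1,\ v=K_2,\ w\ge0\}$; on this set any invariant trajectory has $u\equiv K_1$, whence $u_t=0$ forces $w f_1(K_1)=0$ and thus $w=0$ (as $f_1(K_1)>0$). Hence the largest invariant subset is again $\{E_{uv}\}$, and LaSalle's principle yields global asymptotic stability. I regard the type II telescoping identity and the marginal case $\theta=L$ as the only nonroutine ingredients; everything else is bookkeeping.
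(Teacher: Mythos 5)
Your proposal is correct and follows essentially the same route as the paper: the same Lyapunov function with the linear $w$-term, the same choice of $\Gamma_i$, and a telescoping identity that is algebraically equivalent to the paper's \eqref{eq3.7}--\eqref{eq3.8} (your exact formula $\mathcal E'=-\frac{\Gamma_1}{K_1}(u-K_1)^2-\frac{\Gamma_2}{K_2}(v-K_2)^2+(L-\theta)w$ just keeps the $(L-\theta)w\le0$ term explicit where the paper absorbs it into an inequality). The LaSalle step, including forcing $w=0$ on the invariant subset of $\{u=K_1,\,v=K_2\}$ via $0=wf_1(K_1)$, is likewise identical to the paper's argument.
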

\begin{proof}
    {We first recall that $u,v,w\geq0$ for all $t\geq0$.} Let $E_s=E_{uv}=(K_1,K_2,0)$ in \eqref{eq3.3} and \eqref{eq3.4}. Then
    \begin{equation}\nonumber
        \medskip
        \mathcal E(t;E_{uv})=\Gamma_1\xkh{u-K_1-K_1\ln\frac u{K_1}}
        +\Gamma_2 \xkh{v-K_2-K_2\ln\frac v{K_2}}+w,
    \end{equation}
    and
    \begin{align}\label{eq3.5}
        \mathcal E' (t;E_{uv})
        \deyu ~\Gamma_1 \xkh{1-\frac u{K_1}-\frac{wf_1(u)}u}\xkh{u-K_1}+\Gamma_2 \xkh{1-\frac v{K_2}-\frac{wf_2(v)}v}\xkh{v-K_2}\non
        & +w\xkh{\beta_1 f_1(u)+\beta_2 f_2(v)-\theta}.
    \end{align}
    We claim that
    \begin{align}\label{eq3.6}
        \left\{
        \begin{array}{l}
            \medskip
            \mathcal E(t;E_{uv})>0\text{ for all }(u,v,w)\neq E_{uv},\\
            \mathcal E'(t;E_{uv})\leq0{,\text{ where }
            ``="}\text{ {holds} if and only if }
            (u,v,w)=E_{uv}.
        \end{array}
        \right.
    \end{align}

    Indeed, for any given $c_0>0$, the function $\phi_1(s):=s-c_0-c_0\ln \frac s{c_0}$ for $s>0$ satisfies $\phi'_1(s)=1-\frac{c_0}s$ {and $\phi''_1(s)=\frac{c_0}{s^2}>0$}, which implies that $\phi_1(s)\geq\phi_1(c_0)=0$ and $\phi_1(s)=0$ if and only if $s=c_0$. Therefore, the first conclusion in \eqref{eq3.6} {follows}. Moreover, if \eqref{eqh1} holds, then \eqref{eq3.4} gives $\Gamma_1=\beta_1$ and $\beta_1 \xkh{f_1(u)-f_1(K_1)}=\Gamma_1 \frac{f_1(u)}u\xkh{u-K_1}$. If \eqref{eqh2} holds, then \eqref{eq3.4} gives $\Gamma_1=\frac{\beta_1}{1+\gamma_1h_1K_1}$ and
    \begin{align}\label{eq3.7}
        \beta_1 \xkh{f_1(u)-f_1(K_1)}
    = \frac{\gamma_1\beta_1(u-K_1)}{(1+\gamma_1h_1K_1)(1+\gamma_1h_1 u)}
    =\Gamma_1 \frac{f_1(u)}u\xkh{u-K_1}.
    \end{align}
    Similarly, we have
    \begin{align}\label{eq3.8}
        \beta_2 \xkh{f_2(v)-f_2(K_2)}
    =\Gamma_2 \frac{f_2(v)}v\xkh{v-K_2}.
    \end{align}
    Using \eqref{eqh1}, \eqref{eqh2}, \eqref{eq3.3}, \eqref{eq3.7}, \eqref{eq3.8} and $\theta\geq L$, we have
    \begin{align*}
        w\xkh{\beta_1 f_1(u)+\beta_2 f_2(v)-\theta}
        \xiyu~ \beta_1  w\xkh{f_1(u)-f_1(K_1)}+\beta_2 w \xkh{f_2(v)-f_2(K_2)}\non
        \deyu ~\Gamma_1 \frac{wf_1(u)}u(u-K_1)+\Gamma_2 \frac{wf_2(v)}v(v-K_2),
    \end{align*}
    which along with \eqref{eq3.5} yields
    \begin{align*}
        \mathcal E'(t;E_{uv})\leq
         -\frac{\Gamma_1}{K_1} (u-K_1)^2-\frac{\Gamma_2}{K_2} (v-K_2)^2.
    \end{align*}
    The above inequality indicates that {$\mathcal E'(t;E_{uv})\leq 0$, where ``="  holds in the case of $(u,v)= (K_1,K_2)$. Note that if $(u,v)=(K_1,K_2)$,  the first equation of \eqref{model} becomes $0=wf_1(K_1)$, which implies $w=0$ due to $f_1(K_1)>0$. Therefore, $\mathcal E'(t;E_{uv})<0$ if $(u,v,w)\neq E_{uv}$.} Clearly, \eqref{eq3.5} implies $\mathcal E'(t;E_{uv})=0$ for $(u,v,w)= E_{uv}$. Hence \eqref{eq3.6} is proved. With \eqref{eq3.6} and LaSalle's invariant principle (cf. \cite[Theorem 3]{L1960T0}), the proof is completed.
\end{proof}

In what follows we assume $\theta\in(0, L)$ and consider two types of functional response functions separately.

\subsection{Global stability for $\theta\in(0,L)$ and Holling type I $(\ref{eqh1})$}

We next show that the unique {coexistence equilibrium} $P_*$ is globally asymptotically stable as long as it exists.

\begin{lem}[Global stability of $P_*$]\label{lem3.3}
    Let \eqref{eqh1} hold and $\theta\in(\theta_0,L )$. Then the unique {coexistence equilibrium} $P_*=\xkh{u_*,v_*,w_*}$ of \eqref{model} is globally asymptotically stable.
\end{lem}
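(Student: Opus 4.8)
The plan is to prove global asymptotic stability by the Lyapunov function method together with LaSalle's invariant principle, exactly mirroring the proof of Lemma \ref{lem3.2}, but now anchored at the coexistence equilibrium $P_*$. Concretely, I would take the functional $\mathcal E(t;P_*)$ defined in \eqref{eq3.3} with $E_s=P_*=(u_*,v_*,w_*)$ and the weights $\Gamma_i=\beta_i$ prescribed by \eqref{eq3.4} under \eqref{eqh1}. The hypothesis $\theta\in(\theta_0,L)$ guarantees (via the necessary and sufficient condition recorded in Table \ref{table1}) that $P_*$ exists, is unique, and has all three components positive, so $\mathcal E(t;P_*)$ is well defined. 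Its positivity, namely $\mathcal E(t;P_*)>0$ for $(u,v,w)\neq P_*$ with equality only at $P_*$, follows verbatim from the convexity argument used in Lemma \ref{lem3.2} applied to each summand $\phi_1(s)=s-c_0-c_0\ln(s/c_0)$.

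The decisive computation is the time derivative of $\mathcal E$ along trajectories of \eqref{model}. Differentiating and substituting the three equations with $f_i(s)=\alpha_i s$ yields
\begin{align*}
\mathcal E'(t;P_*)
={}& \beta_1(u-u_*)\xkh{1-\tfrac{u}{K_1}-\alpha_1 w}
+\beta_2(v-v_*)\xkh{1-\tfrac{v}{K_2}-\alpha_2 w}\\
&+(w-w_*)\xkh{\beta_1\alpha_1 u+\beta_2\alpha_2 v-\theta}.
\end{align*}
Subtracting the three equilibrium identities $1-u_*/K_1=\alpha_1 w_*$, $1-v_*/K_2=\alpha_2 w_*$ and $\beta_1\alpha_1 u_*+\beta_2\alpha_2 v_*=\theta$ inside the respective brackets converts every linear factor into a difference from equilibrium. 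The crucial structural point is that the choice $\Gamma_i=\beta_i$ makes the predation cross terms $(u-u_*)(w-w_*)$ and $(v-v_*)(w-w_*)$ cancel identically: the contribution $-\beta_i\alpha_i$ from the prey brackets exactly offsets the $+\beta_i\alpha_i$ coming from the predator bracket. What survives is the negative semidefinite form
\begin{align*}
\mathcal E'(t;P_*)=-\frac{\beta_1}{K_1}(u-u_*)^2-\frac{\beta_2}{K_2}(v-v_*)^2\le 0,
\end{align*}
with equality precisely when $u=u_*$ and $v=v_*$.

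It then remains to invoke LaSalle's invariant principle. The set $\{\mathcal E'=0\}$ is $\{u=u_*,\ v=v_*\}$, and I would verify that its largest invariant subset reduces to the single point $P_*$: on any invariant trajectory with $u\equiv u_*$ the first equation forces $0=u_*(1-u_*/K_1-\alpha_1 w)$, whence $\alpha_1 w=1-u_*/K_1=\alpha_1 w_*$ and therefore $w\equiv w_*$. Combined with the global well-posedness and boundedness already established, LaSalle's theorem (cf. \cite[Theorem 3]{L1960T0}) then gives convergence of every solution to $P_*$, which is the desired global asymptotic stability.

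As for difficulty, there is essentially no serious obstacle once the weights $\Gamma_i=\beta_i$ are in hand: the Lotka--Volterra (Holling type I) structure causes the predation cross terms to cancel exactly, leaving a purely diagonal dissipation term. The only steps demanding care are the bookkeeping in the cancellation of the $(w-w_*)$ cross terms and the (routine) LaSalle reduction of the invariant set. In contrast to the Holling type II case, no smallness or carrying-capacity restriction on $(K_1,K_2)$ is required, which is ultimately why the global dynamics can be classified completely under \eqref{eqh1}.
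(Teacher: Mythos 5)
Your proposal is correct and follows essentially the same route as the paper: the same Lyapunov functional \eqref{eq3.3} with weights $\Gamma_i=\beta_i$, the same cancellation of the $(w-w_*)$ cross terms via the equilibrium identities, the same negative semidefinite remainder $-\frac{\beta_1}{K_1}(u-u_*)^2-\frac{\beta_2}{K_2}(v-v_*)^2$, and LaSalle's invariant principle to conclude. Your reduction of the invariant set inside $\{u=u_*,\,v=v_*\}$ via the first equation is, if anything, slightly more explicit than the paper's appeal to uniqueness of $P_*$.
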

\begin{proof}
    Let $E_s=P_*=\xkh{u_*,v_*,w_*}$ in \eqref{eq3.3} and \eqref{eq3.4}{. Then} \eqref{eq3.4} implies $\Gamma_i=\beta_i$, $i=1,2$. Using \eqref{model}, \eqref{eqh1}, \eqref{eq3.3}, \eqref{eq3.4} and the fact that
    \begin{align}\nonumber
        \theta=\beta_1f_1(u_*)+\beta_2f_2(v_*)=\alpha_1\beta_1 u_*+\alpha_2\beta_2 v_*,\quad
            1=\frac{u_*}{K_1}+\alpha_1w_*
             =\frac{v_*}{K_2}+\alpha_2w_*,
    \end{align}
    we have
    \begin{align*}
        \mathcal E'(t;P_*)
        \deyu~ \beta_1 \xkh{1-\frac u{K_1}-\alpha_1 w}\xkh{u-u_*}+\beta_2 \xkh{1-\frac v{K_2}-\alpha_2 w}\xkh{v-v_*} \non
        &+ (\alpha_1\beta_1 u+\alpha_2\beta_2 v-\theta)(w-w_*)\non
        \deyu~\beta_1 \xkh{1-\frac u{K_1}-\alpha_1w_*}\xkh{u-u_*}+\beta_2 \xkh{1-\frac v{K_2}-\alpha_2w_*}\xkh{v-v_*}\non
        \deyu-\frac{\beta_1}{K_1} \xkh{u-u_* }^2-\frac{\beta_2}{K_2} \xkh{v-v_* }^2.
    \end{align*}
    {Hence $\mathcal E'(t;P_*)\leq 0$, where ``=" possibly holds in the case of $(u,v)= (u_*,v_*)$. Note that if $(u,v)= (u_*,v_*)$, then $w=w_*$ since the system \eqref{model} admits the unique coexistence equilibrium $P_*$ for $\theta\in(\theta_0,L )$. Therefore, $\mathcal E'(t;P_*)<0$ if $(u,v,w)\neq E_{uv}$. If $(u,v,w)=P_*$, then \eqref{eq3.3} obviously shows that $\mathcal E(t;P_*)=0$ for all $t>0$, which implies that $\mathcal E'(t;P_*)=0$ for all $t>0$. We obtain
    $$\mathcal E'(t;E_{uv})\leq0,\text{ where }
    \text{``='' holds if and only if }
    (u,v,w)=E_{uv}.$$}
    {Moreover, the} same arguments as in the proof of Lemma \ref{lem3.2} yield that $\mathcal E(t;P_*)>0$ for $(u,v,w)\neq P_*$. Then the proof is completed by an application of LaSalle's invariant principle.
\end{proof}

Note that $\theta_0=0$ if and only if $\alpha_1=\alpha_2$. If $\alpha_1=\alpha_2$, then Lemma \ref{lem3.2} and Lemma \ref{lem3.3} imply that for any $\theta>0$, either $E_{uv}$ or $P_*$ is globally asymptotically stable. We next consider the case $\alpha_1\neq\alpha_2$, which implies $\theta_0>0$. Then in view of Table \ref{table1}, the {semi-coexistence} equilibria $P_1$ and $P_2$ exist for $\theta\in(0,\theta_0]$ since $0<\theta_0<\min\dkh{L_1,L_2}$. The following result gives {the} global {stability} of $P_1$ and $P_2$.

\begin{lem}[Global stability of $P_1$ and $P_2$]\label{lem3.4}
    Let \eqref{eqh1} hold, $\alpha_2>\alpha_1$ (resp. $\alpha_1>\alpha_2$) and $\theta\in(0,\theta_0]$. Then the {semi-coexistence} equilibrium $P_1$ (resp. $P_2$) of \eqref{model} is globally asymptotically stable.
\end{lem}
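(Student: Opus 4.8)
The plan is to construct a Lyapunov function of the same type as \eqref{eq3.3}, adapted to the degeneracy caused by the vanishing second component of $P_1=\xkh{u_{P_1},0,w_{P_1}}$. Since $v_{P_1}=0$, the term $\Gamma_2\xkh{v-v_{P_1}-v_{P_1}\ln(v/v_{P_1})}$ degenerates to the linear term $\Gamma_2 v$, exactly as the $w$-term did in the proof of Lemma \ref{lem3.2}. Recalling from \eqref{eq3.4} that $\Gamma_i=\beta_i$ under \eqref{eqh1}, I would therefore work with
\[
\mathcal E(t;P_1)=\beta_1\xkh{u-u_{P_1}-u_{P_1}\ln\tfrac u{u_{P_1}}}+\beta_2 v+\xkh{w-w_{P_1}-w_{P_1}\ln\tfrac w{w_{P_1}}}.
\]
As in Lemma \ref{lem3.2}, the convexity of $\phi_1(s)=s-c_0-c_0\ln(s/c_0)$ together with $\beta_2 v\ge0$ gives $\mathcal E(t;P_1)>0$ for all $(u,v,w)\neq P_1$, so it remains to analyze the sign of $\mathcal E'(t;P_1)$.

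Next I would differentiate along trajectories and simplify using the two identities holding at $P_1$, namely $\theta=\alpha_1\beta_1 u_{P_1}$ and $1-u_{P_1}/K_1-\alpha_1 w_{P_1}=0$ (these follow from $u_{P_1}=\theta/(\alpha_1\beta_1)$ and $L_1=\alpha_1\beta_1 K_1$). Grouping the $u$- and $w$-contributions, the indefinite cross term $\alpha_1\beta_1(u-u_{P_1})(w-w_{P_1})$ cancels exactly, and after collecting the $v$-terms I expect to arrive at
\[
\mathcal E'(t;P_1)=-\frac{\beta_1}{K_1}\xkh{u-u_{P_1}}^2+\beta_2 v\xkh{1-\alpha_2 w_{P_1}-\tfrac{v}{K_2}}.
\]
The first summand is manifestly nonpositive, so the whole argument reduces to controlling the residual $v$-term.

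This is where the hypothesis $\theta\le\theta_0$ enters, and it constitutes the crux of the proof. Since $v\ge0$, it suffices that $1-\alpha_2 w_{P_1}\le0$. Using $w_{P_1}=(L_1-\theta)/(\alpha_1 L_1)$ one checks that $\alpha_2 w_{P_1}\ge1 \iff (\alpha_2-\alpha_1)L_1\ge\alpha_2\theta \iff \theta\le\xkh{1-\tfrac{\alpha_1}{\alpha_2}}L_1=\theta_0$, the last equality holding precisely because $\alpha_2>\alpha_1$ forces $\theta_0=\xkh{1-\alpha_1/\alpha_2}L_1$. Hence $1-\alpha_2 w_{P_1}-v/K_2\le0$ for every $v\ge0$, giving $\mathcal E'(t;P_1)\le0$. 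I regard this nonpositivity of the residual $v$-term as the only genuinely delicate point; everything else is a routine adaptation of the computations in Lemmas \ref{lem3.2} and \ref{lem3.3}.

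Finally I would invoke LaSalle's invariance principle as in Lemma \ref{lem3.2}. Whether $\theta<\theta_0$ (so that $1-\alpha_2 w_{P_1}<0$) or $\theta=\theta_0$ (so that the residual term equals $-\beta_2 v^2/K_2$), the equality set $\dkh{\mathcal E'=0}$ is $\dkh{u=u_{P_1},\ v=0}$ with $w$ arbitrary. On this set the $u$-equation reads $u_t=u_{P_1}\xkh{1-u_{P_1}/K_1-\alpha_1 w}$, which vanishes along an invariant trajectory only if $w=w_{P_1}$; thus the largest invariant subset is $\dkh{P_1}$, and global asymptotic stability follows. The assertion for $P_2$ under $\alpha_1>\alpha_2$ is obtained verbatim after interchanging the roles of the two prey species ($u\leftrightarrow v$, $K_1\leftrightarrow K_2$, $\alpha_1\leftrightarrow\alpha_2$, $\beta_1\leftrightarrow\beta_2$).
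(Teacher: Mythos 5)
Your proposal is correct and follows essentially the same route as the paper: the same Lyapunov function \eqref{eq3.3} with $\Gamma_i=\beta_i$ and the $v$-term degenerating to $\beta_2 v$, the same reduction of the sign condition to $\alpha_2 w_{P_1}\geq 1\iff\theta\leq\theta_0$, and the same LaSalle conclusion. The only cosmetic difference is that the paper absorbs the residual term $\beta_2 v\left(1-\alpha_2 w_{P_1}\right)$ directly into the final bound $-\frac{\beta_1}{K_1}\left(u-u_{P_1}\right)^2-\frac{\beta_2}{K_2}v^2$, whereas you keep it explicit before discarding it.
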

\begin{proof}
    Without loss of generality, we only prove {the global stability} for $P_1=\xkh{u_{P_1},0,w_{P_1}}=(\frac\theta{\alpha_1\beta_1}, 0,\frac{L_1-\theta}{\alpha_1L_1})$ in the case of $\alpha_2>\alpha_1$, and the proof {for $P_2$} in the case of $\alpha_1>\alpha_2$ is similar. Let $E_s=P_1$ in \eqref{eq3.3} and \eqref{eq3.4}, then \eqref{eq3.4} implies $\Gamma_i=\beta_i$, $i=1,2$. Clearly, $0<\theta_0=(1-\frac{\alpha_1}{\alpha_2})L_1<L_1$, and hence
    $$\alpha_2 {w_{P_1}}=\frac{\alpha_2}{\alpha_1}\xkh{1-\frac\theta{L_1}}
    \geq \frac{\alpha_2}{\alpha_1}\xkh{1-\frac{\theta_0}{L_1}}
    = 1,$$
    which alongside \eqref{model}, \eqref{eqh1}, $\theta=\alpha_1\beta_1 u_{P_1}$ and $\alpha_1 w_{P_1}=1-\frac {u_{P_1}}{K_1}$ implies that
    \begin{align*}
        \mathcal E'(t;P_1)
        \deyu ~\beta_1 \xkh{1-\frac u{K_1}-\alpha_1w}\xkh{u-u_{P_1}}+\beta_2 \xkh{1-\frac v{K_2}-\alpha_2 w}v\non
        &  +\xkh{\beta_1 f_1(u)+\beta_2 f_2(v)-\theta}\xkh{w-w_{P_1}}\non
        \deyu~ \beta_1 \xkh{1-\frac u{K_1}-\alpha_1{w_{P_1}}}\xkh{u-u_{P_1}}+\beta_2 \xkh{1-\frac v{K_2}-\alpha_2 {w_{P_1}}}v\non
        \xiyu -\frac{\beta_1}{K_1}\left(u-u_{P_1}\right)^2
        -\frac{\beta_2}{K_2}v^2.
    \end{align*}
    The rest of the proof is similar to that of Lemma \ref{lem3.2}, and we omit it for brevity.
\end{proof}
\subsection{Global stability for $\theta\in(0,L )$ and Holling type II $(\ref{eqh2})$}

We now consider the case of Holling type II \eqref{eqh2}. We first give the global stability of {semi-coexistence} equilibria $Q_1$ and $Q_2$.

\begin{lem}[Global stability of $Q_1$ and $Q_2$]\label{lem3.5}
    Let \eqref{eqh2} hold and $\theta\in(0,L_1)$ (resp. $\theta\in(0,L_2)$). Then the {semi-coexistence} equilibrium $Q_1$ (resp. $Q_2$) of \eqref{model} is globally asymptotically stable if \eqref{eq2.1} (resp. \eqref{eq2.2}) holds.
\end{lem}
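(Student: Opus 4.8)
The plan is to extend the Lyapunov/LaSalle scheme already used for $E_{uv}$ and $P_*$, adapting the functional \eqref{eq3.3} to the degenerate coordinate $v_s=0$ of $Q_1=\xkh{u_{Q_1},0,w_{Q_1}}$. Because the relative-entropy term in $v$ becomes singular when $v_s=0$, I would keep the usual entropy terms in $u$ and $w$ but replace the $v$-contribution by the linear term $\beta_2 v$, i.e. work with
\[
\mathcal E(t;Q_1)=\Gamma_1\xkh{u-u_{Q_1}-u_{Q_1}\ln\tfrac u{u_{Q_1}}}+\beta_2 v+\xkh{w-w_{Q_1}-w_{Q_1}\ln\tfrac w{w_{Q_1}}},
\]
with $\Gamma_1=\frac{\beta_1}{1+\gamma_1h_1u_{Q_1}}$ as in \eqref{eq3.4} (the limiting value of $\frac{\beta_2 f_2(v_s)}{\gamma_2 v_s}$ as $v_s\to0$ forces the coefficient $\beta_2$ on $v$). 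Positivity of $\mathcal E$ away from $Q_1$ is immediate from the convexity argument in Lemma \ref{lem3.2} together with $\beta_2 v\ge 0$.

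Next I would differentiate along \eqref{model}, substitute the equilibrium identities $\theta=\beta_1 f_1(u_{Q_1})$ and $1-\frac{u_{Q_1}}{K_1}=w_{Q_1}\frac{f_1(u_{Q_1})}{u_{Q_1}}$, and use the Holling type II identity \eqref{eq3.7}, namely $\beta_1\xkh{f_1(u)-f_1(u_{Q_1})}=\Gamma_1\frac{f_1(u)}u(u-u_{Q_1})$. The $w$-dependent cross terms then cancel exactly as in the proofs of Lemma \ref{lem3.2} and Lemma \ref{lem3.3}: the contribution $-\Gamma_1(u-u_{Q_1})\frac{wf_1(u)}{u}$ from the $u$-equation combines with $\Gamma_1\frac{f_1(u)}u(u-u_{Q_1})(w-w_{Q_1})$ from the $w$-equation (and the analogous $\beta_2 f_2(v)$ terms cancel because the coefficient on $v$ is exactly $\beta_2$), leaving no free $w$ and the clean expression
\[
\mathcal E'(t;Q_1)=\Gamma_1(u-u_{Q_1})\zkh{\xkh{1-\tfrac u{K_1}}-w_{Q_1}\tfrac{f_1(u)}u}+\beta_2\zkh{v\xkh{1-\tfrac v{K_2}}-w_{Q_1}f_2(v)}=:\Gamma_1(u-u_{Q_1})g(u)+\beta_2 h(v).
\]
The heart of the matter is to show both summands are nonpositive, and this is precisely where the two defining inequalities of $\Lambda_1$ enter separately: one controls the $u$-term, the other the $v$-term.

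For the $u$-term I would use $g(u_{Q_1})=0$ to factor out $(u-u_{Q_1})$, and after inserting the equilibrium value $w_{Q_1}=\frac{(1-u_{Q_1}/K_1)(1+\gamma_1h_1u_{Q_1})}{\gamma_1}$ obtain
\[
(u-u_{Q_1})g(u)=-(u-u_{Q_1})^2\zkh{\tfrac1{K_1}-\tfrac{(1-u_{Q_1}/K_1)\gamma_1h_1}{1+\gamma_1h_1u}}.
\]
The bracket is nonnegative for all $u\ge0$ iff $1+\gamma_1h_1u\ge\gamma_1h_1(K_1-u_{Q_1})$, whose worst case $u=0$ reads $1\ge\gamma_1h_1(K_1-u_{Q_1})$, i.e. $K_1\le\lambda_1+u_{Q_1}$ — the first condition in \eqref{eq2.1}. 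For the $v$-term I would exploit that $v\mapsto\frac v{f_2(v)}=\frac{1+\gamma_2h_2v}{\gamma_2}$ is strictly increasing, so the second condition $\frac{K_2}{f_2(K_2)}\le w_{Q_1}$ yields $v\le w_{Q_1}f_2(v)$ on $[0,K_2]$ and hence $h(v)\le v(1-\tfrac v{K_2})-v=-\tfrac{v^2}{K_2}\le0$ there, while for $v>K_2$ both summands of $h$ are negative; thus $h(v)\le0$ with equality only at $v=0$.

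Finally I would invoke LaSalle's invariant principle: $\mathcal E'\le0$ everywhere, and the combined equality forces $u=u_{Q_1}$ and $v=0$, on which set the equations \eqref{model} immediately force $w=w_{Q_1}$, so the largest invariant set is $\{Q_1\}$ and $Q_1$ is globally asymptotically stable; the assertion for $Q_2$ then follows verbatim under the symmetry $u\leftrightarrow v$, $1\leftrightarrow2$, giving \eqref{eq2.2}. I expect the main obstacle to be the sign analysis of the two brackets, in particular recognizing that the $u$-bracket is most restrictive at $u=0$ and collapses exactly to $K_1\le\lambda_1+u_{Q_1}$. The one genuinely delicate point is the boundary case $\frac{K_2}{f_2(K_2)}=w_{Q_1}$: there the monotonicity estimate for $h$ is cleanly available only on $[0,K_2]$, so I would additionally impose $v_0\le K_2$ and combine it with the a priori bound $\limsup_{t\to\infty}v(t)\le K_2$ from \eqref{eq3.2} and forward invariance of $\{v\le K_2\}$ to guarantee $\mathcal E'(t;Q_1)\le0$ along the entire trajectory, which is exactly the equality caveat recorded in \eqref{eq2.1}.
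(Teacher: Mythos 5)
Your proposal is correct and follows essentially the same route as the paper: the same degenerate Lyapunov functional (entropy terms in $u,w$, linear term $\beta_2 v$, with $\Gamma_1=\frac{\beta_1}{1+\gamma_1h_1u_{Q_1}}$), the same cancellation of the $w$-terms via the identity \eqref{eq3.7}, and the same sign analysis showing that the two inequalities in $\Lambda_1$ control the $u$- and $v$-brackets respectively, followed by LaSalle. The only (immaterial) difference is in handling $v>K_2$: you observe directly that both summands of $h(v)$ are negative there, whereas the paper instead invokes $\limsup_{t\to\infty}v(t)\leq K_2$ to get the estimate for $t\geq T_1$, which is why it records the equality caveat $v_0\leq K_2$.
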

\begin{proof}
    Without loss of generality, we only prove {the global stability} for $Q_1=\xkh{u_{Q_1},0,w_{Q_1}}$, and the case for $Q_2$ can be proved similarly. Let $E_s=Q_1=\xkh{u_{Q_1},0,w_{Q_1}}$ in \eqref{eq3.3} and \eqref{eq3.4}{. Then} \eqref{eq3.4} implies
    \begin{align}\label{eq3.9}
        \Gamma_1=\frac{\beta_1}{1+\gamma_1h_1u_{Q_1}}
        \quad\text{and}\quad
        \Gamma_2=\beta_2
    \end{align}
    If $v_0\leq K_2$, then \eqref{eq3.1} implies $v(t)\leq K_2$ for all $t\geq 0$. This along with \eqref{eq2.1} and the fact $\frac{f_2(s)}{s}$ {decreases} for $s\geq0$ indicates {that}
    $\frac{w_{Q_1}f_2(v)}{v}\geq \frac{w_{Q_1}f_2(K_2)}{K_2}\geq1$ for all $t\geq0$. Similarly, if $v_0> K_2$, then \eqref{eq2.1} implies $\frac{K_2}{f_2(K_2)}< w_{Q_1}${. Hence} \eqref{eq3.2} {yields} $T_1>0$ such that $\frac{w_{Q_1}f_2(v)}{v}>1$ for all $t\geq T_1$. Therefore, for $v_0\geq0$, it  holds that
    \begin{align}\label{eq3.10}
        \frac{w_{Q_1}f_2(v)}{v}\geq 1\quad\text{for all }t\geq T_1.
    \end{align}
    Using $\theta=\beta_1f_1(u_{Q_1})$, \eqref{eqh2}, \eqref{eq3.9} and \eqref{eq3.10}, we have
    \begin{align}\nonumber
        \beta_1 f_1(u)-\theta
        =\beta_1\xkh{f_1(u)-f_1(u_{Q_1})}
        =\beta_1 \frac{\gamma_1\xkh{u-u_{Q_1}}}{\xkh{1+\gamma_1h_1u}\xkh{1+\gamma_1h_1u_{Q_1}}}=\Gamma_1\frac{f_1(u)}u\xkh{u-u_{Q_1}}.
    \end{align}
    {Consequently,}
    \begin{align}
        \mathcal E'(t;Q_1)
        \deyu ~\Gamma_1 \xkh{1-\frac u{K_1}-\frac{wf_1(u)}u}\xkh{u-u_{Q_1}}+\beta_2 \xkh{1-\frac v{K_2}-\frac{wf_2(v)}v}v\non
        &  +\xkh{\beta_1 f_1(u)+\beta_2 f_2(v)-\theta}\xkh{w-w_{Q_1}}\non
        \deyu~ \Gamma_1 \xkh{1-\frac u{K_1}-\frac{w_{Q_1}f_1(u)}{u}}\xkh{u-u_{Q_1}}
        +\beta_2 \xkh{1-\frac v{K_2}-\frac{w_{Q_1}f_2(v)}{v}}v\non
        \deyu -\frac{\Gamma_1h_1\gamma_1(\lambda_1+u_{Q_1}-K_1+u)}{K_1(1+h_1\gamma_1u)}(u-u_{Q_1})^2
        -\frac{\beta_2}{K_2}v^2
        +\beta_2 \xkh{1-\frac{w_{Q_1}f_2(v)}{v}}v\non
        \xiyu -\frac{\Gamma_1h_1}{K_1}f_1(u)(u-u_{Q_1})^2 -\frac{\beta_2}{K_2}v^2\quad\text{for all } t\geq T_1.\nonumber
    \end{align}
    {Then the global stability of $Q_1$ follows from the Lyapunov function method and LaSalle's invariant principle, similar as in the proof of Lemma \ref{lem3.2}.}
\end{proof}
We next prove the global stability of the {coexistence equilibrium} $Q_*$.
\begin{lem}[Global stability of $Q_*$]\label{lem3.6}
    Let \eqref{eqh2} hold, $\theta\in(0,L)$ and $Q_*$ be {a coexistence equilibrium} of \eqref{model}. Then $Q_*$ is globally asymptotically stable if \eqref{eq2.3} holds.
\end{lem}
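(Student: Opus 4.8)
The plan is to follow the Lyapunov-function route already employed for Lemma \ref{lem3.2} and Lemma \ref{lem3.5}, taking $E_s=Q_*=(u_*,v_*,w_*)$ in \eqref{eq3.3}--\eqref{eq3.4}, so that $\Gamma_1=\frac{\beta_1}{1+\gamma_1h_1u_*}$ and $\Gamma_2=\frac{\beta_2}{1+\gamma_2h_2v_*}$. Positivity of $\mathcal E(t;Q_*)$ for $(u,v,w)\neq Q_*$ follows verbatim from the convexity argument in Lemma \ref{lem3.2}. First I would differentiate $\mathcal E(t;Q_*)$ along \eqref{model} and substitute the coexistence relations $1-\frac{u_*}{K_1}=\frac{w_*f_1(u_*)}{u_*}$, $1-\frac{v_*}{K_2}=\frac{w_*f_2(v_*)}{v_*}$ and $\theta=\beta_1f_1(u_*)+\beta_2f_2(v_*)$, together with the identity $\beta_i\xkh{f_i(s)-f_i(s_*)}=\Gamma_i\frac{f_i(s)}{s}(s-s_*)$ established in \eqref{eq3.7}. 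This rewrites the predator contribution $(w-w_*)\xkh{\beta_1f_1(u)+\beta_2f_2(v)-\theta}$ as $(w-w_*)\zkh{\Gamma_1\frac{f_1(u)}{u}(u-u_*)+\Gamma_2\frac{f_2(v)}{v}(v-v_*)}$.

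The key structural observation is that these $(w-w_*)$ cross terms cancel exactly against the corresponding terms produced by the two prey equations, leaving
\begin{align*}
\mathcal E'(t;Q_*)=&-\frac{\Gamma_1}{K_1}(u-u_*)^2-\Gamma_1w_*(u-u_*)\xkh{\tfrac{f_1(u)}{u}-\tfrac{f_1(u_*)}{u_*}}\\
&-\frac{\Gamma_2}{K_2}(v-v_*)^2-\Gamma_2w_*(v-v_*)\xkh{\tfrac{f_2(v)}{v}-\tfrac{f_2(v_*)}{v_*}}.
\end{align*}
Since $\frac{f_i(s)}{s}=\frac{\gamma_i}{1+\gamma_ih_is}$ is decreasing, I would compute $(u-u_*)\xkh{\frac{f_1(u)}{u}-\frac{f_1(u_*)}{u_*}}=-\frac{\gamma_1^2h_1(u-u_*)^2}{(1+\gamma_1h_1u)(1+\gamma_1h_1u_*)}$, so the sign-indefinite term is in fact a positive multiple of $(u-u_*)^2$. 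Eliminating $w_*$ through $w_*\gamma_1=\frac{(K_1-u_*)(1+\gamma_1h_1u_*)}{K_1}$ (just the first equilibrium relation rearranged) collapses the $u$-block into $\frac{\Gamma_1}{K_1}(u-u_*)^2\zkh{\frac{\gamma_1h_1(K_1-u_*)}{1+\gamma_1h_1u}-1}$, and symmetrically for $v$.

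Finally I would invoke \eqref{eq2.3}. Since $u_*<K_1$ at any coexistence state, the condition $K_1\leq\lambda_1+u_*=\frac{1}{\gamma_1h_1}+u_*$ is equivalent to $\gamma_1h_1(K_1-u_*)\leq1$, whence $\frac{\gamma_1h_1(K_1-u_*)}{1+\gamma_1h_1u}\leq1$ for every $u\geq0$; thus the bracketed factor is nonpositive and the whole $u$-block is $\leq0$, vanishing only at $u=u_*$ (the analogous statement holding for $v$ under $K_2\leq\lambda_2+v_*$). Hence $\mathcal E'(t;Q_*)\leq0$ with equality precisely when $u=u_*$ and $v=v_*$; on the largest invariant set where this holds, $u_t=0$ forces $0=(w_*-w)f_1(u_*)$ and therefore $w=w_*$ because $f_1(u_*)>0$. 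LaSalle's invariant principle (as in Lemma \ref{lem3.2}) then yields the global asymptotic stability of $Q_*$. I expect the main obstacle to be the second step: performing the cancellation and then recognising that the indefinite cross term is exactly a $(u-u_*)^2$ multiple whose coefficient, after substituting the equilibrium relation, becomes nonpositive precisely at the sharp threshold $K_i\leq\lambda_i+u_*$ (resp.\ $v_*$). Once this reduction is in place, the remainder is routine.
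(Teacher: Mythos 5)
Your proposal is correct and follows essentially the same route as the paper: the same Lyapunov functional \eqref{eq3.3} with the same weights $\Gamma_i=\frac{\beta_i}{1+\gamma_ih_i\,\cdot\,}$, the same cancellation of the $(w-w_*)$ cross terms via the identity $\beta_i(f_i(s)-f_i(s_*))=\Gamma_i\frac{f_i(s)}{s}(s-s_*)$, and your final $u$-block $\frac{\Gamma_1}{K_1}(u-u_*)^2\bigl[\frac{\gamma_1h_1(K_1-u_*)}{1+\gamma_1h_1u}-1\bigr]$ is algebraically identical to the paper's $-\frac{\Gamma_1h_1\gamma_1(\lambda_1+u_*-K_1+u)}{K_1(1+h_1\gamma_1u)}(u-u_*)^2\leq-\frac{\Gamma_1h_1}{K_1}f_1(u)(u-u_*)^2$ under \eqref{eq2.3}. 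The concluding LaSalle step, with $u_t=0$ forcing $w=w_*$ on the invariant set, also matches the paper.
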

\begin{proof}
    Let $E_s=Q_*=\xkh{u_*,v_*,w_*}$ in \eqref{eq3.3} and \eqref{eq3.4}{. Then} \eqref{eq3.4} {gives}
    \begin{align}\nonumber
        \Gamma_1= \frac{\beta_1 }{1+\gamma_1h_1u_*},\quad
        \Gamma_2= \frac{\beta_2 }{1+\gamma_2h_2v_*}.
    \end{align}
    Using \eqref{model}, \eqref{eqh1} and $\theta=\beta_1 f_1(u_*)+\beta_2 f_2(v_*)$, we obtain
    \begin{align}\label{eq3.11}
         \mathcal E' (t;Q_*)
         \deyu~ \Gamma_1 \xkh{1-\frac u{K_1}-\frac{wf_1(u)}{u}}\xkh{u-u_*}+\Gamma_2 \xkh{1-\frac v{K_2}-\frac{wf_2(v)}{v}}\xkh{v-v_*} \non
         & +  \beta_1\xkh{ f_1(u)-f_1(u_*)}\xkh{w-w_*}+\beta_2\xkh{ f_2(v)-f_2(v_*)}\xkh{w-w_*}.
    \end{align}
   Similar as in driving \eqref{eq3.7} and \eqref{eq3.8}, we have
    \begin{align}\label{eq3.12}
        \beta_1\xkh{ f_1(u)-f_1(u_*)} =\Gamma_1\frac{f_1(u)}{u}(u-u_*)
        \quad\text{and}\quad
        \beta_2\xkh{ f_2(v)-f_2(v_*)}=\Gamma_2\frac{f_2(v)}{v}(v-v_*).
    \end{align}
    Substituting \eqref{eq3.12} into \eqref{eq3.11} and using
    $w_*=\frac{u_*}{f_1(u_*)}(1-\frac{u_*}{K_1})=\frac{v_*}{f_2(v_*)}(1-\frac{v_*}{K_2})$ yields
    \begin{align*}
        \mathcal E'(t;Q_*)\deyu~  \Gamma_1 \xkh{1-\frac u{K_1}-\frac{w_*f_1(u)}{u}}\xkh{u-u_*}+\Gamma_2 \xkh{1-\frac v{K_2}-\frac{w_*f_2(v)}{v}}\xkh{v-v_*}\non
        \deyu-\frac{\Gamma_1h_1\gamma_1(\lambda_1+u_*-K_1+u)}{K_1(1+h_1\gamma_1u)}(u-u_*)^2- \frac{\Gamma_2 h_2\gamma_2(\lambda_2+v_*-K_2+v)}{K_2(1+h_2\gamma_2v)}(v-v_*)^2\non
        \xiyu -\frac{\Gamma_1h_1}{K_1}f_1(u)(u-u_*)^2- \frac{\Gamma_2h_2}{K_2}f_2(v)(v-v_*)^2.
    \end{align*}
    {Similar arguments with the Lyapunov function method alongside LaSalle's invariant principle as above complete the proof.}
\end{proof}
{
\noindent\textbf{Proof of Theorem \ref{thm2.1}.} In view of Lemmas \ref{lem3.2}, \ref{lem3.3} and \ref{lem3.4}, Theorem \ref{thm2.1} is proved. $\hfill\square$}
\vskip3mm
\noindent\textbf{Proof of Theorem \ref{thm2.2}.} {With the results from Lemmas \ref{lem3.2}, \ref{lem3.5} and \ref{lem3.6}, Theorem \ref{thm2.2} is obtained.} $\hfill\square$

\section*{Acknowledgments}
The research of Y. Lou is partially supported by the NSF of China No. 12261160366 and No. 12250710674). The research of W. Tao is partially supported by PolyU Postdoc Matching Fund Scheme Project ID P0030816/B-Q75G, 1-W15F and 1-YXBT, and the NSF of China (No. 12201082). The research of Z.-A. Wang was partially supported by the NSFC/RGC Joint Research Scheme sponsored by the Research Grants Council of Hong Kong and the National Natural Science Foundation of China (Project No. \mbox{$\mathrm{N}_{-}$PolyU509/22}), and PolyU Postdoc Matching Fund Scheme Project ID P0034904 (Primary Work Programme W15F).
\vspace{0.5cm}

\noindent {\bf Declaration}
\vspace{0.2cm}

\noindent{\bf Data availability}. The authors declare that the manuscript has no associated data.

\noindent{\bf Conflict of interest}. The authors have no conflicts of interest to declare that are relevant to the content of
this article.

\section*{Appendix A. Proof for Remark \ref{rem2.2}}
\setcounter{equation}{0}
\setcounter{subsection}{0}
\setcounter{thm}{0}
\renewcommand{\theequation}{A\arabic{equation}}
\renewcommand{\thesubsection}{A.\arabic{subsection}}
\renewcommand{\thethm}{A.\arabic{thm}}
This appendix is dedicated to proving the conclusion stated in Remark \ref{rem2.2}.
\begin{proof}[Proof for Remark \ref{rem2.2}]
    We first prove
    \begin{align}\label{A1}
        \Lambda_1\cap \Lambda_2=\emptyset.
    \end{align}
    Note that $w_{Q_1}$ strictly increases with respect to $K_1>0$ since
    \[
      \frac{d w_{Q_1}}{d K_1}
      =\frac{u_{Q_1}^2}{f_1(u_{Q_1}) K_1^2}
      =\frac{\beta_1 u_{Q_1}^2}{\theta K_1^2}>0.
    \]
    Let $(K_1, K_2)\in\Lambda_1$. If $\gamma_1\geq\gamma_2$, then the first condition in $\Lambda_1$ implies that
    \begin{align}\nonumber
        w_{Q_1}\leq w_{Q_1}\big|_{K_1=\lambda_1+u_{Q_1}}=\frac1{\gamma_1}\leq \frac1{\gamma_2}<\frac{1+\gamma_2h_2K_2}{\gamma_2}=\frac{K_2}{f_2(K_2)},
    \end{align}
    which contradicts $(K_1, K_2)\in\Lambda_1$. Therefore, $\Lambda_1=\emptyset$ in the case of $\gamma_1\geq\gamma_2$. Similarly one can show that  $\Lambda_2=\emptyset$ in the case of $\gamma_1\leq\gamma_2$. The claim \eqref{A1} is proved.

    Without loss of generality, we next assume  $\gamma_1\geq\gamma_2$, then $\Lambda_1=\emptyset$. It remains to prove that
    \begin{align}\label{A2}
        \Lambda_2\cap\Lambda_*=\emptyset.
    \end{align}
    Assuming that there exists a pair $(K_1, K_2)\in\Lambda_2\cap\Lambda_*$, we shall derive a contradiction. Using
    \begin{align}\nonumber
        \theta=\beta_2f_2(v_{Q_2})=\beta_{1} f_{1}(u_*)+\beta_{2} f_{2}(v_*)
    \end{align}
    and the fact that $f_i(s)$ and $\frac s{f_i(s)}$ ($i=1,2$) strictly increase with respect to $s\geq0$, we have
    \begin{align}\label{A3}
        v_*<v_{Q_2}<K_2
        \quad\text{and}\quad
        w_*=\frac{u_*}{f_1(u_*)}\xkh{1-\frac{u_*}{K_1}}<\frac{u_*}{f_1(u_*)}< \frac{K_1}{f_1(K_1)}.
    \end{align}
    By the second condition in $\Lambda_2$ and \eqref{A3}, we have $w_*<w_{Q_2}$, which means that
    \begin{align}\label{A4}
        w_*=\varphi_0(v_*)<\varphi_0(v_{Q_2})=w_{Q_2},
    \end{align}
    where
    \begin{align}\label{A5}
        \varphi_0(s):=&~\frac{s}{f_2(s)}\xkh{1-\frac{s}{K_2}}
        =\frac{(K_2-s) (1+h_2\gamma_2 s)}{K_2\gamma_2}\non
        =&-\frac{h_2}{K_2}\xkh{s-\frac{K_2-\lambda_2}2}^2+\frac{h_2(K_2-\lambda_2)^2}{4K_2}+\frac1{\gamma_2},\quad s\in[0,K_2].
    \end{align}
    The combination of the first equation of \eqref{A3}, \eqref{A4} and \eqref{A5} implies that
    \begin{align}\label{A6}
        v_*<\frac{K_2-\lambda_2}2
        \quad\text{and}\quad
        v_*+v_{Q_2}<K_2-\lambda_2.
    \end{align}
    Starting from the first condition in $\Lambda_2$, the second condition in $\Lambda_*$, and the second inequality in \eqref{A6}, we obtain $2K_2<2\lambda_2+v_*+v_{Q_2}<2\lambda_2+K_2-\lambda_2$, which simplifies to $K_2\leq \lambda_2$. Therefore, the first inequality in \eqref{A6} indicates $v_*\leq0$ which is absurd. This proves \eqref{A2} and hence proves that $\Lambda_1$, $\Lambda_2$ and ${\Lambda_*}$ are mutually disjoint.
\end{proof}

\section*{Appendix B. }
\setcounter{equation}{0}
\setcounter{subsection}{0}
\setcounter{thm}{0}
\renewcommand{\theequation}{B\arabic{equation}}
\renewcommand{\thesubsection}{B.\arabic{subsection}}
\renewcommand{\thethm}{B.\arabic{thm}}
This appendix is devoted to proving that the rescaled system \eqref{eq4.8} with \eqref{eg2} has at most one coexistence equilibrium $Q_*=( u_*, v_*, w_*)$ (see \eqref{eq4.9}), which exists if and only if $\theta\in(\Theta_1, L)$, where $\Theta_1$ and $L$ are given by \eqref{eq4.11}. Remark \ref{rem2.1} implies $\theta\in(0,L)$ is a necessary condition for the existence of $Q_*$. Therefore, we shall consider $\theta\in(0, L)$ below. Within this appendix, we shall use the notations defined in \eqref{eq4.11}. For clarity, we also introduce the following notations.
\begin{itemize}
    \item For $\gamma_2\in(0,1)$, $b>0$ and $\theta\in(0,L)$, let
    \begin{align}\label{B1}
        v_M:=\frac{\theta }{\gamma_2 (b-\theta )}
    \end{align}
    be a positive constant (note that \eqref{eq4.11} implies $\theta<b$), then $v_M$ strictly increases in $\theta\in(0,L)$ and
    \begin{align}\label{B2}
        v_M
        \begin{cases}
            \medskip
            <1,\quad&\text{if }\theta\in(0,L_2),\\
            \geq1,\quad&\text{if }\theta\in[L_2,L).
        \end{cases}
    \end{align}
    \item It is straightforward to check that either of the equations
    \begin{align}\nonumber
        \begin{cases}
            \medskip
            8 s^3+7 s^2-8 s+1=0,\\
            24 s^3-13 s^2-6 s+3=0,
        \end{cases}
        \quad s\in\mathbb{R},
    \end{align}
    has two positive (real) roots and one negative (real) root. Denote the two positive roots of the first equation by $\eta_1$ and $\eta_3$ with $\eta_1<\eta_3$, and the largest root of the second equation by $\eta_4$. Let $\eta_2=\frac{4 \sqrt{7}-7}{9}$. Then
    \begin{align}\nonumber
        (\eta_1,\eta_2,\eta_3,\eta_4)\approx
        (0.1471,0.3981,0.5429,0.6195).
    \end{align}
    \item For $b>0$, define the functions
    \begin{align*}
        \xi_1(\gamma_2):=&\frac{b \left(4 \gamma_2^2-5 \gamma_2+1\right)}{1-3 \gamma_2},\quad \gamma_2\in(\eta_1,\frac14)\cup(\eta_3,1),\\
        \xi_2(\gamma_2):=&\frac{b}{3} (4-\gamma_2-\sqrt{\gamma_2^2+\gamma_2+1}),\quad \gamma_2\in(\eta_2,1).
    \end{align*}
    It holds that $\xi_1(\gamma_2)$ strictly decreases in each connected domain with $\xi_1(\gamma_2)\in(0,L)$,
    \begin{align}\nonumber
        \lim_{\gamma_2\rightarrow \eta_1}\xi_1(\gamma_2)=
    \lim_{\gamma_2\rightarrow \eta_3}\xi_1(\gamma_2)=L
    \quad\text{and}\quad
    \lim_{\gamma_2\rightarrow \frac14}\xi_1(\gamma_2)=
    \lim_{\gamma_2\rightarrow 1}\xi_1(\gamma_2)=0.
    \end{align}
    The function $\xi_2(\gamma_2)$ strictly decreases in $(\eta_2,1)$ with $\xi_2(\gamma_2)\in(m_b,L)$,
    $$\lim_{\gamma_2\rightarrow \eta_2}\xi_2(\gamma_2)=L
    \quad\text{and}\quad
    \lim_{\gamma_2\rightarrow 1}\xi_2(\gamma_2)=m_b,$$
    where $m_b:=(1-\frac1{\sqrt3})b$. Moreover, for $\gamma_2\in (\eta_3,1)$,
    \begin{align}\nonumber
        \begin{cases}
            \medskip
            \xi_1(\gamma_2)>\xi_2(\gamma_2),\quad &\text{if }\gamma_2\in(\eta_3,\eta_4),\\
            \medskip
            \xi_1(\eta_4)=\xi_2(\eta_4)\approx 0.6550,\\
            \xi_1(\gamma_2)<\xi_2(\gamma_2),\quad &\text{if }\gamma_2\in(\eta_4,1).
        \end{cases}
    \end{align}
    The graphs of three functions $\frac{\xi_1(\gamma_2)}b$, $\frac{\xi_2(\gamma_2)}b$ and $\frac Lb=\frac{1+3\gamma_2}{2(1+\gamma_2)}$ are shown in Fig. \ref{fig9}(a)
    \item For $b>0$, $\gamma_2\in(0,1)$ and $0<\theta<L$, let
    \begin{align}\label{B3}
        G(s):=\sum_{k=0}^4 D_k s^k,\quad s\geq0,
    \end{align}
    where the coefficients are given by
    \begin{align}\label{B4}
       \begin{cases}
           \medskip
           D_4=\gamma_2^3 ( 2 b-\theta)^2,\\
           \medskip
           D_3=\gamma_2^2 (2 b-\theta ) (2 b (2-\gamma_2)-\theta  (3-\gamma_2)),\\
           \medskip
           D_2=(\gamma_2-1) \gamma_2 \left(b^2 (3 \gamma_2-5)-2 b \theta  (\gamma_2-4)-3 \theta ^2\right),\\
           \medskip
           D_1=b^2 \left(4 \gamma_2^2-5 \gamma_2+1\right)-2 b \theta  \left(2 \gamma_2^2-4 \gamma_2+1\right)+\theta ^2 (1-3 \gamma_2),\\
           D_0=b^2 (\gamma_2-1)-2 b \theta  (\gamma_2-1)-\theta ^2.
       \end{cases}
    \end{align}
\end{itemize}

\begin{figure}[!ht] \centering
    \includegraphics
    [width=1\textwidth,trim=150 10 125 10,clip]{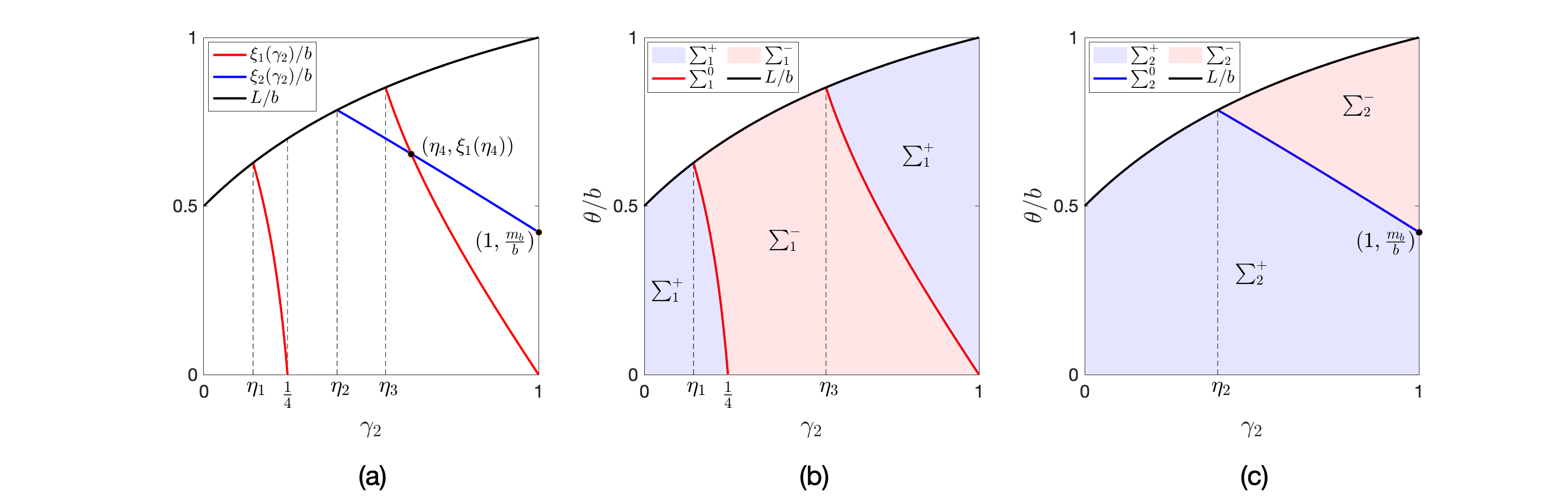}
    \caption{\small The graphs of three functions $\frac{\xi_1(\gamma_2)}b$, $\frac{\xi_2(\gamma_2)}b$ and $\frac Lb$ versus $\gamma_2\in(0,1)$ are shown in (a). The signs of $D_1$ and $D_2$ in the $\gamma_2$-$\theta/b$ plane within $(\gamma_2,\theta)\in(0,1)\times(0, L)$ are shown in (b) and (c), respectively.}
    \label{fig9}
\end{figure}

By elementary analysis (omitted for brevity), we have the following result concerning the signs of the coefficients given in \eqref{B4}.

\begin{prop}\label{propB.1}
    Let $b>0$, $\gamma_2\in(0,1)$ and $\theta\in(0,L)$. Then $D_4,D_3>0$, $D_0<0$,  and
    \begin{align*}
        D_1
        \begin{cases}
            \medskip
            <0,\quad &\text{if }
            (\gamma_2,\theta)
            \in
            \sum_1^-:=(\eta_1,\frac14]\times (\xi_1,L)\cup
            (\frac14,\eta_3]\times (0,L)\cup
            (\eta_3,1)\times (0,\xi_1),\\
            \medskip
            =0,\quad &\text{if }(\gamma_2,\theta)\in
            \sum_1^0:=(\eta_1,\frac14)\times\{\xi_1(\gamma_2)\}\cup(\eta_3,1)\times\{\xi_1(\gamma_2)\},\\
            >0,\quad &\text{if }
            (\gamma_2,\theta)
            \in
            \sum_1^+:=(0,\eta_1]\times (0,L)
            \cup (\eta_1,\frac14)\times (0,\xi_1)\cup
            (\eta_3,1)\times (\xi_1,L),
        \end{cases}
    \end{align*}
    and
    \begin{align*}
        D_2
        \begin{cases}
            \medskip
            <0,\quad &\text{if }
            (\gamma_2,\theta)
            \in\sum_2^-:=(\eta_2,1)\times (\xi_2,L),\\
            \medskip
            =0,\quad &\text{if }(\gamma_2,\theta)
            \in
            \sum_2^0:=(\eta_2,1)\times\{\xi_2(\gamma_2)\},\\
            >0,\quad &\text{if }
            (\gamma_2,\theta)
            \in
            \sum_2^+:=(0,\eta_2]\times (0,L)
            \cup (\eta_2,1)\times (0,\xi_2).
        \end{cases}
    \end{align*}
\end{prop}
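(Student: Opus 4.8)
The plan is to treat each coefficient as a quadratic polynomial in $\theta$ (and, for $D_2$, its nontrivial quadratic factor) on $\theta\in(0,L)$, exploiting throughout that $\theta<L<b$, where $L<b$ comes from \eqref{eq4.11}. The two unconditional signs are immediate. Since $\theta<b$ gives $2b-\theta>0$, we get $D_4=\gamma_2^3(2b-\theta)^2>0$; and $2b(2-\gamma_2)-\theta(3-\gamma_2)>2b(2-\gamma_2)-b(3-\gamma_2)=b(1-\gamma_2)>0$ together with $\gamma_2^2(2b-\theta)>0$ yields $D_3>0$. For $D_0$ I would complete the square in $\theta$: writing $a:=1-\gamma_2\in(0,1)$, one checks $D_0=-\bigl[(\theta-ab)^2+ab^2\gamma_2\bigr]<0$ for \emph{every} $\theta$, so $D_0<0$ on the whole parameter range.

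The crux is $D_1$ and $D_2$. For $D_1=(1-3\gamma_2)\theta^2-2b(2\gamma_2^2-4\gamma_2+1)\theta+b^2(4\gamma_2^2-5\gamma_2+1)$ I would first compute its discriminant in $\theta$; the key computational fact is that it collapses to the perfect square $4b^2\gamma_2^2(2\gamma_2-1)^2$. Consequently, for $\gamma_2\neq\frac13$ the two roots are exactly $\theta=b$ and $\theta=\xi_1(\gamma_2)$ (the ambiguity $\pm|2\gamma_2-1|$ only swaps which root is which), giving the clean factorization $D_1=(1-3\gamma_2)(\theta-b)(\theta-\xi_1)$; the degenerate line $\gamma_2=\frac13$ is handled directly, where $D_1=\frac{2b}{9}(\theta-b)<0$. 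Since $\theta-b<0$ on $(0,L)$, the sign of $D_1$ reduces to $\operatorname{sign}(3\gamma_2-1)\,\operatorname{sign}(\theta-\xi_1)$, so everything comes down to locating $\xi_1$ relative to $(0,L)$.

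To pin down that location I would use the factored form $\xi_1=\dfrac{b(4\gamma_2-1)(\gamma_2-1)}{1-3\gamma_2}$, which gives the sign of $\xi_1$ on each of $(0,\frac14)$, $(\frac14,\frac13)$, $(\frac13,1)$, together with the identity $\xi_1=L\iff 8\gamma_2^3+7\gamma_2^2-8\gamma_2+1=0$, i.e. $\gamma_2\in\{\eta_1,\eta_3\}$ (the two positive roots named in the setup). Combining the sign of $\xi_1$, the pole at $\gamma_2=\frac13$, the zeros of $\xi_1$ at $\gamma_2=\frac14$ and $\gamma_2=1$, and continuity on each subinterval forces $\xi_1\ge L$ on $(0,\eta_1]$, $\xi_1\in(0,L)$ on $(\eta_1,\frac14)$, $\xi_1\le 0$ on $[\frac14,\frac13)$, $\xi_1\ge L$ on $(\frac13,\eta_3]$, and $\xi_1\in(0,L)$ on $(\eta_3,1)$. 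Feeding these into $\operatorname{sign}(D_1)=\operatorname{sign}(3\gamma_2-1)\,\operatorname{sign}(\theta-\xi_1)$ reproduces exactly the regions $\sum_1^+,\sum_1^0,\sum_1^-$. I expect this bookkeeping for $D_1$ to be the main obstacle: the leading coefficient $1-3\gamma_2$ vanishes on a line, $\xi_1$ has a pole there and changes sign at $\frac14$, and it crosses $L$ at the two algebraic numbers $\eta_1,\eta_3$, so matching the stated case split precisely requires care (the perfect-square discriminant is what makes it tractable at all).

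The coefficient $D_2$ is handled the same way but is milder. I would factor out $(\gamma_2-1)\gamma_2<0$ and analyze $H(\theta):=-3\theta^2+2b(4-\gamma_2)\theta+b^2(3\gamma_2-5)$; its discriminant is $4b^2(\gamma_2^2+\gamma_2+1)>0$, its larger root exceeds $b>L$ (since $4-\gamma_2+\sqrt{\gamma_2^2+\gamma_2+1}>3$), and its smaller root is precisely $\xi_2$. Hence on $(0,L)$ we have $\operatorname{sign}(H)=\operatorname{sign}(\theta-\xi_2)$ and therefore $\operatorname{sign}(D_2)=-\operatorname{sign}(\theta-\xi_2)$. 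Since $\xi_2>0$ always and $\xi_2=L\iff 9\gamma_2^2+14\gamma_2-7=0$ has unique positive root $\eta_2=\frac{4\sqrt7-7}{9}$, with $\xi_2>L$ as $\gamma_2\to0$, one obtains $\xi_2\ge L$ on $(0,\eta_2]$ and $\xi_2\in(0,L)$ on $(\eta_2,1)$, which yields exactly the stated $\sum_2^+,\sum_2^0,\sum_2^-$ and completes the proof.
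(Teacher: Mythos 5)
Your proposal is correct and supplies exactly the "elementary analysis" that the paper omits for brevity: I verified the key identities — the discriminant of $D_1$ in $\theta$ collapses to $4b^2\gamma_2^2(2\gamma_2-1)^2$ giving $D_1=(1-3\gamma_2)(\theta-b)(\theta-\xi_1)$, the equation $\xi_1=L$ reduces to $8\gamma_2^3+7\gamma_2^2-8\gamma_2+1=0$ (roots $\eta_1,\eta_3$), and $\xi_2=L$ reduces to $9\gamma_2^2+14\gamma_2-7=0$ (positive root $\eta_2$) — and your sign bookkeeping reproduces the stated regions $\sum_1^{\pm},\sum_1^0,\sum_2^{\pm},\sum_2^0$ exactly, including the degenerate line $\gamma_2=\tfrac13$ and the endpoint $\gamma_2=\tfrac14$. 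Since the paper gives no written proof, there is no methodological divergence to report; your argument is the natural one and is complete.
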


Proposition \ref{propB.1} provides a geometric illustration for the signs of $D_1$ and $D_2$ in the $\gamma_2$-$\theta/b$ plane within $(\gamma_2,\theta)\in(0,1)\times(0,L)$, as shown in Fig. \ref{fig9}(b)-(c). Based on Proposition \ref{propB.1}, we get the following results with tedious but elementary calculations.

\begin{prop}\label{propB.2}
    Let $b>0$, $\gamma_2\in(0,1)$ and $\theta\in(0,L)$. Then the function $G(s)$ defined by \eqref{B3} has exactly one real root in $(0,+\infty)$.
\end{prop}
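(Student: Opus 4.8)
The plan is to treat $G$ as a quartic with positive leading coefficient and to count its positive roots via Descartes' rule of signs, feeding in the sign information already assembled in Proposition \ref{propB.1}. Since that proposition gives $D_4>0$ and $D_0<0$, we have $G(0)=D_0<0$ while $G(s)\to+\infty$ as $s\to+\infty$, so the intermediate value theorem already furnishes at least one root in $(0,+\infty)$. The whole content of the statement is therefore the \emph{uniqueness} (``at most one'') of the positive root, and the strategy is to show that the coefficient sequence $(D_4,D_3,D_2,D_1,D_0)$ has exactly one sign change in all but one parameter regime.

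Because $D_4,D_3>0$ and $D_0<0$ are fixed, only the signs of $D_1$ and $D_2$ can vary, and these are exactly what Proposition \ref{propB.1} classifies. First I would run through the nine sign combinations of $(D_2,D_1)$. In every case except $D_2<0<D_1$ the sequence $(D_4,D_3,D_2,D_1,D_0)$ has precisely one sign change (for instance, when $D_2,D_1\ge 0$ the only change is at the final entry $D_0<0$, while when $D_2<0$ and $D_1\le 0$ the only change is the step $D_3\to D_2$). By Descartes' rule this forces at most one positive root, which together with the existence above gives exactly one. Using the explicit regions of Proposition \ref{propB.1}, the single remaining configuration $D_2<0<D_1$ occurs precisely when $\gamma_2\in(\eta_3,1)$ and $\theta\in(\max\{\xi_1(\gamma_2),\xi_2(\gamma_2)\},L)$.

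The main obstacle is exactly this exceptional region, where the sign pattern $(+,+,-,+,-)$ has three changes and Descartes' rule only limits the positive roots to one or three. To eliminate the value three, I would study the critical points through $G'(s)=4D_4s^3+3D_3s^2+2D_2s+D_1$, whose coefficients have signs $(+,+,-,+)$ and hence (again by Descartes) at most two positive zeros. If $G'$ has no positive zero, then $G'(0)=D_1>0$ forces $G'>0$ on $(0,\infty)$, so $G$ is strictly increasing and the positive root is unique. If $G'$ has two positive zeros $s_1<s_2$, then $s_1$ is a local maximum and $s_2$ a local minimum of $G$, and three positive roots would require both $G(s_1)>0$ and $G(s_2)<0$. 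I would derive a contradiction by showing that throughout the exceptional region the local minimum value stays nonnegative, i.e. that $G$ cannot dip below zero between its two critical points; writing $G=q\,G'+r$ with $r$ the linear remainder lets one evaluate $G$ at the critical points through $r$ alone, reducing the claim to a sign condition checkable from the defining inequalities $\theta>\xi_1(\gamma_2)$ and $\theta>\xi_2(\gamma_2)$ together with the explicit forms \eqref{B4}.

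This last verification is the heavy, purely computational step alluded to by ``tedious but elementary''; it is equivalent to showing that the quartic discriminant is negative on the exceptional region, so that $G$ has exactly two real roots there. Since the companion count applied to $G(-s)$ gives the sign pattern $(+,-,-,-,-)$ with a single sign change—so $G$ has exactly one negative real root on this region—the two real roots must split as one negative and one positive, again yielding a unique positive root. I expect the discriminant/remainder sign analysis under the constraints of Proposition \ref{propB.1} to be the only genuinely delicate part; everything else is a clean Descartes enumeration.
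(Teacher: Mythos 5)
Your framework is the right one, and it is almost certainly the skeleton behind the paper's one-line justification (the paper offers no proof beyond ``Based on Proposition B.1, we get the following results with tedious but elementary calculations''): since $D_4,D_3>0$ and $D_0<0$ always, $G(0)<0$ and $G(s)\to+\infty$ give existence, and Descartes' rule applied to $(D_4,D_3,D_2,D_1,D_0)$ settles uniqueness in every sign configuration except $D_2<0<D_1$. Your enumeration of the benign cases is correct, and you have correctly located the exceptional region as $\gamma_2\in(\eta_3,1)$, $\theta\in\left(\max\{\xi_1(\gamma_2),\xi_2(\gamma_2)\},L\right)$, using the sets $\sum_1^{\pm}$ and $\sum_2^{\pm}$ of Proposition B.1.

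The gap is that in this exceptional region your argument stops at a statement of intent. The sign pattern $(+,+,-,+,-)$ permits three positive roots, and everything you offer to exclude that possibility --- that the local minimum value of $G$ between the two positive critical points of $G'$ stays nonnegative, or equivalently that the quartic discriminant is negative there --- is asserted as an expectation (``I would derive a contradiction by showing\dots'', ``I expect the discriminant/remainder sign analysis\dots''), not carried out. This is not a removable formality: it is precisely the ``tedious but elementary'' computation that constitutes the entire content of the proposition in that parameter range, and without it the proof is incomplete. Two further small points if you do carry it out. First, your plan to show the local minimum value is \emph{nonnegative} does not quite suffice; if $G(s_2)=0$ the quartic acquires a second (double) positive root, so you need strict positivity of the local minimum, or else to argue via $G(s_1)\le 0$, or to treat the degenerate case separately. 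Second, the auxiliary count via $G(-s)$ (one negative real root) is correct but does not by itself help: a quartic with one negative root can still have three positive ones, so you cannot avoid the discriminant or critical-value estimate. Until that estimate is actually verified on the stated region, the proof is a plan rather than a proof.
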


We can now prove the main result of this appendix.
\begin{lem}\label{lemB.3}
    The rescaled system \eqref{eq4.8} with \eqref{eg2} has a unique coexistence equilibrium $Q_*=( u_*, v_*, w_*)$ if and only if $\theta\in(\Theta_1, L)$. Moreover,
    \begin{align}\nonumber
         u_* =\frac{\gamma_2  v_*  (b-\theta )(v_M- v_*)}{\gamma_2  v_*  (2 b-\theta )+b-\theta },\quad
             w_*=(1- v_* ) \xkh{ v_* +\frac1\gamma_2},
    \end{align}
    and $v_*\in(0,\min\dkh{1,v_M})$ satisfies $G(v_*)=0$, where the positive constant $v_M$ and the function $G$ are given by \eqref{B1} and \eqref{B3}, respectively.
\end{lem}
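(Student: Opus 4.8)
The plan is to collapse the three equilibrium equations into one scalar equation for $v_*$ and then count its admissible roots. Since a coexistence equilibrium has $u,v,w>0$, dividing the three equations of \eqref{eq4.8} under \eqref{eg2} by $u$, $v$, $w$ gives
\begin{align*}
1-u=\frac{w}{1+u},\qquad 1-v=\frac{\gamma_2 w}{1+\gamma_2 v},\qquad \theta=b\frac{u}{1+u}+b\frac{\gamma_2 v}{1+\gamma_2 v}.
\end{align*}
The second equation immediately yields $w_*=(1-v)\xkh{v+\frac1{\gamma_2}}$, so $w_*>0\iff v<1$. Solving the predator equation for $u$ gives $\frac{u}{1+u}=\frac\theta b-\frac{\gamma_2 v}{1+\gamma_2 v}$; clearing denominators and substituting $\theta=\gamma_2(b-\theta)v_M$ from \eqref{B1} produces the stated closed form for $u_*$, whose denominator $Q(v):=(b-\theta)+\gamma_2 v(2b-\theta)$ is strictly positive for $v\geq0$, whence $u_*>0\iff v<v_M$.

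With $u$ and $w$ now functions of $v$, the first equation — rewritten as $w=1-u^2$ — becomes a single constraint on $v$. Substituting and multiplying by $Q(v)^2$ turns it into a quartic; the essential bookkeeping is to verify that this quartic coincides with $G$ of \eqref{B3}--\eqref{B4}. Setting $H(v):=1-u(v)^2-w(v)$, I would establish the identity $G(v)=\gamma_2 H(v)Q(v)^2$, most economically by matching the leading coefficient $D_4=\gamma_2^3(2b-\theta)^2$ and the constant term $D_0$. Because $Q(v)^2>0$, the equilibrium condition $H(v)=0$ is equivalent to $G(v)=0$ and $\operatorname{sign}G(v)=\operatorname{sign}H(v)$ on $v\geq0$. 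Proposition \ref{propB.2} then supplies, for each $\theta\in(0,L)$, a \emph{unique} positive root $v_*$ of $G$; since $D_0<0$ and $D_4>0$ this root is a simple sign change, so that for any $c>0$ one has $v_*<c\iff G(c)>0\iff H(c)>0$.

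It remains to determine when $v_*$ gives a genuine equilibrium, i.e. when $v_*\in\xkh{0,\min\dkh{1,v_M}}$ (the constraints $w_*>0$ and $u_*>0$; note $u_*<1$ is automatic from $w_*=1-u_*^2>0$). I would read off the two endpoints via $\operatorname{sign}G=\operatorname{sign}H$. At $v=v_M$ one has $u(v_M)=0$, so $H(v_M)=1-(1-v_M)\xkh{v_M+\frac1{\gamma_2}}$; the parabola $(1-s)\xkh{s+\frac1{\gamma_2}}$ is strictly decreasing for $s\geq0$ (its vertex lies at a negative argument) and $v_M$ increases in $\theta$, so $H(v_M)>0$ is equivalent to $v_M$ exceeding the unique value at which $(1-v_M)\xkh{v_M+\frac1{\gamma_2}}=1$; solving that quadratic and using $v_M=\frac{\theta}{\gamma_2(b-\theta)}$ reproduces precisely $\theta>\Theta_1$ with $\Theta_1=b\varphi_1(\gamma_2)$ from \eqref{eq4.11}. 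At $v=1$ one has $w(1)=0$, so $H(1)=1-u(1)^2$, and a direct computation gives $u(1)=1\iff\theta=L=\frac{b(1+3\gamma_2)}{2(1+\gamma_2)}$, hence $H(1)>0\iff\theta<L$ in the range $\theta\geq L_2$ where $u(1)\geq0$.

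Finally I would split according to whether $v_M<1$ or $v_M\geq1$, equivalently $\theta<L_2$ or $\theta\geq L_2$ by \eqref{B2}. For $\theta\in(0,L_2)$ the binding requirement is $v_*<v_M$, i.e. $\theta>\Theta_1$; for $\theta\in[L_2,L)$ the binding requirement is $v_*<1$, i.e. $\theta<L$, while $\theta>\Theta_1$ holds automatically since $\Theta_1<L_2$. Together with Remark \ref{rem2.1}, which forbids any coexistence equilibrium for $\theta\geq L$, this yields existence and uniqueness exactly for $\theta\in(\Theta_1,L_2)\cup[L_2,L)=(\Theta_1,L)$. I expect the genuine difficulty to lie in this endpoint sign analysis: it is not enough that $H(v_M)$ and $H(1)$ vanish at $\theta=\Theta_1$ and $\theta=L$, one must know they carry the correct sign throughout each $\theta$-interval, and it is the monotonicity of $v_M$ and of the parabola $(1-s)\xkh{s+\frac1{\gamma_2}}$, combined with the simple-root structure of $G$ guaranteed by Proposition \ref{propB.2}, that secures this.
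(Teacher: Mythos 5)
Your proposal is correct and follows essentially the same route as the paper: the paper's own proof is precisely the endpoint--sign argument ($G(0)=D_0<0$ plus Proposition \ref{propB.2} reduce existence to $G(\min\{1,v_M\})>0$, after which $G(1)=2b\gamma_2(\gamma_2+1)^2(L-\theta)$ handles $\theta\in[L_2,L)$ and $G(v_M)=b^4\varphi_2(\theta)/(\gamma_2(b-\theta)^4)$ with $\varphi_2>0\iff\theta>\Theta_1$ handles $\theta\in(0,L_2)$), the paper simply taking the reduction to $G$ for granted where you derive it explicitly and read the endpoint signs through $H$ rather than by direct substitution. Two cautions. First, the identity $G=\gamma_2HQ^2$ is true, but ``matching the leading coefficient and the constant term'' does not establish equality of two quartics; all five coefficients must be compared, which is a routine but unavoidable expansion. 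Second, the closed form your elimination actually produces is $u_*=\gamma_2(b-\theta)(v_M-v_*)/Q(v_*)$ with a \emph{linear} numerator --- the extra factor $v_*$ in the lemma's displayed formula appears to be a typo --- and your factorization only holds with the linear numerator (note $D_4=\gamma_2^3(2b-\theta)^2$ must come entirely from $-\gamma_2\,w(v)Q(v)^2$, which forces $\gamma_2\,u(v)^2Q(v)^2$ to be merely quadratic); so do not let the stated form mislead the bookkeeping.
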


\begin{proof}
    Clearly, it follows from \eqref{B3} and Proposition \ref{propB.1} that $G(0)=D_0<0$, which alongside Proposition \ref{propB.2} implies that the rescaled system \eqref{eq4.8} with \eqref{eg2} has a unique coexistence equilibrium $Q_*=( u_*, v_*, w_*)$ if and only if $G(\min\dkh{1,v_M}) >0$. If $\theta\in[L_2,L)$, then \eqref{B2} implies $G(\min\dkh{1,v_M})=G(1)=2b \gamma_2 (\gamma_2+1)^2 (L-\theta )>0$. If $\theta\in(0,L_2)$, then
    \begin{align}\nonumber
        G(\min\dkh{1,v_M})= G(v_M)
        = \frac{b^4 \varphi_2(\theta)}{\gamma_2 (b-\theta )^4},
    \end{align}
    where $\varphi_2(\theta)$ is given by \eqref{addeq1}, and $\varphi_2(\theta)>0$  if and only if $\theta>\Theta_1$. The proof is completed.
\end{proof}


\end{document}